\let\remark\relax
\spnewtheorem{theorem}{Theorem}{\bfseries}{\itshape}
\spnewtheorem{corollary}[theorem]{Corollary}{\bfseries}{\itshape}
\spnewtheorem{lemma}[theorem]{Lemma}{\bfseries}{\itshape}
\spnewtheorem{proposition}[theorem]{Proposition}{\bfseries}{\itshape}
\spnewtheorem{definition}[theorem]{Definition}{\bfseries}{\itshape}
\spnewtheorem{remark}[theorem]{Remark}{\bfseries}{\upshape}
\spnewtheorem{assumption}[theorem]{Assumption}{\bfseries}{\itshape}
\def \P{\mathbb{P}}             
\def \1{{\bf 1}}                
\def \0{{\bf 0}}
\def\covid{\text{Covid-19~}}
\definecolor{myred}{rgb}{0.9,0,0}  
\definecolor{mygreen}{rgb}{0,0.7,0}  
\definecolor{myblue}{rgb}{0.2,0,0.8}  
\definecolor{orange}{rgb}{1,0.6,0}  
\definecolor{olive}{rgb}{0.5,0.5,0}  
\definecolor{mylila}{rgb}{0.8,0.5,0.2}  
\definecolor{mygrey}{rgb}{0.6,0.6,0.6}  
\definecolor{mybrown}{rgb}{0.65,0.16,0.16}  
\definecolor{mymaroon}{rgb}{0.11,0.0,0.0}
\newcommand{\sN}{\mathbb{N}}
\renewcommand{\P}{\mathbb{P}}
\newcommand{\countP}{M_k}  
\newcommand{\PoisRate}{\lambda}  
\newcommand{\TimCh}{\tau}  
\newcommand{\mymarginpar}[1]{ \marginpar{{\tiny #1}}}
\renewcommand{\mymarginpar}[1]{}
\date{}    
\begin{document}
	
	\title{Stochastic Epidemic Models with Partial Information     
	}
	
	\titlerunning{Stochastic Epidemic Models with Partial Information }        
	
	\author{Florent Ouabo Kamkumo  \and Ibrahim Mbouandi Njiasse \and Ralf Wunderlich}
	
	\authorrunning{Florent Ouabo Kamkumo, Ibrahim Mbouandi Njiasse, Ralf Wunderlich} 
	
	\institute{
		Brandenburg University of Technology Cottbus-Senftenberg, Institute of Mathematics, P.O. Box 101344, 03013 Cottbus, Germany; \\ 
		\email{\texttt{Florent.OuaboKamkumo/Ibrahim.MbouandiNjiasse/Ralf.Wunderlich@b-tu.de}}          
	}
	
	\date{Version of  \today}

	\maketitle
	

	\begin{abstract}		
		Mathematical models of epidemics often use compartmental models dividing the population into several compartments. Based on a microscopic setting describing the temporal evolution of the subpopulation sizes in the compartments  by stochastic counting processes one can derive  macroscopic models for large populations describing the average behavior by associated ordinary differential equations such as the celebrated SIR model. Further, diffusion approximations allow to address fluctuations from the average and to describe the state dynamics also for smaller populations  by stochastic differential equations. 		
		
		In general, not all state variables are directly observable, and we face the so-called "dark figure" problem, which concerns, for example, the unknown number of asymptomatic and undetected infections. The present study addresses this problem by developing stochastic epidemic models that incorporate partial information about the current state of the epidemic, also known as nowcast uncertainty. Examples include a simple extension of the SIR model, a model for a disease with lifelong immunity after infection or vaccination, and a Covid-19 model. For the latter, we propose a ``cascade state approach'' that allows to exploit  the information contained in formally hidden compartments with observable inflow but unobservable outflow. Furthermore, parameter estimation and calibration are performed using ridge regression for the Covid-19 model. The results of the numerical simulations illustrate the theoretical findings.

		\keywords{Continuous-time Markov chain\and Stochastic epidemic model\and  Stochastic differential equation \and  Diffusion approximation\and Partial information \and Covid-19 \and Parameters estimation}	
		
		%
		\subclass{60J25 
			\and  60J60 
			\and 92D30 
			\and 92-10 
			\and 62J07 
	}	
	\end{abstract}	
\newpage
\setcounter{tocdepth}{2}
\tableofcontents

\newpage
 
	\section{Introduction}
	For many years, mathematical models have been an important tool for scientists to understand and analyze the dynamics of epidemics. Epidemics have always been a major public health challenge and require effective strategies to understand and control the spread of infectious diseases. Traditional deterministic models such as SIR, SIRS and SEIR have provided valuable insights into disease transmission and progression. Many researchers have developed deterministic epidemic models to assist public health officials in understanding the spread of infectious diseases. Kermack and McKendrick \cite{kermack1927contribution} introduced the deterministic SIR model, which today forms the basis of much modern epidemic modeling. Their work was originally motivated by epidemics of bubonic plague in India. The goal was to provide a mathematical explanation for observed epidemic behaviors, such as the rise and fall in the number of cases, using a theoretical framework that could guide public health interventions. Since then, numerous compartmental mathematical models, both deterministic and stochastic, have been developed to analyze and understand epidemic behavior across the world.
	
	 The underlying idea of these  models is to  divide the population into several subpopulations collected in the compartments.
		On a microscopic level  the temporal evolution of the subpopulation sizes  in the compartments  are described by stochastic counting processes. Then one can derive  macroscopic models for large populations  describing the average behavior by  associated  ordinary differential equations (ODEs)  such as the celebrated SIR model. For the fluctuations from the average that are interesting  for smaller populations  diffusion approximations can be derived. They   describe  the state dynamics  by  stochastic differential equations (SDEs). 	
		However, in general, some of  the  state variables are not directly observable since they consist of individuals with asymptomatic and non-detected infections. This is also known as ``nowcast uncertainty'' or the  ``dark figure'' problem.	Such not directly observable states are   problematic if it comes to the computation of characteristics of the epidemic such as the effective reproduction rate	and the prevalence of the infection within the population. 
	Further, the management and containment of epidemics relying on solutions of stochastic optimal control problems and the associated feedback controls  need observations of the current state as input. 
	
	This article addresses the above problems with not directly observable or hidden states of an epidemics and proposes  stochastic epidemic models that incorporate partial information about the current state of the epidemic. They are used in the companion papers \cite{KamkumoNjasseWunderlich_filter_2025_arxiv} for estimating the hidden states based on the available information from observable states, and \cite{NjasseKamkumoWunderlich_OPC_2025_arxiv} for solving stochastic optimal control problems under partial information arising from the decision problem of a social planner seeking cost-optimal containment of an epidemic.
	
	\paragraph{Literature review on Deterministic Epidemic Model}
	Ahmad et al. \cite{ahmad2024mathematical} introduce a deterministic mathematical model to represent the dynamics of human-to-human transmission of the monkeypox virus, integrating post-exposure vaccination, pre-exposure vaccination, and isolation strategies. The post-exposure vaccination targets susceptible individuals, while the pre-exposure vaccination is administered to asymptomatic individuals. Later Dounia B. et al. \cite{liu2024dynamical}  presents a a SEIR epidemic model designed to capture the transmission dynamics of the Central and West African clades of monkeypox. The authors focus on model properties such as boundedness, positivity, and other stability characteristics of solutions. They then formulate a deterministic optimal control problem with the objective of determining the best strategies to significantly reduce transmission or eradicate the disease. Mehmet G. and Kemal T. \cite{gumucs2024dynamical} utilize a deterministic SIR epidemic model to simulate and forecast the spread of the Hepatitis B virus. They provide a comprehensive and detailed analysis of the model and illustrate the theoretical findings using a non-standard finite difference (NSFD) scheme. In contrast, Muhammad F. et al. \cite{farhan2024global} propose a more sophisticated model for studying the behavior of Hepatitis B in a population. Ahmad et al. \cite{ahmad2016optimal} introduced a deterministic SEIR-based model to analyze Ebola disease. They employed optimal control strategies for hospitalization (with and without quarantine) and vaccination to forecast potential future outcomes regarding resource utilization for disease control and the effectiveness of vaccination in affected populations. Ahmad et al. \cite{ahmad2024modeling} highlight that, despite advancements in vaccination programs, measles outbreaks continue to occur. The authors introduce a deterministic SVEITR model to describe the disease dynamics and conduct an in-depth analysis of the model's behavior. Additionally, they formulate an optimal control problem aimed at reducing infectivity within the population. Felix K. et al. \cite{kohler2020novel}  introduce a novel approach to deterministic modeling of the \covid epidemic using an integro-differential equation. They demonstrate that the proposed model shares similarities with classic compartmental models like SIR. A key advantage of this approach is that the model's dynamics are described using a single equation. Charpentier et al. \cite{CharpentierElieLauriereTran2020} proposed a \covid model that implicitly accounts for undetected infected individuals. The author formulated an optimal control problem to identify strategies that can help control the pandemic by maintaining a low level of infectious individuals.
	
	Deterministic models are useful examples that provide important information. However, they do not capture the uncertainties involved in the randomness of disease transmission and progression. Stochastic epidemic models address this limitation by incorporating the probabilistic nature of interactions and transmissions. This allows them to offer a more realistic picture of epidemic dynamics. As a result, many disease dynamics have been studied from a stochastic perspective. This has enhanced the realism and accuracy of the analysis.
		
	\paragraph{Literature review on Stochastic Epidemic Model}
	The formulation of stochastic models can be undertaken in various ways. Linda J.S. Allen, in her work \cite{allen2008introduction}, presents different approaches to formulating stochastic models. She discusses three primary methods: discrete time Markov chains, continuous time Markov chains, and stochastic differential equations. Furthermore, she explores valuable properties for analyzing stochastic epidemiological models, such as the probability of disease extinction and the probability of disease outbreak \cite{allen2015stochastic}.
	
	In \cite{allen2012extinction}, Linda J.S. Allen and Glenn E.L. propose a stochastic multigroup model based on continuous-time Markov chains (CTMC) to address the issue of disease extinction. A few years later, Linda J.S. Allen and Xueying Wang developed a stochastic model for cholera using branching process theory to analyze the probability of disease extinction \cite{allen2021stochastic}.
	The author Diderot X. et al \cite{didelot2017model} proposed a new stochastic model describing the course of bubonic plague epidemics in rat and human populations. In their modeling, they considered rat and human populations separately, providing a better understanding of the dynamics of bubonic plague in both populations. Moiz et al. \cite{qureshi2022modeling} propose a stochastic model combining the Auto-Regressive Integrated Moving Average (ARIMA) methodology with a Neural Networks approach to predict the global cumulative cases of "monkeypox virus" (MVP) at peak levels. Bärbel F. et al. \cite{finkenstadt2002stochastic}  proposed a stochastic model to explain extinction and recurrence of epidemics observed in measles. Bing G. et al. \cite{guo2023numerical} presents a detailed investigation of a stochastic model based on the SEIR framework with vaccination. This model governs the spread of the measles virus while incorporating white noise and the effects of immunizations. The latest pandemic of \covid has been the focus of much attention from researchers, who have proposed numerous studies to understand the evolution of the disease \cite{bardina2020stochastic},\cite{mamis2023stochastic}, \cite{barbarossa2020modeling}, \cite{tesfaye2021stochastic} and how to control it \cite{he2020discrete},\cite{el2024stochastic}, \cite{li2022dynamics}. The \covid pandemic emphasized the critical role of such models in predicting the spread of infectious diseases and supporting public health decisions. Accurate modeling has become increasingly important to guide interventions and mitigate the impact of outbreaks.\\
	 Traditional epidemic models, such as the SIR model and SEIR model  are often limited by their assumption of complete knowledge about the state of the epidemic, which is not realistic in the face of incomplete or noisy data and also when we are using a model like SEIR with a compartment E denoting individuals exposed to the disease, but not yet infectious. Particularly, unobserved cases, such as asymptomatic individuals and delays in testing, contribute to the significant challenges in accurately estimating the current state of an epidemic \cite{zhu2021extended}, \cite{colaneri2022invisible} and the key model parameters \cite{hasan2022new}, \cite{song2021maximum}. Stochastic epidemic models offer a flexible alternative, allowing the incorporation of randomness and uncertainty in epidemic dynamics. 
	  
	 \paragraph{Our Contribution}
	 This study develops a sophisticated stochastic model for \covid, taking into account the partially observable states. The model dynamics incorporate cascade states to capture as much relevant information as possible, thereby enhancing model accuracy. In the context of \covid, these cascade states allow us to account for the gradually fading immunity from vaccination and recovery, as neither provides complete immunity.
	 
	 The model is primarily constructed based on the concept of chemical reaction networks as detailed in \cite{anderson2011continuous} and \cite{BrittonPardoux2019}. We applied an CTMC framework and derived a diffusion approximation using the law of large numbers and the central limit theorem for Poisson processes.
	 
	 The inclusion of cascade states is motivated by the fact that, in models with hidden compartments, it is possible for these compartments to have one or more observable incoming transitions while at least one outgoing transition remains unobservable (or vice versa). To leverage the information from observed incoming transitions, the compartment can be divided into a series of generic sub-compartments,  each of them  grouping individuals based the time elapsed since they first entered the compartment.
	 
	 The proposed model enables the application of filtering techniques for parameter or state estimation. The numerical results for the proposed \covid model are designed with flexibility to account for time-varying parameters, which have been estimated based on both the developed model and real data from the Robert Koch-Institute (RKI) in Germany. Finally, we present the results of numerical experiments showing the evolution in each compartment. The dynamics obtained from our proposed model, particularly for the infected individuals, closely resemble the observations in Germany during the pandemic
	 
	  \paragraph{Paper Organization}
	 In Section 2, we begin with the mathematical foundation of the proposed model, starting with the microscopic model based on CTMC. Later, in Section 3, we present some limit properties which lead to the diffusion approximation. In Section 4, we provide model examples, particularly focusing on the \covid model. This section also motivates the introduction of the cascade state and concludes with the \covid model with the cascade state. Section 5 primarily discusses the model parameter estimation method, where we used ridge regression for parameter estimation. In Section 6, we present the numerical experiments: first, for the time-dependent model estimation, and second, for the proposed \covid model with the cascade state.
	 
	 \section{Microscopic Models}
	 Let $(\Omega,\mathcal{F},\{\mathcal{F}_{t}^{}\}_{t\geq 0},\mathbb{P})$ be a filtered  probability space and $X=(X(t))_{[0,T]}$, a stochastic process which takes values in $\{0,\ldots,N\}^{d}$ in a fixed time interval $[0,T]$. The $\sigma$-algebra $\mathcal{F}_{t}^{}$ model the information about the system from observing $X$ that is available at time $t$.
	 Here $X_{i}(t)\in \{0,\ldots,N\}$ denotes the absolute size of subpopulation in compartment $i=1,\ldots,d$ at time $t\geq 0$. We denote by $\overline{X}_{i}=\frac{1}{N}X_{i}\in [0,1]$   the relative size of subpopulation in compartment $i=1,\ldots,d$. As our aim is to study changes in the number of individuals in each compartment over time, we will focus mainly on the absolute size of subpopulations.

	 \subsection{Counting processes}\label{Counting_Proc}
	 Let us denote by $\countP(t)$ the total number of transition $k=1,\ldots,K$ in the time interval $[0,t]$. Since each transition corresponds to an individual moving from one compartment to another, 
	 $ \countP(t) $ represents the total number of such transitions (or individuals) of type $k$ that occur within the time interval  $[0,t]$, $k\in\{1,\ldots,K\}$. For all $k\in\{1,\ldots,K\}$,  $ \countP(t) $ can be naturally consider as counting process. In fact,  for all $k\in\{1,\ldots,K\}$ and for all $t\in [0,T]$,  $ \countP(t)\geq\ 0 $. In particular for $t=0$, $\countP(0)=0$. The total number of transition is non-decreasing such that for $s\leq t$ we have $\countP(s)\leq \countP(t)$. Finally, the increment $\countP(t)-\countP(s)$ equals the number of transition $k$ that have occurred in the time interval $[s,t]$.\\ 
	
	 	 Now we can characterize the dynamic of $X$ in term of $\countP$ for all $t\geq 0$.\\
	 For a given compartment, each random transition that takes place leads to either an individual leaving the compartment or entering it, except when that compartment is not part of the transition. Therfore, for each process $X_{i}(t)$, $i=1,\ldots d$, at a given time $t$ such that $X_{i}(t)\in \sN$, a transition $k$ is always of the form $X_{i}(t+\Delta t)=X_{i}(t) + \xi_{k}^{i}$ where 
	 
	 $\xi_{k}^{i}=\left\{\begin{array}{%
	 		l @{\qquad}  l}
	 	-1 & \text{if an individual leaves compartment $i$}\\
	 	+1 & \text{if an individual enters compartment $i$}\\
	 	\phantom{-}0& \text{otherwise}\\
	 \end{array}\right.$\\
	 
	 Consequently, the process $X(t)$ is stochastic such that $X(t)\in\sN^{d}$ and \\$X(t)=X(0)+\sum\limits_{k=1}^{K}\xi_{k} M_{k}(t) $ where $\xi_k$ is a d-dimensional vector constituted by $0$, $1$ and $-1$.
	 
	  In the context of an epidemic model, we assume that transitions between compartments are independent of each other. To incorporate a more general perspective, we will introduce temporal variability into some of the model's key parameters, allowing greater flexibility and adaptability to reflect dynamic changes over time. This generalization aligns with realism in epidemic models, as factors like infection rates may fluctuate over time due to disease mutations, as noted in (\cite{colaneri2022invisible},\cite{hasan2022new}). One of the most important type of counting process is the Poisson process. In this work we are going to consider the non-homogeneous Poisson process with continuous rate function $\lambda_{k}(t,x)$, $\forall t\geq 0$ and $k=1,\ldots,K$.
	  \remark 
	  Considering a homogeneous Poisson process, particularly with a constant intensity rate, is crucial when analyzing the early stages of an epidemic using an epidemic model. During this phase, the epidemic's dynamics are relatively stable, and the model parameters can reasonably be assumed to remain constant. This simplifies the mathematical framework, making the homogeneous Poisson process a valuable tool for studying and understanding the spread of the epidemic under these conditions.

	  Let us denote by $\tau_{k}(t)=\int\limits_{0}^{t}\PoisRate_{k}(s,X(s))ds$ the cumulative rate function also know as the time change. Then the number of transition $k$, in any time interval $[0,t]$  is Poisson distributed with mean $\TimCh_{k}(t)$, with respect to the transition $k$. By Considering the number of arrival in some very small interval $(t,t+\Delta t]$, we obtain that $\mathbb{P}[\countP(t+\Delta t)-\countP(t)=0]= 1-\PoisRate_{k}(t,X(t)) \Delta t + o(\Delta t)$, $\mathbb{P}[\countP(t+\Delta t)-\countP(t)=1]= \PoisRate_{k}(t,X(t)) \Delta t + o(\Delta t)$ and $\mathbb{P}[\countP(t+\Delta t)-\countP(t)\geq 2]=  o(\Delta t)$.
	  	 Now let $\lambda_{1}(t,x),\ldots,\lambda_{K}(t,x)$ be the rate function such that $\sum_{k=1}^{K}\lambda_{K}(t,x)=\Lambda(t,x)>0$. Any transition from a given state $x$ of a given rate $\Lambda(t,x)$ Poisson process is classified as a type $k$ transition with probability $ \dfrac{\lambda(t,x)}{\Lambda(t,x)} $ when occurring at time $t$. Let $(M_{k}(t))_{t\geq 0}$ be the resulting counting process of type $k$ transition for $k=1,\ldots,K$. In this study,  the counting processes $ (M_{1}(t))_{t\geq 0},\ldots, (M_{K}(t))_{t\geq 0}$ are considered to be independent non-homogeneous Poisson processes with rate function $\lambda_{1}(t,x),\ldots,\lambda_{K}(t,x)$ respectively. In fact, the number of transitions up to $t$ is independent because $\forall k\in \{1,\ldots,K\}$ each counting process $ M_{k}(t) $  does not influence  others in the whole time horizon. For example, in the context of a General  epidemic model, transitions such as infection have no impact on recovery or loss of immunity. 
	 Now, let us denote by $Y_{k}(.)$, some independent standard Poisson processes with unit intensity and consider the defined time change $\TimCh_{k}(t)=\int_{0}^{t}\PoisRate_{k}(s,X(s))ds$. The non-homogeneous Poisson  processes $M_{k}$ can be expressed in term of  $Y_{k}(.)$  as follows $M_{k}(t)=Y_{k}(\TimCh_{k}(t))$ and the state dynamics takes the following form 
	 \begin{equation}
	 	X(t)=X(0)+\sum\nolimits_{k=1}^{K}\xi_{k} Y_{k}\Big(\int\nolimits_{0}^{t}\lambda_{k}(s,X(s))ds\Big) \label{Eq_NHCTMC}
	 \end{equation}

	 $(X(t))_{t\geq 0}$ is a stochastic process that evolves through a discrete set of states $\{0,\ldots,N\}^{d}$ over continuous time, where the system's future state depends only on its current state. The Markov process $X(t)$ has the transition rate $\lambda_{k}(t,x)$ such that 
	 \begin{equation} 
	 	\mathbb{P}[X(t +\Delta t)-X(t)=\xi_{k}|X(t)]= \mathbb{P}[\countP(t+\Delta t)-\countP(t)=1]\approx\lambda_{k}(t,X(t)) \Delta t 
	 \end{equation}

	 Thus, $X$ is a non-homogeneous continuous-time Markov chain (CTMC)  (\cite{anderson2011continuous},\cite{BrittonPardoux2019}) which takes values in $\{0,\ldots,N\}^{d}$. 
	 The dynamics of the CTMC can be characterized by analyzing both the timing of the chain's transitions and the specific states to which it moves. In the context of simulation algorithms for epidemic dynamics, the classical representation involving the transition rate  $\lambda_{k}(t,X(t))$ aligns closely with Gillespie's algorithm \cite{gillespie1977exact}.
\color{black}

	 \subsection{Adapted Gillespie Algorithm for CTMC Simulation }
	 
	 Now we consider $\countP(t),~k=1,\ldots, K$ which gives the number of  transition $k$ that has been occurred during the window of time $[0,t]$. For $t,s\geq 0$ $(s\leq t)$, for $k=1,\ldots,K$ we have $\countP(t)-\countP(t)\sim Poi(\int\limits_{s}^{t}\lambda_{k}(u,x(u))du)$. We consider  $\overline{\tau}_{k}$ the random waiting time with respect to transition $k$.
	 
	 We  introduce $h\geq 0$   which verified $h\leq \overline{\tau}_{k}$ and assume that the process $X(t)$ remains the same between instants $t$ and $t+h$. This means that $X(t)$ is constant during the waiting time before the transition. Because of that we will stress only the fact that we will consider later that $\lambda_{k}(t,x(t))=\lambda_{k}(t)$.  However, for any infinitesimally small quantity $\delta h$, a jump takes place between instants $t+h$ and $t+h + \delta h$.  We assume that $\countP(t)$ is known, then the probability that the waiting time $\overline{\tau}_{k}$ is greater than $h$ knowing  $\countP(t)$ is the conditional survival function of $h$ which takes the following form 
	 \[ \mathbb{P}[\overline{\tau}_{k} >h|\countP(t)] =\mathbb{P}[\countP(t+h)-\countP(t)=0|\countP(t)]=\exp\Big[-\int\limits_{t}^{t+h}\lambda_{k}(u,x(u))du\Big]=\overline{F}_{k,t}^{}(h)\]
	 The conditional density can derive trough the relation between the survival function and the density given by $f_{k,t}^{}(h)=-\frac{\partial}{\partial h} \overline{F}_{k,t}^{}(h)$. And based on the Leibniz integral rule we obtain 
	 \[ f_{k,t}^{}(h)= \lambda_{k}(t+h) \exp\Big[-\int\limits_{t}^{t+h}\lambda_{k}(u,x(u))du\Big]\]
	 Using the density function and the survival function we can compute the hazard rate 
	 \[ \lambda_{k}(t+h)=\dfrac{f_{k,t}^{}(h)}{\overline{F}_{k,t}^{}(h)} \]
	 which characterizes the probability of transition in an infinitesimal period of time after $t+h$, conditional on the absence of transition before $t+h$.
	 Now we introduce $\overline{\tau}$ the waiting time until the next transition (of any type ) occurs 
	 \[ \overline{\tau}=\overline{\tau}_{1}\wedge \overline{\tau}_{2}\wedge\ldots\wedge \overline{\tau}_{n}\]
	 Since all the counting process $\countP(t)$, $k=1,\ldots,K$ are independents then 
	 \[\{\overline{\tau}>h\}=\{\overline{\tau}_{1}>h\}\cap \{\overline{\tau}_{2}>h\}\cap\ldots\cap \{\overline{\tau}_{K}>h\}\]
	 the survival function with respect to $h$ can be compute as follows 
	 \[\overline{F}_{t}^{}(h)=1-F_{t}(h)=\mathbb{P}(\{\overline{\tau}>h\})=\prod\limits_{k=1}^{K}\mathbb{P}(\{\overline{\tau}_{k}>h\})=\exp\Big[-\int\limits_{t}^{t+h}\sum\limits_{k=1}^{K}\lambda_{k}(u)du\Big]\]
	 Let us denote $\Lambda(t)=\sum\limits_{k=1}^{K}\lambda_{k}(t)$, then the above integral can be written in the following way 
	 \[\overline{F}_{t}^{}(h)=\exp\Big[-\int\limits_{t}^{t+h}\Lambda_{}(u)du=\exp\Big[-\int\limits_{0}^{h}\Lambda_{}(t+u)du\Big]\]
	 
	 Now knowing the  expression of the survival function $\overline{F}_{t}^{}(h)$, we can simulate the waiting time $\overline{\tau}$ by using the inverse method (or quantile transformation). We draw $\xi$ uniformly in the interval $(0,1)$ and solve the equation $\xi=\overline{F}_{t}^{}(h)$ which is equivalent to 
	 \[\log(\xi)=-\int\limits_{0}^{h}\Lambda_{}(t+u)du\]

	 For $\Lambda_{}(t)$ constant, and verify $ \Lambda_{}(t)=\Lambda_{0}$, the waiting time $\overline{\tau}=-\dfrac{\log(\xi)}{\Lambda_{0}}$. This outcome aligns with what is achieved in the conventional Gillespie algorithm.\\
	 For $\Lambda_{}(t)$ slowly varying, then we can we can approximate the integral $ \int\limits_{0}^{h}\Lambda_{}(u+t)du $ by $h\Lambda_{}(t)$ and derive an approximation of the waiting time $\overline{\tau}=-\dfrac{\log\xi}{\Lambda(t)}$. \\
	 For an exact solution,  let introduce $B_{t}(h)=\int\limits_{0}^{h}\Lambda_{}(t+u)du$. We know that the waiting time is obtain by solving the equation 
	 \[B_{t}(h) + \log(\xi)=0\]
	 Most of case $B_{t}(h)$ is unknown explicitly or the form of $\Lambda_{}(t)$ is complex and then make difficult to find an exact solution of the above equation. One solution is to use an approximation method to solve the corresponding non-linear equation. As an illustration, employing the Newton-Raphson method allows one to estimate the solution of the equation above in the following manner:
	 \[\overline{\tau}_{n+1}=\overline{\tau}_{n} - \dfrac{B_{t}(\overline{\tau}_{n}) + \log(\xi)}{B_{t}^{'}(\overline{\tau}_{n})},~~~~~\overline{\tau}_{0}=-\dfrac{\log(\xi)}{\Lambda(t_0)}\]
	Now we know how to estimate the waiting $\overline{\tau} $ in the case where the intensity rate is constant or time-dependent. The next step is to be able to determine which transition  among all the $k$, $k\in \{1,\ldots,K\}$ transitions  will take place. The transition probabilities of the embedded  Makov chain $(X_{n})$ are, for $\xi_{k}, k=1\ldots,K$, $\{\xi_{1},\ldots,\xi_{K}\}$ is given as follow
	\[ \P[X_{n+1}=x + \xi_{k}|X_{n}=x]= \dfrac{\lambda_{k}(n,x)}{\sum\limits_{k}\lambda_{k}(n,x)} \]
	 
	 Following the transition, we move from state $x$ to $x'$. The sampling of the arrived state $x'$  is based on a specific transition vector $x'=x+\xi_{k^{*}}$. The simulation of the choice $k^{*}\in\{1,\ldots,K\}$  can be achieved by using the cumulative sum approach. This approach can be describes as follows: compute the sum $p_{k^{*}}=\sum_{k=1}^{k^{*}}\dfrac{\lambda_{k}(t+\overline{\tau})}{\Lambda(t+\overline{\tau})}$ and draw uniformly a random number $r$ in a such away that $k^{*}$ verify $p_{k^{*}-1}\leq r< p_{k^{*}}$.

	 \begin{algorithm}
	 	\caption{Simulation of Non-Homogeneous CTMC}\label{alg:simul_NHCTMC}
	 	\DontPrintSemicolon
	 	\KwData{$X_{0}$, $\xi_k$, $k:=1,\ldots,K$, $T$}
	 	\KwIn{$\theta_{1}(t),\ldots,\theta_{m}(t)$  \tcp*{Model parameters}} 
	 	\KwOut{$X:=\{X(t),t\in[0,T]\}$}
	 	\textbf{Initialization :}~~$t:=0$ ; $X(t):=X_{0}$
	 	
	 	\While{$t<T$}
	 	{
         	Draw $\xi$ with standard uniform distribution\;\\
	 		Solve $B_{t}(h) + \log(\xi)=0$ and obtain $\overline{\tau}$ \; \tcp*{Simulate the Waiting Time} 
	 		\tcp*{Determine the Next Transition}
	 		\For{$k:=1$ to $K$}    
	 		{ 
	 			$p_{k}:=\sum_{i=1}^{k}\dfrac{\lambda_{k}(t+\overline{\tau})}{\Lambda(t+\overline{\tau})}$ \;
	 		}
	 		Draw $r$ with standard uniform distribution\;\\
	 		Pick $k\in \{1,\ldots,K\}$ such that $p_{k^{}-1}\leq r< p_{k^{}}$\\
	 		\hspace*{3cm}\tcp*{Update the state and time dynamic}
	 		$X(t+\tau):=X(t)+\xi_k$
	 		$t:=t+\overline{\tau}$	}
	\end{algorithm}
 \color{black}
	 \section{Macroscopic Model}\label{Macros_Model}
	 In this section, we explore the link between stochastic models, characterized by continuous-time Markov chains, and models based on ordinary differential equations (ODEs). In addition, we perform a comparative analysis between deterministic and stochastic frameworks to stress their advantages and limitations in the field of epidemic modeling. More specifically, with regard to sub-population size, we will take advantage of the law of large numbers for Poisson processes to demonstrate that as we increase the total population size $N$,  the CTMC converges to a deterministic limit.
	 
	 \subsection{Law of Large Numbers and ODE Models}\label{LLN}
	 We consider a constant population size $N$. Let us recall the state dynamic $X(t)$ driven by the CTMC and which the form is as follow :
	 \begin{eqnarray}
	 	X(t)=X(0)+\sum\limits_{k=1}^{K}\xi_{k} Y_{k}\Big(\int\nolimits_{0}^{t}\lambda_{k}(s,X(s))ds\Big) ~~~k=1,\ldots,K.
	 \end{eqnarray}
	 Here, $Y_k$ denotes a set of mutually independent standard Poisson processes, each operating at a unit rate. The function $\lambda_k(t, X(t))$ represents the jump rate corresponding to the transition $\xi_k$ at time $t$. The transition vector $\xi_k$ is a $d$-dimensional vector, typically taking values within the set $\mathbb{N}^d$. The i-th component of the X(t) vector represents the population size in the i-th compartment at time t. Notably, this dynamic can also be expressed in terms of relative sub-population sizes, where we define $\overline{X}^{N}(t)$ as the vector obtained by scaling $X(t)$ by $1/N$ such that $\overline{X}^{N}(t)=\frac{1}{N}\times X(t)$. Given the context of a constant population size $ (N(t)=N) $, the components of $\overline{X}^{N}(t)$ represent the proportions of the total population within various compartments at time $t$.\\
	 The equivalent equation for $\overline{X}^{N}(t)$ can be written in the following way :
	 \begin{eqnarray}
	 	\overline{X}^{N}(t)=\overline{X}^{N}(0)+\sum\limits_{k=1}^{K}\frac{1}{N}\xi_{k} Y_{k}\Big(\int\nolimits_{0}^{t}\lambda_{k}(s,N\overline{X}^{N}(s))ds\Big) ~~~k=1,\ldots,K
	 \end{eqnarray}
	 We assume that there exist a collection of function $\nu_{k}(.,.):\mathbb{R}^{+}\times[0,1]^{d}\rightarrow \mathbb{R}^{+}$ such that
	 
	 \begin{itemize}
	 	\item[(H1)] $\forall k\in \{1,\ldots,K\}, ~~~\nu_{K}(.,.)\in C^{2}([0,T]\times[0,1]^{d})$
	 	\item[(H2)]  $\forall k\in \{1,\ldots,K\},~~ x\in [0,1]^{d}, t\in [0,T] ~~~ \frac{1}{N}\lambda_{k}(t,\lfloor Nx\rfloor) \longrightarrow \nu_{k}(t,x)$ ~~as $N\longrightarrow \infty$
	 \end{itemize}
	 where $\lfloor Nx\rfloor$ is  a vector of integers.
	 \begin{remark}\label{Remark_1}
	 	Typically, within the context of epidemic models, the following relationship holds true for each transition: $\lambda_{k}(t,\lfloor Nx\rfloor)=N\nu_{k}(t,x)~~\text{for all $k=1,\ldots,K, ~~N\geq 1, x\in [0,1]^{d}$}$ and $t\in [0,T]$(See \cite{BrittonPardoux2019}). In particular, this relation is fulfill when the intensity rate $\lambda_{k}(t,x)$ is linear or quadratic of the form $\frac{xy}{N}$ with $x,y\in \{0,\ldots,N\}^{d}$. \\
	 \end{remark}
	 
	 Based on the above remark we can rewrite $\overline{X}^{N}(t)$ as 
	 \begin{eqnarray}
	 	\overline{X}^{N}(t)=\overline{X}^{N}(0)+\sum\limits_{k=1}^{K}\frac{1}{N}\xi_{k} Y_{k}\Big(\int\limits_{0}^{t}N\nu_{k}(s,\overline{X}^{N}(s))ds\Big) ~~~k=1,\ldots,K
	 \end{eqnarray}
	 Therefore, $\overline{X}^{N}$ shows a behavior close to $X$, but the magnitudes of its jumps are scaled by a factor of $1/N$. Consequently, with an increase in $N$, the jumps of $\overline{X}^{N}$ diminish in size. Intuitively, as $N$ becomes large, $\overline{X}^{N}$ may approach a potentially continuous limit. In this context of a large $N$, the mathematical exploration of the limit of $\overline{X}^{N}$ has been primarily undertaken by Kurtz [\cite{kurtz1971limit}, \cite{kurtz1976limit}, \cite{kurtz1978strong}]. These works consider a time-independent intensity. Then, Guy [\cite{guy2015approximation},\cite{guy2016approximation}] and  Nacir et al. (\cite{narci2021inference}) have obtained more generalized results on convergence and diffusion approximation, accounting for time-dependent intensities in all Poisson processes considered.
	 
	 In the subsequent discussion, we assume that the intensity rates $\nu_k$ are locally bounded, a property typically met within the framework of epidemic models with constant total population size.
	 We know that $Y_{k}(t)$ is a standard Poisson process, then $\widetilde{Y}_{k}(t)$ defined by $\widetilde{Y}_{k}(t)=Y_{k}(t)-t$, $k=1,\ldots,K$ are  continuous time martingales as defined in the appendix of \cite{BrittonPardoux2019}. Based on that, we can  rewrite the dynamic of $\overline{X}^{N}$ and obtain
	 \begin{eqnarray}
	 	\overline{X}^{N}(t)=\overline{X}^{N}(0)+\int\limits_{0}^{t}\overline{F}(s,\overline{X}^{N}(s))ds+\sum\limits_{k=1}^{K}\frac{1}{N}\xi_{k} \widetilde{Y}_{k}\Big(\int\limits_{0}^{t}N\nu_{k}(s,\overline{X}^{N}(s))ds\Big) ~~~k=1,\ldots,K
	 \end{eqnarray}
	 where $\overline{F}(t,x)=\sum\limits_{k=1}^{K}\nu_{k}(t,x)\xi_{k}$.\\

	 \begin{theorem}[Law of Large Numbers]\label{LLN_Theo}
	 	Assume that the initial condition $x_{0}\in [0,1]^{d}$ is given such that $\overline{X}^{N}(0)\longrightarrow x_{0}$ as $N\longrightarrow\infty$. In addition based on (H2),  $\overline{F}(t,x)=\sum\limits_{k=1}^{K}\nu_{k}(t,x)\xi_{k}$ is locally Lipschitz as a function of $x$, and locally uniformly\footnote{The "locally uniformly" in $t$  ensures that the Lipschitz constant 
	 		does not vary excessively $t$ changes within a bounded region.} in $t$, then $\overline{X}^{N}(t)$ converges almost surely uniformly on $[0,T]$ to the solution $\overline{X}^{\infty}(t)$ of the ordinary differential equation (ODE)
	 	\[ \dfrac{d\overline{X}^{\infty}(t)}{dt}=\overline{F}(t,\overline{X}^{\infty}(t)); ~~~~\overline{X}^{\infty}(0)=x_{0} \]
	 \end{theorem}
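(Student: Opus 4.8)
The plan is to follow the classical martingale-plus-Gr\"onwall argument going back to Kurtz. Starting from the decomposition already derived in the excerpt,
\[
\overline{X}^{N}(t)=\overline{X}^{N}(0)+\int_{0}^{t}\overline{F}(s,\overline{X}^{N}(s))\,ds+\mathcal{E}_{N}(t),\qquad
\mathcal{E}_{N}(t):=\sum_{k=1}^{K}\frac{1}{N}\xi_{k}\,\widetilde{Y}_{k}\!\Big(\int_{0}^{t}N\nu_{k}(s,\overline{X}^{N}(s))\,ds\Big),
\]
the first step is to show that the noise term is asymptotically negligible, i.e. $\sup_{t\le T}\|\mathcal{E}_{N}(t)\|\to 0$ almost surely. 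Since each $\nu_{k}$ is continuous on the compact set $[0,T]\times[0,1]^{d}$ by (H1) (and $\overline{X}^{N}$ stays in $[0,1]^{d}$ because the transitions preserve the total population and the non-negativity of the compartments), there is a constant $\bar c:=\max_{k}\sup_{[0,T]\times[0,1]^{d}}\nu_{k}<\infty$, so the random time-change argument satisfies $\int_{0}^{t}N\nu_{k}(s,\overline{X}^{N}(s))\,ds\le N\bar c\,T$ for all $t\le T$. Hence
\[
\sup_{t\le T}\Big|\frac{1}{N}\widetilde{Y}_{k}\!\Big(\int_{0}^{t}N\nu_{k}(s,\overline{X}^{N}(s))\,ds\Big)\Big|\le \sup_{0\le u\le \bar c T}\Big|\frac{1}{N}Y_{k}(Nu)-u\Big|,
\]
and the right-hand side tends to $0$ a.s. by the functional strong law of large numbers for the standard Poisson process (which follows, e.g., from Doob's maximal inequality applied to the martingale $\widetilde{Y}_{k}$ together with the Borel--Cantelli lemma). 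Summing over the finitely many $k$ gives $\sup_{t\le T}\|\mathcal{E}_{N}(t)\|\to 0$ a.s.

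Next I would compare $\overline{X}^{N}$ with the ODE solution $\overline{X}^{\infty}$. Existence and uniqueness of $\overline{X}^{\infty}$ on $[0,T]$ follow from Picard--Lindel\"of, since on the compact region $[0,T]\times[0,1]^{d}$ the ``locally Lipschitz, locally uniformly in $t$'' hypothesis on $\overline{F}$ becomes a genuine Lipschitz bound with some constant $L$, and the solution cannot leave $[0,1]^{d}$. Subtracting the two integral equations,
\[
\overline{X}^{N}(t)-\overline{X}^{\infty}(t)=\big(\overline{X}^{N}(0)-x_{0}\big)+\int_{0}^{t}\big[\overline{F}(s,\overline{X}^{N}(s))-\overline{F}(s,\overline{X}^{\infty}(s))\big]\,ds+\mathcal{E}_{N}(t),
\]
so with $\delta_{N}:=\|\overline{X}^{N}(0)-x_{0}\|+\sup_{t\le T}\|\mathcal{E}_{N}(t)\|$ (which $\to 0$ a.s. by the assumption on the initial condition together with the first step) and the Lipschitz bound,
\[
\big\|\overline{X}^{N}(t)-\overline{X}^{\infty}(t)\big\|\le \delta_{N}+L\int_{0}^{t}\big\|\overline{X}^{N}(s)-\overline{X}^{\infty}(s)\big\|\,ds .
\]
Gr\"onwall's inequality then yields $\sup_{t\le T}\|\overline{X}^{N}(t)-\overline{X}^{\infty}(t)\|\le \delta_{N}e^{LT}\to 0$ a.s., which is exactly the assertion of uniform almost sure convergence on $[0,T]$.

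The main technical obstacle is the first step: because the argument of $\widetilde{Y}_{k}$ is itself a random time-change depending on the whole trajectory $\overline{X}^{N}$, one cannot apply the strong law pointwise in that argument. The uniform a priori bound $N\bar c\,T$ — available here precisely because the state space $[0,1]^{d}$ is compact and the $\nu_{k}$ are continuous, hence bounded — is what reduces the problem to the deterministic-time functional SLLN $\sup_{u\le U}|N^{-1}Y_{k}(Nu)-u|\to 0$. If one wished to drop compactness of the state space or local boundedness of the $\nu_{k}$, a localization argument via stopping times $\tau_{M}=\inf\{t:\|\overline{X}^{N}(t)\|\ge M\}$ would be needed, together with a separate argument that $\tau_{M}>T$ eventually; under the stated hypotheses this refinement is not required.
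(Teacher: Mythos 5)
Your argument is correct and is essentially the same as the paper's, which simply invokes Gr\"onwall's lemma and refers to \cite{BrittonPardoux2019}, where the proof is exactly the one you spell out: bound the martingale noise term uniformly via the functional strong law for unit-rate Poisson processes (using that the rates $\nu_k$ are bounded on the compact set $[0,T]\times[0,1]^d$ so the random time change is at most $N\bar c T$), then compare with the ODE solution through the Lipschitz bound and Gr\"onwall's inequality. Nothing further is needed.
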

 
	 \begin{proof}
	 	The proof of this theorem is mainly based on the above proposition and Gronwall's Lemma. All the details can be found in \cite{BrittonPardoux2019}
	 \end{proof}

	 \subsection{Central Limit Theorem   }
	 
	 In the framework of stochastic epidemic model, it is recognized that diffusion processes provide a reliable approximation for describing the dynamics within each sub-compartment (Guy  et al., \cite{guy2015approximation}). In this section, we explore variations in the disparity between the stochastic epidemic process and its deterministic limit.\\

	 Let us define $G^{}(t)$ as the scale deviation between $\overline{X}^{N}(t)$, representing the stochastic process describing changes in the proportions of the total population in the different compartments, and its deterministic limit $\overline{X}^{\infty}(t)$. \begin{equation}
	 G^{N}(t)=\sqrt{N}\Big(\overline{X}^{N}(t)-\overline{X}^{\infty}(t)\Big) \label{Gaussian_Proc}
	 \end{equation}
	 Based on [\cite{BrittonPardoux2019},\cite{guy2015approximation},\cite{guy2016approximation}], $G^{N}(t) \rightarrow G^{}(t) $ as $N\rightarrow \infty$ is centered  Gaussian process and has the form 
	 \[ G^{N}(t)=G^{N}(0) +\sqrt{N}\int\limits_{0}^{t}\Big[F(X^{N}(s))-\overline{F}(\overline{X}^{\infty}(s))\Big]ds + \sum\limits_{k=1}^{K}\xi_{k}\widetilde{\mathcal{N}}_{k}^{N}(t)\]
	 where
	 $G^{N}(0)=\sqrt{N}\Big(X^{N}(0)-\overline{X}^{\infty}(0)\Big)$, ~~~~ $\widetilde{\mathcal{N}}_{k}^{N}(t)=\dfrac{1}{\sqrt{N}}M_{k}
	 \Big(N\int\limits_{0}^{t}\nu_{k}(X^{N}(s))ds\Big)$
	 
	 The following result is very important in the stochastic process approximation procedure.
	 \begin{proposition}
	 	As $N\longrightarrow \infty$ $\{\widetilde{\mathcal{N}}^{N}(t),t\geq 0\}$  convergence in distribution to $\{\widetilde{\mathcal{N}}^{}(t),t\geq 0\}$, where $\widetilde{\mathcal{N}}^{N}(t)=\int\limits_{0}^{t}\sqrt{\nu_{k}(s,\overline{X}^{N}(s))}dW_{k}(s)$ and $W_{1}(t),\ldots,W_{k}(t)$ are mutually independent standard Brownian motions.
	 \end{proposition}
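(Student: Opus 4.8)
The plan is to realise each centred martingale term $\widetilde{\mathcal{N}}_k^N$ as a deterministically rescaled, randomly time-changed compensated Poisson process, and then combine a functional central limit theorem for the Poisson process with a random time-change argument. Write $\rho_k^N(t) := \int_0^t \nu_k(s,\overline{X}^N(s))\,ds$ and $\rho_k^\infty(t) := \int_0^t \nu_k(s,\overline{X}^\infty(s))\,ds$, and set $B_k^N(u) := \tfrac{1}{\sqrt N}\bigl(Y_k(Nu)-Nu\bigr)$. Then, with $\widetilde Y_k(u)=Y_k(u)-u$ as in Section~\ref{LLN}, the martingale term equals $\widetilde{\mathcal{N}}_k^N(t) = \tfrac{1}{\sqrt N}\widetilde Y_k\bigl(N\rho_k^N(t)\bigr) = B_k^N\!\bigl(\rho_k^N(t)\bigr)$, so it suffices to pass to the limit in the composition $B_k^N\circ\rho_k^N$.

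First I would record the FCLT for the Poisson process: by Donsker's theorem applied to the i.i.d.\ unit-mean increments of $Y_k$ (equivalently, the CLT for $Y_k$ together with tightness in the Skorokhod space $D([0,\infty))$), $B_k^N \Rightarrow W_k$ in $D([0,\infty))$ as $N\to\infty$, where $W_k$ is a standard Brownian motion; since $Y_1,\dots,Y_K$ are mutually independent, the vector $(B_1^N,\dots,B_K^N)$ converges jointly to $(W_1,\dots,W_K)$ with the $W_k$ mutually independent. Next, by Theorem~\ref{LLN_Theo} we have $\overline{X}^N\to\overline{X}^\infty$ almost surely, uniformly on $[0,T]$; combined with continuity and local boundedness of $\nu_k$ this gives $\rho_k^N\to\rho_k^\infty$ uniformly on $[0,T]$, in probability, towards the \emph{deterministic}, continuous, nondecreasing limit $\rho_k^\infty$. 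Because the limit is deterministic, this upgrades the joint weak convergence above (via Slutsky-type reasoning) to $\bigl((B_k^N)_k,(\rho_k^N)_k\bigr) \Rightarrow \bigl((W_k)_k,(\rho_k^\infty)_k\bigr)$ in the product Skorokhod space.

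Then I would invoke continuity of the composition map $(x,r)\mapsto x\circ r$ at pairs $(x,r)$ with $x$ continuous and $r$ continuous nondecreasing (the random time-change theorem; this is the mechanism used in \cite{BrittonPardoux2019} and \cite{guy2015approximation,guy2016approximation}). The continuous mapping theorem then yields $\widetilde{\mathcal{N}}_k^N = B_k^N\circ\rho_k^N \Rightarrow W_k\circ\rho_k^\infty$ jointly over $k$, with independent $W_k$. Finally I would identify the limit law: $t\mapsto W_k(\rho_k^\infty(t))$ is a continuous centred Gaussian martingale with quadratic variation $\langle W_k\circ\rho_k^\infty\rangle_t = \rho_k^\infty(t) = \int_0^t \nu_k(s,\overline{X}^\infty(s))\,ds$, and so is the Itô integral $\int_0^t \sqrt{\nu_k(s,\overline{X}^\infty(s))}\,d\widetilde W_k(s)$; by Lévy's characterization (equivalently Dambis--Dubins--Schwarz) the two have the same distribution, giving the stated representation $\widetilde{\mathcal{N}}_k(t) = \int_0^t \sqrt{\nu_k(s,\overline{X}^\infty(s))}\,dW_k(s)$, with independence across $k$ inherited from that of the $W_k$.

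The main obstacle is the random time change: $\rho_k^N$ is a functional of the very same Poisson processes $Y_k$ that drive $B_k^N$, so one cannot treat them as independent. The clean way around this is exactly the observation that $\rho_k^N$ converges \emph{in probability} to the deterministic limit $\rho_k^\infty$, which restores joint weak convergence without needing independence; care is also needed because $\nu_k$ is only assumed $C^2$ on $[0,1]^d$, so one should check that the paths of $\overline{X}^N$ and $\overline{X}^\infty$ remain in a compact subset on which $\nu_k$ and $\sqrt{\nu_k}$ are well controlled (guaranteed here by the constant-population-size structure). All of these steps are standard and carried out in detail in \cite{BrittonPardoux2019} and \cite{guy2015approximation,guy2016approximation}.
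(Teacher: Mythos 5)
Your argument is correct and is essentially the same route the paper takes: the paper's proof simply defers to the central limit theorem for Poisson processes in \cite{BrittonPardoux2019} (Chapter 2), which is exactly the machinery you spell out --- the FCLT for the compensated, rescaled Poisson processes, almost sure convergence of the random time changes to the deterministic limit $\int_0^t\nu_k(s,\overline{X}^{\infty}(s))\,ds$, continuity of composition, and identification of the limit as the It\^o integral via its (deterministic) quadratic variation. Your write-up also correctly reads the compensated process and the limit integrand $\overline{X}^{\infty}$ into the paper's loosely stated formulas, so it is a faithful expansion of the cited argument rather than a different proof.
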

	 \begin{proof}
	 	The proof of this proposition is mainly based the central limit theorem for Poisson process given in  the book [\cite{BrittonPardoux2019}, Chapter 2].
	 \end{proof}
	 In the following we have the main result of this section which is indeed the consequence of the above proposition (\cite{BrittonPardoux2019}), 
	 
	 \begin{theorem}[Central Limit Theorem]
	 	Let us assume in addition to the Theorem \ref{LLN} that $\overline{F}
	 	(x)$ is of class $\mathcal{C}^{1}$, locally uniformly in $t$. Then 
	 	$\forall t\geq 0, ~~G^{N}(t)
	 	\xRightarrow[N\to+\infty]{} G(t)$
	 	
	 	where
	 	\[ G(t)=\int\limits_{0}^{t}\nabla \overline{F}(s,\overline{X}^{N}(s))G(s)ds+
	 	\sum\limits_{k=1}^{K}\xi_{k}\int\limits_{0}^{t}\sqrt{\nu_{k}(s,\overline{X}^{N}(s))}dW_{k}(s), ~~t\geq 0 \] 
	 \end{theorem}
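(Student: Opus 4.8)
The plan is to pass to the limit in the integral representation of $G^{N}$ displayed just before the theorem, treating it as a \emph{linear} equation whose random coefficient and whose driving noise both converge. Write $Z^{N}(t):=\sum_{k=1}^{K}\xi_{k}\widetilde{\mathcal{N}}_{k}^{N}(t)$ for the rescaled centred Poisson part, and recall that the decomposition reads $G^{N}(t)=G^{N}(0)+\sqrt{N}\int_{0}^{t}\big(\overline{F}(s,\overline{X}^{N}(s))-\overline{F}(s,\overline{X}^{\infty}(s))\big)\,ds+Z^{N}(t)$, with $G^{N}(0)=\sqrt{N}(\overline{X}^{N}(0)-x_{0})\to 0$ as is implicit in the statement. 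First I would linearise the drift via the fundamental theorem of calculus:
\[
\sqrt{N}\big(\overline{F}(s,\overline{X}^{N}(s))-\overline{F}(s,\overline{X}^{\infty}(s))\big)=R^{N}(s)\,G^{N}(s),\qquad R^{N}(s):=\int_{0}^{1}\nabla\overline{F}\big(s,\overline{X}^{\infty}(s)+\theta(\overline{X}^{N}(s)-\overline{X}^{\infty}(s))\big)\,d\theta,
\]
so that $G^{N}$ solves the closed equation $G^{N}(t)=G^{N}(0)+\int_{0}^{t}R^{N}(s)G^{N}(s)\,ds+Z^{N}(t)$. Since $\overline{F}\in\mathcal{C}^{1}$ locally uniformly in $t$ and, by Theorem~\ref{LLN_Theo}, $\overline{X}^{N}\to\overline{X}^{\infty}$ almost surely uniformly on $[0,T]$ with both paths confined to the compact set $[0,1]^{d}$, the matrix-valued process $R^{N}$ converges almost surely, uniformly on $[0,T]$, to the \emph{deterministic} map $R(s):=\nabla\overline{F}(s,\overline{X}^{\infty}(s))$.

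Next I would collect the three inputs feeding this equation and record their joint convergence: $G^{N}(0)\to 0$; $R^{N}\to R$ almost surely and uniformly, a deterministic limit; and $Z^{N}\Rightarrow Z:=\sum_{k=1}^{K}\xi_{k}\int_{0}^{\cdot}\sqrt{\nu_{k}(s,\overline{X}^{\infty}(s))}\,dW_{k}(s)$ in the Skorokhod space $D([0,T],\mathbb{R}^{d})$, which is exactly the Proposition preceding the theorem (the $W_{k}$ being mutually independent because the standard Poisson processes $Y_{k}$ are). Because $R^{N}$ has a non-random limit, a Slutsky-type argument upgrades these to the joint statement $(R^{N},Z^{N},G^{N}(0))\Rightarrow(R,Z,0)$ in $C([0,T])\times D([0,T],\mathbb{R}^{d})\times\mathbb{R}^{d}$.

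Then I would express $G^{N}$ as an explicit continuous functional of this triple and invoke the continuous mapping theorem. Putting $H^{N}:=G^{N}-Z^{N}$, one checks that $H^{N}$ is absolutely continuous with $\dot{H}^{N}(t)=R^{N}(t)\big(H^{N}(t)+Z^{N}(t)\big)$ and $H^{N}(0)=G^{N}(0)$; equivalently, by variation of constants with the fundamental matrix $\Phi^{N}$ of $\partial_{t}\Phi^{N}(t,s)=R^{N}(t)\Phi^{N}(t,s)$, $\Phi^{N}(s,s)=I$,
\[
H^{N}(t)=V^{N}(t)+\int_{0}^{t}\Phi^{N}(t,s)R^{N}(s)V^{N}(s)\,ds,\qquad V^{N}(t):=G^{N}(0)+\int_{0}^{t}R^{N}(s)Z^{N}(s)\,ds .
\]
The map $R\mapsto\Phi$ is continuous uniformly on $[0,T]^{2}$ (standard ODE dependence on coefficients), and $(R,Z)\mapsto\int_{0}^{\cdot}R(s)Z(s)\,ds$ is continuous from $C\times D$ into $C$: $J_{1}$-convergence of $Z^{N}$ forces convergence Lebesgue-almost everywhere on $[0,T]$, and dominated convergence — with the uniform bound on $Z^{N}$ supplied by its tightness — gives uniform convergence of the integrals. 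Hence $H^{N}\Rightarrow H$ in $C([0,T],\mathbb{R}^{d})$, where $H$ is built from $(R,Z,0)$ by the same formulas, so $G^{N}=H^{N}+Z^{N}\Rightarrow H+Z=:G$ in $D([0,T],\mathbb{R}^{d})$; in particular $G^{N}(t)\xRightarrow[N\to\infty]{}G(t)$ for each fixed $t$. A short verification shows $G=H+Z$ satisfies $G(t)=\int_{0}^{t}R(s)G(s)\,ds+Z(t)$, which is the linear SDE in the statement (with the deterministic limit $\overline{X}^{\infty}$ in $\nabla\overline{F}$ and in $\nu_{k}$).

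I expect the main obstacle to be the weak-convergence bookkeeping in the Skorokhod topology: one must check that the solution map of the linear equation is genuinely $J_{1}$-continuous at the limiting inputs — which works here only because the lone operation applied to the jump process $Z^{N}$ is Lebesgue integration, which maps $D$ continuously into $C$ — and that $\{H^{N}\}$ is tight, which follows pathwise from the uniform bound on $R^{N}$, the tightness of $\{Z^{N}\}$, and Gronwall's inequality. An essentially equivalent alternative is the classical tightness-and-identification scheme (as in \cite{BrittonPardoux2019}): establish tightness of $\{G^{N}\}$ in $D([0,T],\mathbb{R}^{d})$, show that every weak limit point solves the limiting linear SDE, and invoke its pathwise uniqueness — immediate since the drift is Lipschitz in $G$ — to identify the limit.
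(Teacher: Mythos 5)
Your proposal is correct and follows essentially the same route as the paper, which does not spell out a proof but defers entirely to \cite{BrittonPardoux2019} (and \cite{anderson2011continuous}): the argument there is precisely your scheme of decomposing $G^{N}$, invoking the Poisson CLT of the preceding proposition for the martingale part, linearising the drift, and closing with a Gronwall/continuous-mapping (or tightness-and-identification) step. The only remark worth adding is that you correctly replace the paper's typographical $\overline{X}^{N}$ by the deterministic limit $\overline{X}^{\infty}$ in the coefficients of the limiting linear SDE, which is what the cited result actually states.
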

	 
	 \begin{proof}
	 	For all details about the proof of this main result one can refers to  \cite{BrittonPardoux2019} and \cite{anderson2011continuous}.
	 \end{proof}

	 \subsection{Numerical Illustration of Limit Theorems}
	 \label{sec:Illustration_LLN_CLT}
	 This section focuses on the numerical simulation study of a SIRS model, \ref{SIRS_Model}. The subsection addresses the simulation of a CTMC model using the classic SIRS framework, followed by a numerical demonstration of the law of large numbers and the central limit theorem using this model.\\ 
	 For the simulations, we wil consider a time-dependent infection rate $\beta(t)=\beta_{0}(1+A\sin(\frac{2\pi t}{T^{\beta}}))$ where $\beta_{0}$ is the baseline infection rate, $A$, the amplitude of the seasonal effect, $T^{\beta}$ the period of the seasonal trend on transmission and finally $t$, the time (Ex: in days, weeks,..etc.). The recovery rate $\gamma$ and the losing immunity $\rho$ are set respectively as $\gamma=\rho=0.05$ and $\beta_{0}=0.2$, $A=0.4$ and $T^{\beta}=365$. According to this model (\ref{SIRS_Model}), we have the following table which describes the different transition which are going to be used in the algorithm \ref{alg:simul_NHCTMC} in order to simulate the CTMC.
	 \begin{figure}[h]
	 \centering
	 \includegraphics[width=10cm,height=5cm]{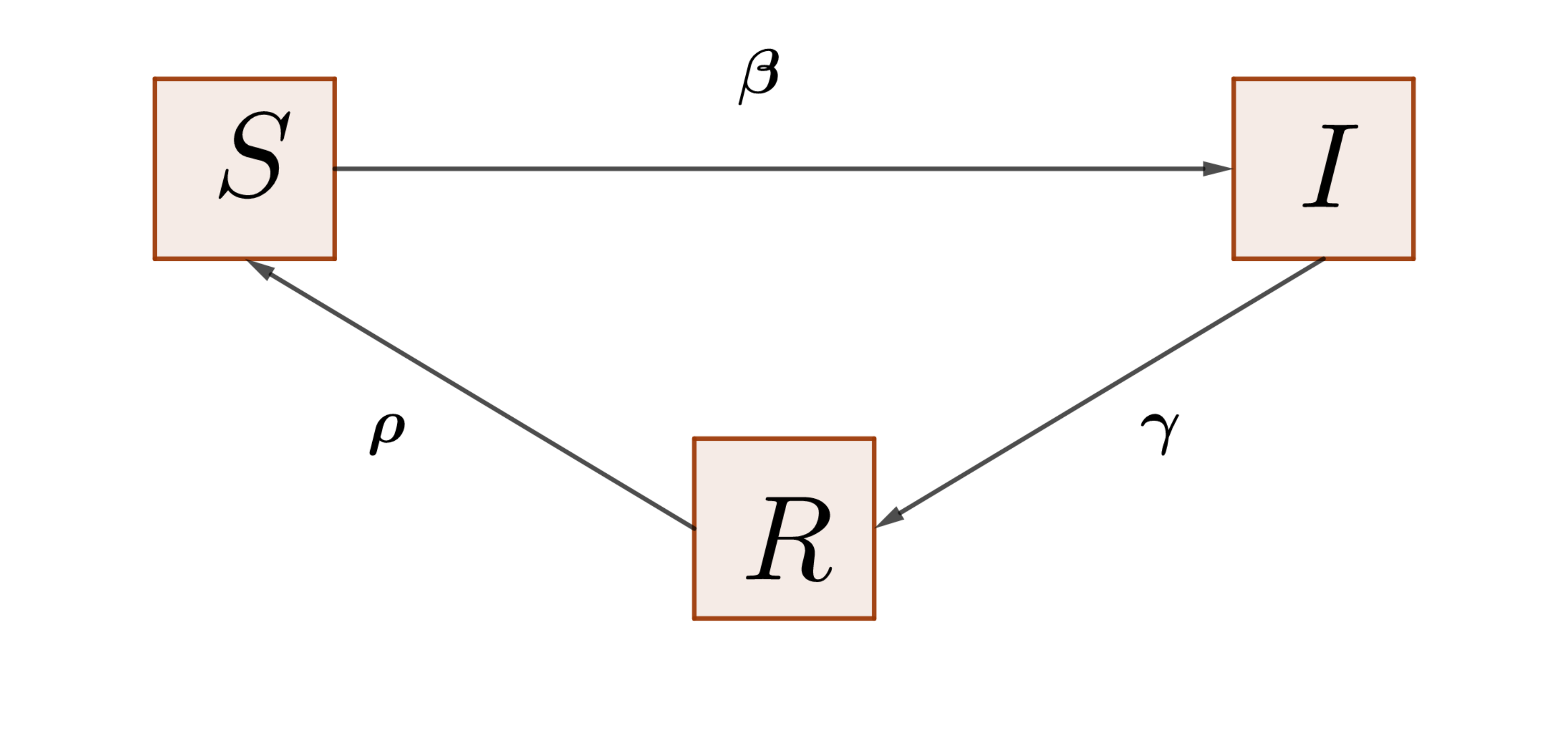}
	 \caption{SIRS model with different transitions}
	 \label{SIRS_Model}
 \end{figure}

	 \begin{table}[h]
	 	\caption{$SIRS$ model : State process $X=(X_1,X_2,X_3)=(I,R,S)$; Total number of states $ d= 3 $, Total transitions $ K = 3 $.}
	 	\label{Table_Info_SIRS_model}
	 	\begin{center}
	 		
	 		\setlength{\tabcolsep}{5pt}
	 		\begin{tabular}{l|l|c|c}
	 			\hline
	 			k & Transition  &Transition vectors $\xi_{k}$& intensity $\lambda_{i}(t,X)$ \\ 
	 			\hline
	 			&&&\\[-1em]
	 			1& Infection of  susceptible   & $(\phantom{-}1,\phantom{-}0,\phantom{-}0)^{\top}$ &  $\beta(t) S\frac{I}{N}=\beta X_{1}\frac{X_{3}}{N}$
	 			\\[1em] 
	 			\hline
	 			&&&\\[-1em]
	 			2 & Recovering of  infected non-detected   & $(-1,\phantom{-}1,\phantom{-}0)^{\top}$ & $\gamma^{-} I=\gamma^{-} X_{1}$
	 			\\
	 			\hline
	 			&&&\\[-1em]
	 			3 & Losing immunity   & $(\phantom{-}0,-1,\phantom{-}1)^{\top}$ & $\rho R=\rho X_{2}$
	 			\\
	 			\hline				
	 			
	 		\end{tabular}
	 		
	 	\end{center}
	 \end{table}
 Based on the algorithm \ref{alg:simul_NHCTMC}, we can visualized the paths of this SIRS model as given in the Figure \ref{NHCTMC_SIRS_Simu} 
	  \begin{figure}[h]
	 	\centering
	 	\includegraphics[width=10cm,height=5cm]{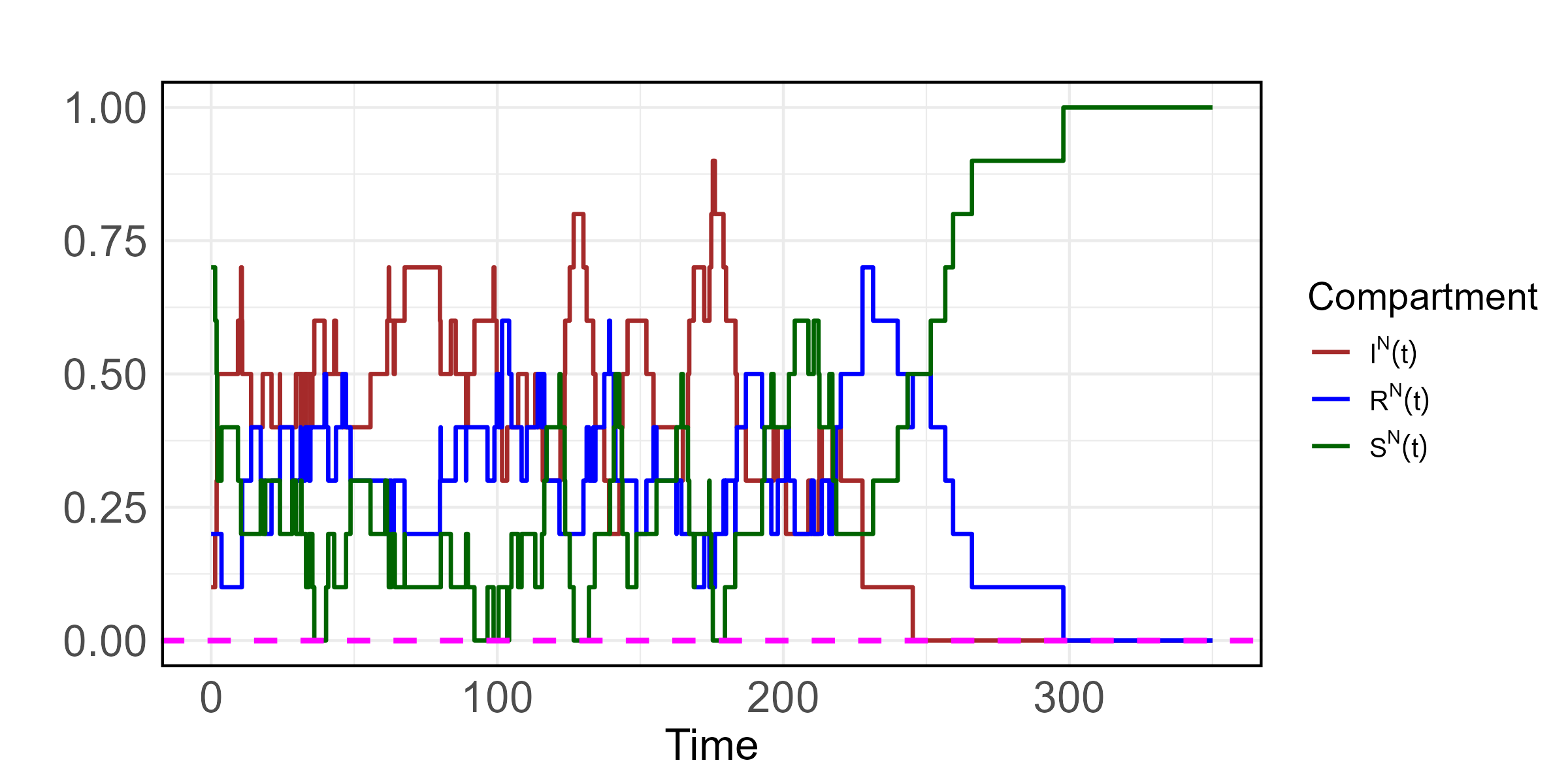}
	 	\caption{NHCTMC SIRS model Simulation, N=10}
	 	\label{NHCTMC_SIRS_Simu}
	 \end{figure} 
We should note that in Figure \ref{NHCTMC_SIRS_Simu}, a small sample size ($N=10$) was used to clearly illustrate the different jumps that occur over time. Now, focusing on the infected compartment, we consider $\overline{I}(t)$ as the solution of the deterministic model associated with the SIRS model, based on Theorem \ref{LLN_Theo}. For demonstrating the Law of Large Numbers, we will now examine five different population sizes. This will show that as the population size increases, the model, which was initially described by a CTMC chain, will gradually "converge" towards deterministic behavior. In other words, the model's output will appear progressively smoother as the population size grows larger. Formally, we simulate $10$  sample paths for the CTMC SIRS model. By progressively increasing the total population size $N$, we observe that the paths of $I^{N}(t)$ converge toward the deterministic counterpart $\overline{I}(t)$.

	 \begin{figure}[!tbp]
	 	\centering
	 	\subfloat[$ 10 $ sample paths of $I^{N}(t)$ ]{\includegraphics[width=0.48\textwidth]{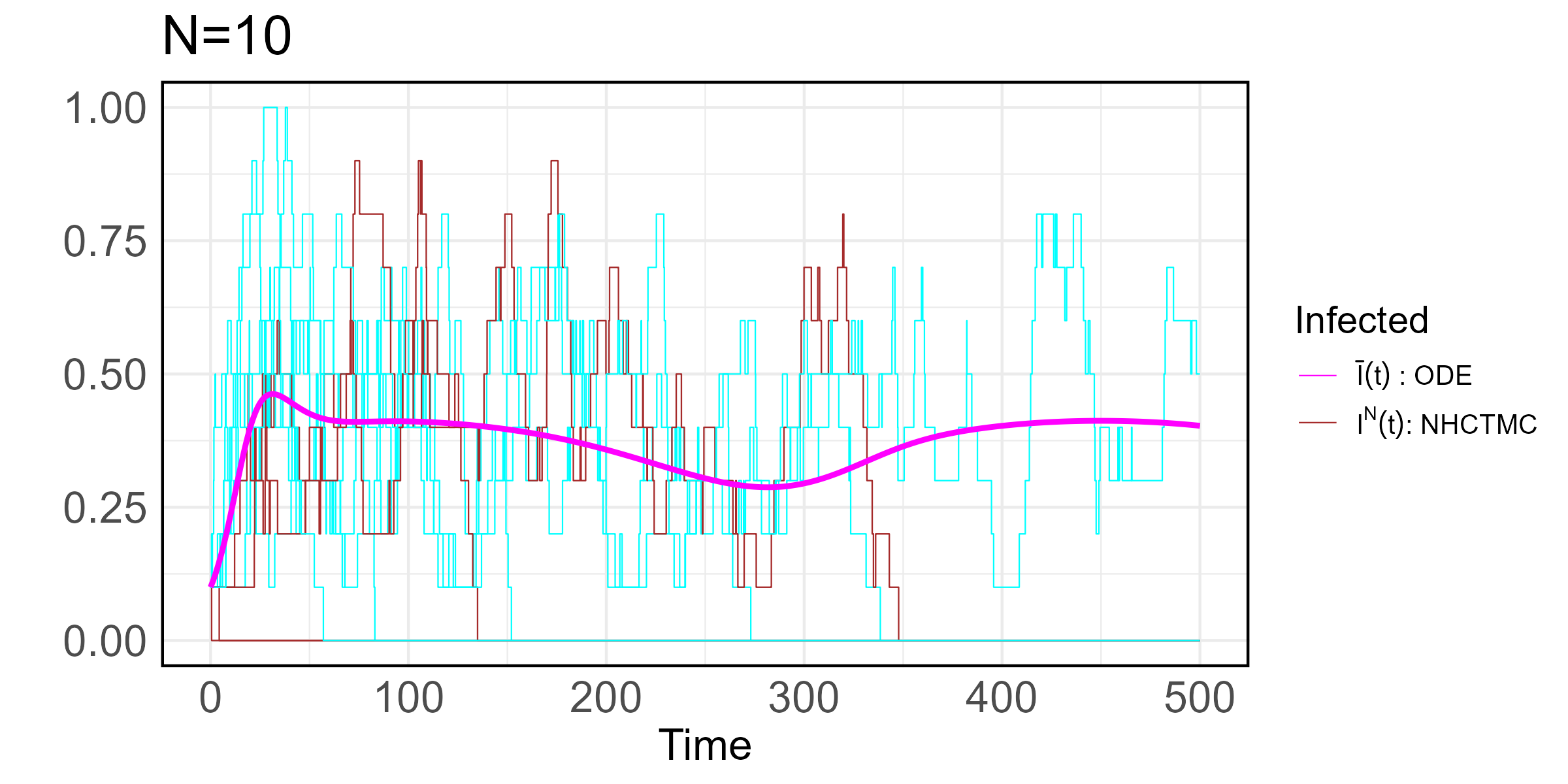}\label{fig:f5}}
	 	\hfill
	 	\subfloat[$ 10 $ sample paths of $I^{N}(t)$ ]{\includegraphics[width=0.48\textwidth]{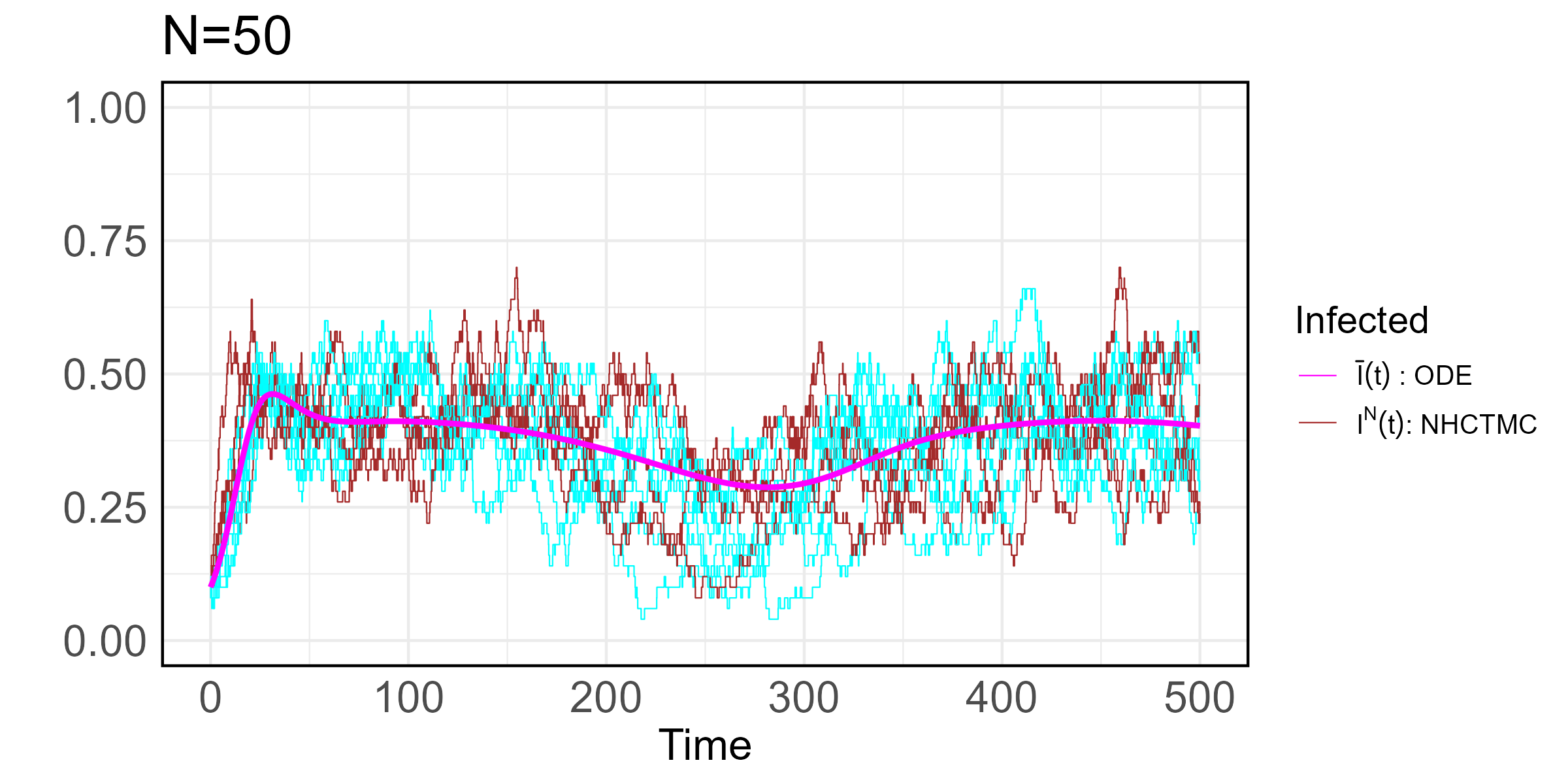}\label{fig:f6}}\\
	 	\subfloat[$ 10 $ sample paths of $I^{N}(t)$ ]{\includegraphics[width=0.48\textwidth]{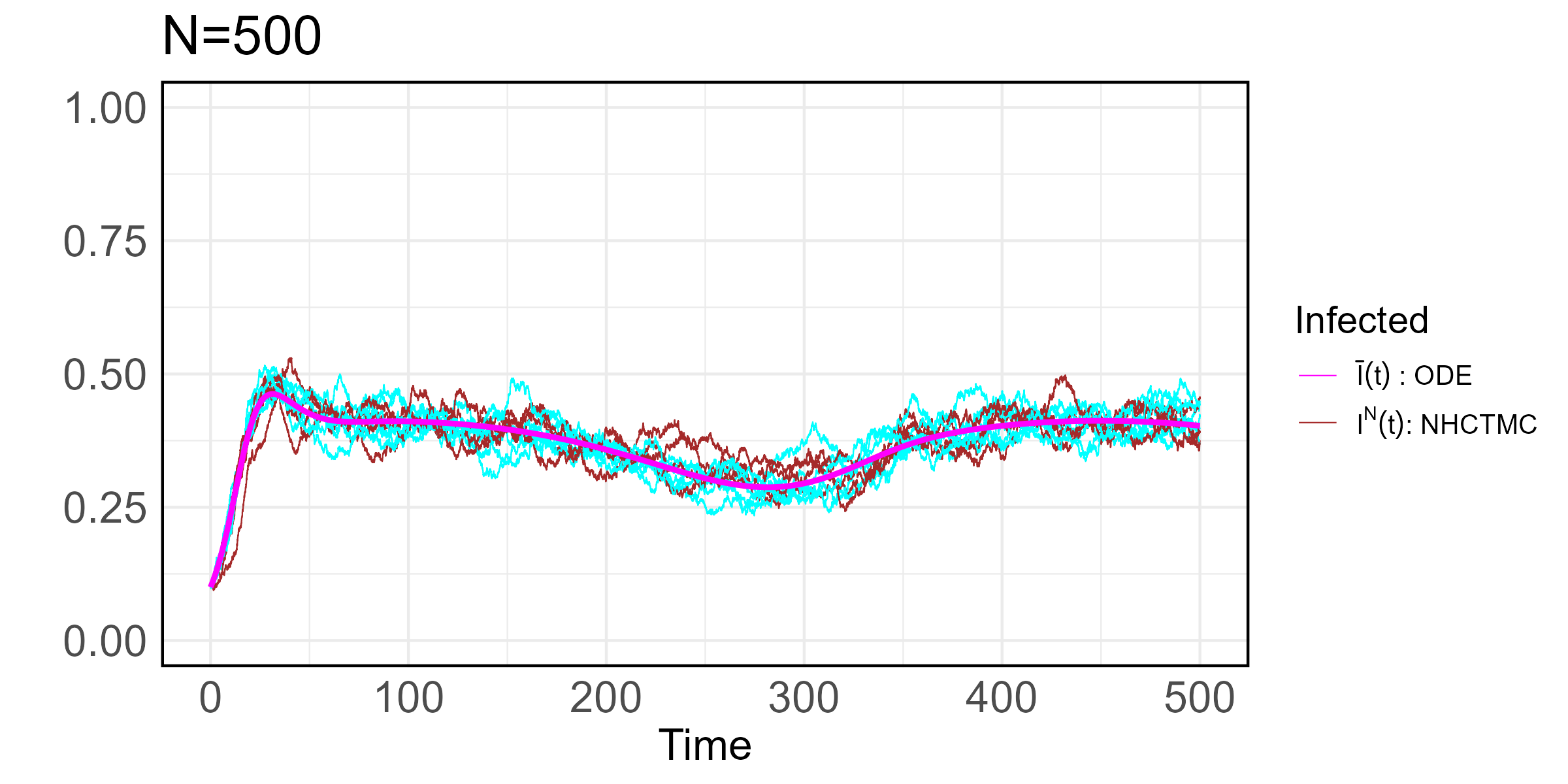}\label{fig:f7}}
	 	\hfill
	 	\subfloat[$ 10 $ sample paths of $I^{N}(t)$ ]{\includegraphics[width=0.48\textwidth]{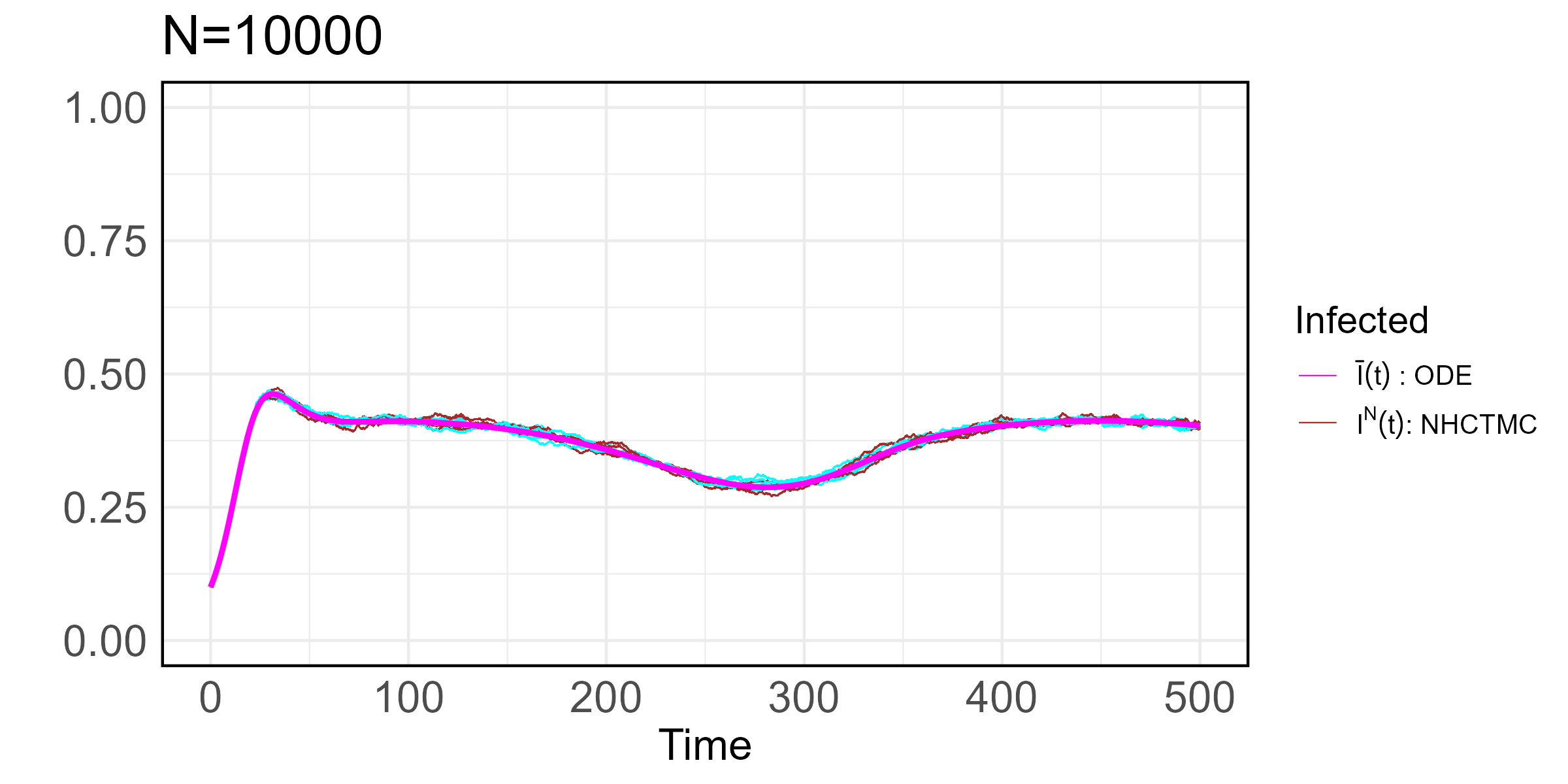}\label{fig:f8}}\\
	 	\subfloat[$ 10 $ sample paths of $I^{N}(t)$ ]{\includegraphics[width=0.48\textwidth]{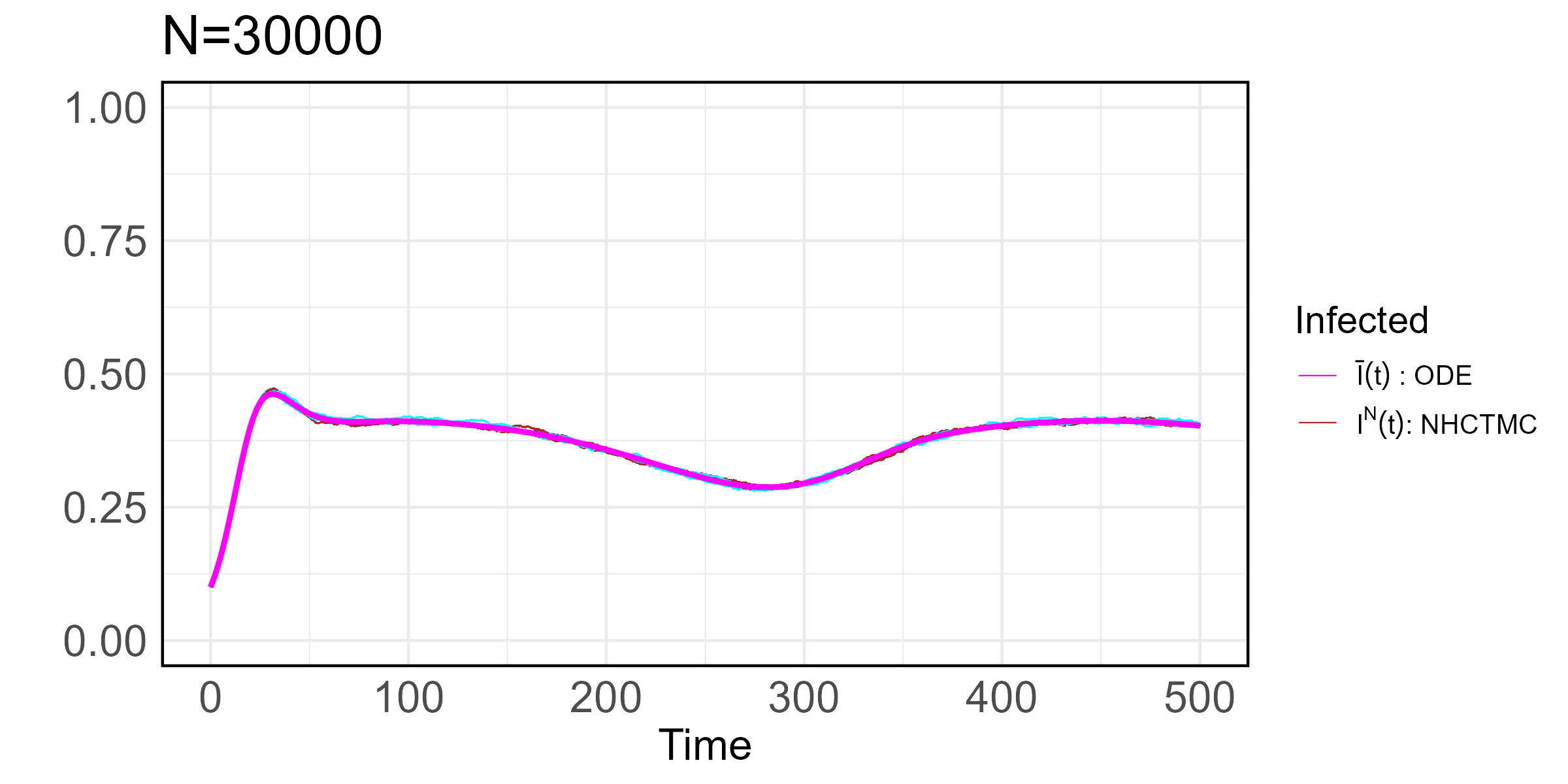}\label{fig:f70}}
	 	\caption{Illustration of the Law of Large Numbers (LLN) using the SIRS model: as the population size $N$ increases, all paths from CTMC  becomes progressively smoother and converge to $\overline{I}(t)$.}
	 \end{figure} 
 
The concept of the CLT becomes relevant when examining deviations around the ODE limit based on the corresponding CTMC model. For illustration, we consider the Gaussian process defined by Equation \eqref{Gaussian_Proc}. Using the CTMC SIRS model, we simulate $10$ sample paths for $I^{N}$ and compute the quantity $\sqrt{N}(I^{N}- \overline{I}(t))$. As the population size increases, we observe the convergence of the CTMC model to the deterministic model. Simultaneously, the variable defined by \eqref{Gaussian_Proc} ($\sqrt{N}(I^{N}- \overline{I}(t))$) according to the  process $I^{N}$ involved in the model, starting from $0$,  consistently fluctuates around $0$.
	\begin{figure}[!tbp]
		\centering
		\subfloat[$ 10 $ sample paths of $I^{N}(t)$ ]{\includegraphics[width=0.48\textwidth]{LLN_NHCTMC_N=10.png}\label{fig:f51}}
		\hfill
		\subfloat[$ 10 $ sample paths of $\sqrt{N}(I^{N}- \overline{I}(t))$]{\includegraphics[width=0.48\textwidth]{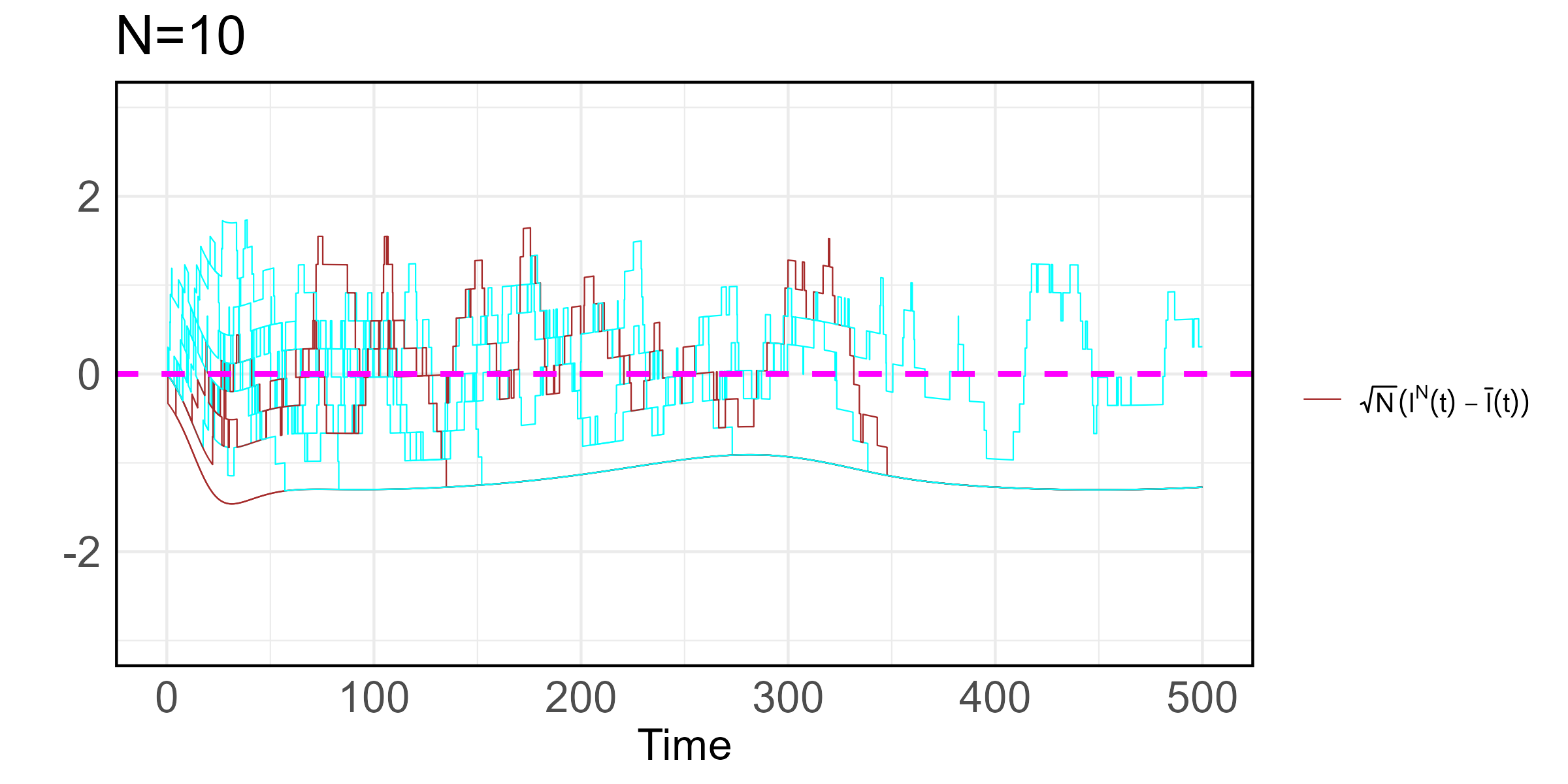}\label{fig:f61}}\\
		\subfloat[$ 10 $ sample paths of $I^{N}(t)$ ]{\includegraphics[width=0.48\textwidth]{LLN_NHCTMC_N=50.png}\label{fig:f511}}
		\hfill
		\subfloat[$ 10 $ sample paths of $\sqrt{N}(I^{N}- \overline{I}(t))$]{\includegraphics[width=0.48\textwidth]{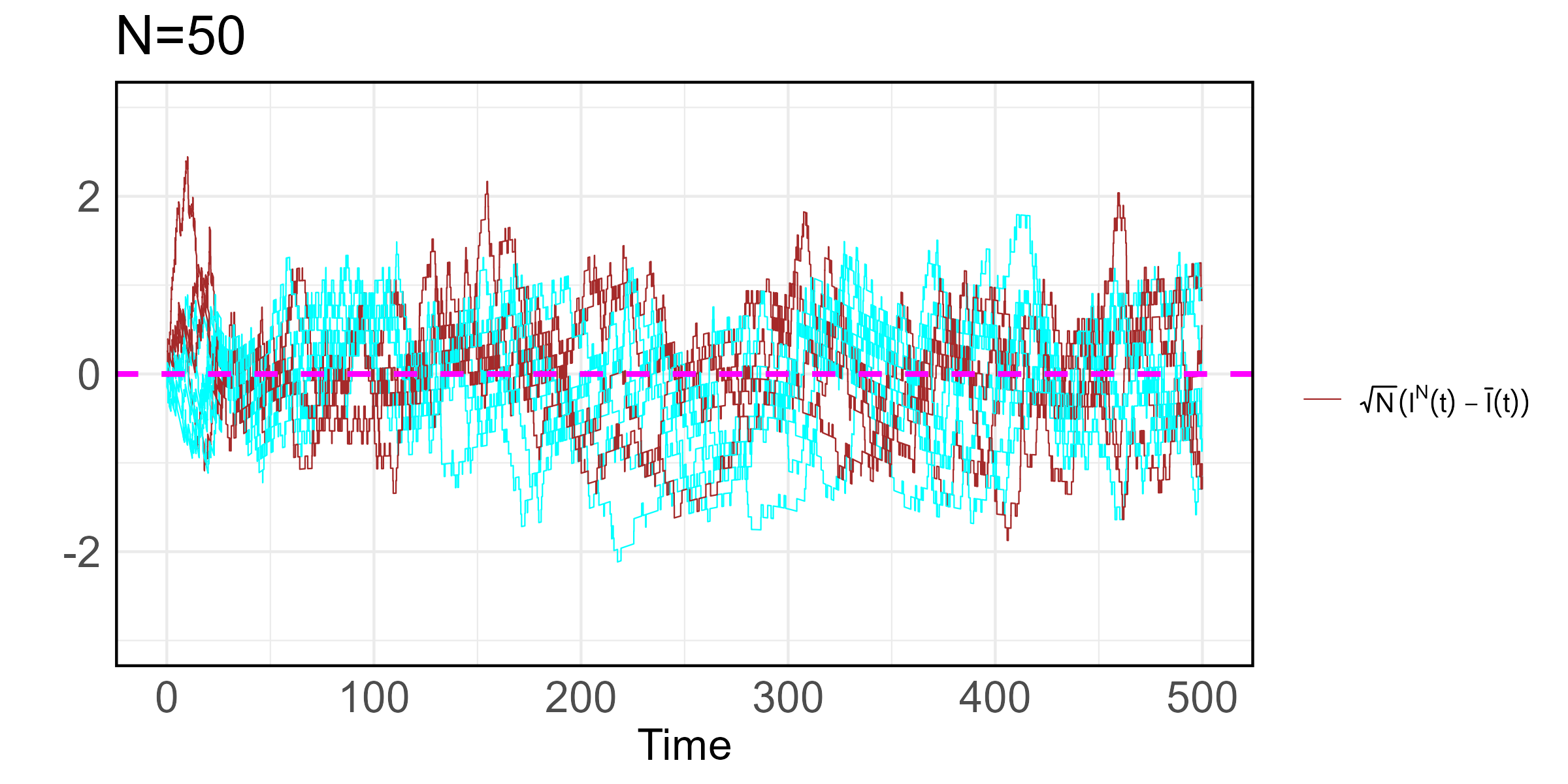}\label{fig:f611}}\\
		\subfloat[$ 10 $ sample paths of $I^{N}(t)$ ]{\includegraphics[width=0.48\textwidth]{LLN_NHCTMC_N=500.png}\label{fig:f41}}
		\hfill
		\subfloat[$ 10 $ sample paths of $\sqrt{N}(I^{N}- \overline{I}(t))$]{\includegraphics[width=0.48\textwidth]{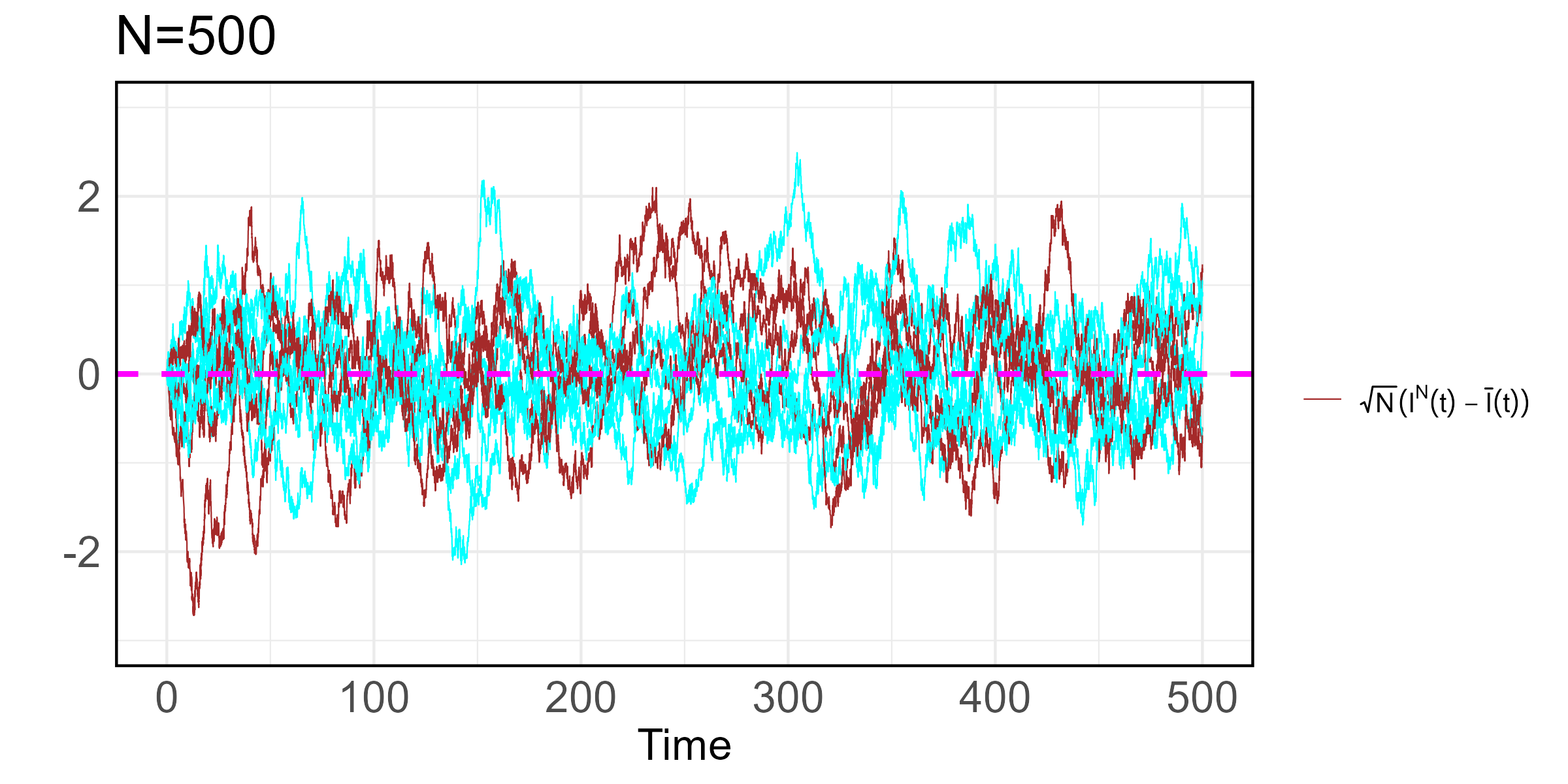}\label{fig:f81}}\\
		\subfloat[$ 10 $ sample paths of $I^{N}(t)$ ]{\includegraphics[width=0.48\textwidth]{LLN_NHCTMC_N=10000.png}\label{fig:f101}}
		\hfill
		\subfloat[$ 10 $ sample paths of $\sqrt{N}(I^{N}- \overline{I}(t))$]{\includegraphics[width=0.48\textwidth]{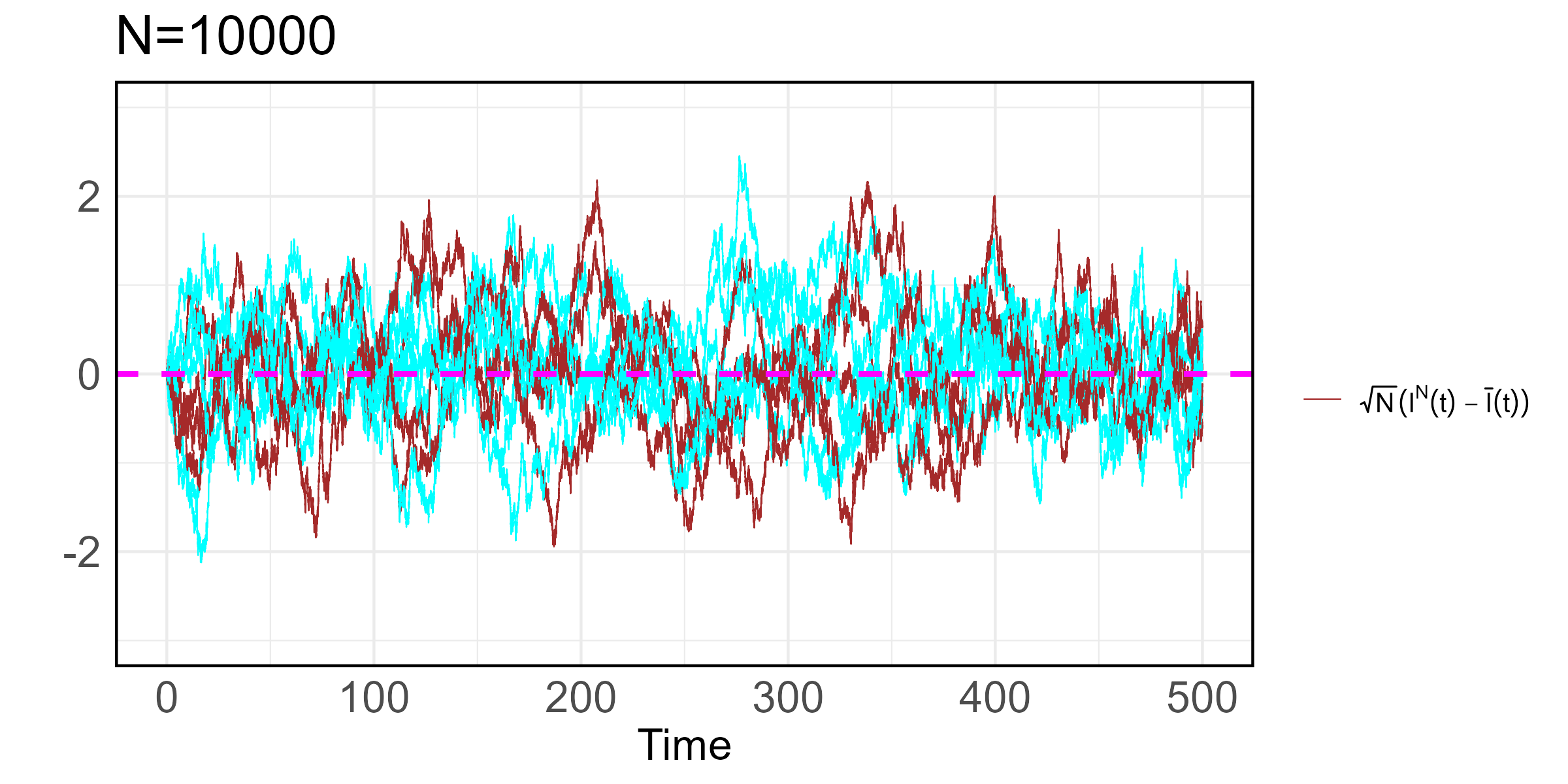}\label{fig:f91}}
			\caption{Illustration of the Central Limit Theorem (CLT) using the SIRS model: as the population size $N$ increases, all paths from CTMC  starting from zero always deviate to $0$. }
	 \end{figure}
\color{black}
	 \subsection{Diffusion Approximation}\label{Section_Diffusion_Approx}
	 The construction of the diffusion process as an approximation of the multidimensional model, driven by continuous-time Markov chains, depends on whether the model parameters ( intensities rate) depend on time or not. In the case of homogeneous CTMCs, the construction of the approximation via a diffusion process has been studied by Britton and Pardoux ~\cite{BrittonPardoux2019}, Anderson and Kurtz \cite{anderson2011continuous}, as well as Ethier and Kurtz (\cite{ethier2009markov}, chapter 4, section 7). However, these authors examined diffusion  approximations of jump processes only in the case of time-homogeneous transition rates. To account for time dependence in the approximation, Guy et al. (2015,~\cite{guy2015approximation}) relied on the work of Ethier and Kurtz, but using a different technique as their time change Poisson processes representation to demonstrate the central limit theorem. In addition, they demonstrated that the jump process and the diffusion both converge to the same Gaussian process. In their generalization, they include systems that are non-homogeneous in time, thus allowing the treatment of complex epidemic models, especially those dependent on time (see Appendix A3. \cite{guy2015approximation}). \\ 
	 
	 Now we recall the multi-dimensional CTMC describing in one hand the dynamic of the absolute  sub-population size in each compartment, $X(t)$ and the other hand the dynamic of the relative sub-population size in each compartment, $\overline{X}^{N}(t)$. 
	 \begin{eqnarray}
	 	X(t)&=&X(0)+\sum\limits_{k=1}^{K}\xi_{k} Y_{k}\Big(\int\limits_{0}^{t}\lambda_{k}(s,X(s))ds\Big) ~~~k=1,\ldots,K \label{Eq_Diff_Approx}\\
	 	\overline{X}^{N}(t)&=&\overline{X}^{N}(0)+\sum\limits_{k=1}^{K}\frac{1}{N}\xi_{k} Y_{k}\Big(\int\limits_{0}^{t}N\nu_{k}(s,\overline{X}^{N}(s))ds\Big) ~~~k=1,\ldots,K
	 \end{eqnarray}
	 Now based on the above mentioned additional  assumption [\cite{guy2015approximation},\cite{guy2016approximation}] we obtain the following diffusion approximations
	 \begin{eqnarray}
	 	d\overline{X}^{D}(t)&=&\overline{F}(t,\overline{X}^{D}(t))dt~~+~~\textcolor{black}{\dfrac{1}{\sqrt{N}}}\overline{\sigma}(t,\overline{X}^{D}(t))dW(t),~ \overline{X}^D(0)=\overline{X}^{N}(0)=\overline{x}_0
	 \end{eqnarray}
	 
	 The drift coefficient is given by $\overline{F}(\overline{X}^{N})=\sum\limits_{k=1}^{K}\xi_{k}\nu_{k}(\overline{X}^{N})$, and the diffusion coefficient $\overline{\sigma}(\overline{X}^{N})=(\xi_{1}\sqrt{\nu_{1}(\overline{X}^{N})},\ldots,\xi_{K}\sqrt{\nu_{K}(\overline{X}^{N})})$. This diffusion approximation is  based on context of relative sub-population size.\\
	 When we consider the approximation in term of absolute sub-population size we obtain
	 \begin{align}\label{DA1}
	 	dX^D(t)=F(t,X^D(t))dt+\sigma(t,X^D(t))dW(t), ~ X^D(0)=X(0)=x_0
	 \end{align}
	 with  drift coefficient $ \displaystyle   F(X^D)=\sum_{k=1}^K \xi_k \lambda_k (X^D) $,   the diffusion  coefficient $\sigma(X^D)=\big(  \xi_1 \sqrt{\lambda_1(X^D)},$\\$ \xi_2 \sqrt{\lambda_2(X^D)},\ldots,\xi_K \sqrt{\lambda_K(X^D)}\big)$. In the both contexts (relative and absolute  sub-population size) we have  $K$ independent standard  Brownian motions $W_1,\ldots, W_K$,  such  that\\ $W=(W_1,\ldots, W_K)^T$.\\
	 We assume that this  diffusion approximation can be written as
	 \begin{align}
	 	dY&=\overline  f(t,Y,Z)dt  +~\overline  \sigma (t,Y,Z)\color{black}{dW^1} \color{black}{+\overline g(t,Y,Z)} \color{black}{dW^2} \label{Eq_Diff_Y1}\\[1ex]
	 	{dZ}&= \overline  h(t,y,Z)dt +~ \overline \ell(t,Y,Z)\color{black}{dW^2} \label{Eq_Diff_Z1}
	 \end{align}
 We apply Euler-Maruyama  time discretization to \eqref{Eq_Diff_Y1}-\eqref{Eq_Diff_Z1} by setting $t_n=n\Delta t$, $n=0,1,\ldots, T-1$.  
	 \begin{align}
	 	Y_{n+1}&= Y_{n}~+  f(n,Y_n,Z_n) +~ \sigma (n,Y_n,Z_n){\color{black}\mathcal{E}^1_{n+1}} 
	 	+ g(n,Y_n,Z_n) \color{black}{ \mathcal{E}^2_{n+1}}\\[1ex]
	 	{Z_{n+1}}&= Z_n~+ h_0(n,Y_n,Z_n) + ~\ell(n,Y_n,Z_n)\color{black}{\mathcal{E}^2_{n+1}}
	 \end{align}
	 
	 where $(\mathcal{E}^1_n), (\mathcal{E}^2_n)$ are independent sequences of i.i.d.  $\mathcal{N}(0,\mathbb{I}_{})$ random vectors.
	 \section{Models Examples}
	 
	 In the following section, we introduce three important epidemic models : the measles epidemic model, the $SI^{\pm}R$ model and the \covid model. The measles model captures the dynamics of an infectious disease characterized by non-observed compartment, life-long immunity following recovery or vaccination. This is a classic example of the study of diseases with known modes of transmission. On the other hand, the $SI^{\pm}R$ model represents a more flexible framework, incorporating both asymptomatic individuals who are not directly observed and those who are detected. This particularity of not directly observing individuals in a given compartment makes the model particularly useful for complex epidemic scenarios. Finally, \covid disease presents challenges such as varying immunity duration, the presence of asymptomatic carriers, and a significant proportion of non-detected cases. In this \covid model we account for these complexities, including partial immunity from infection or vaccination, and dynamic vaccination campaigns to control outbreaks.  All these models will be studied to highlight their relevance in understanding disease spread and control.
	 \subsection{$SI^{\pm}RS$  Model}
	 We consider a stochastic SIRS epidemic model, illustrated in the Figure  \ref{SImpRS_Model}. In this example we assume  constant total population size $N$. The SIRS epidemic model, consist of  splitting  the population into four compartments : susceptible $S$, infected $ I $ and recovered $ R) $. Here, $S$ represent  the number of individuals who are susceptible to the disease, meaning they can be infected if they come into contact with an infectious person from $I$ who is currently infected with the disease and can transmit it to susceptible people. And finally $R$ are individuals who have recovered from the disease and developed immunity. They are no longer susceptible to contracting the disease, but they can lose their immunity and become susceptible again.
	 
	 Additionally, the SIRS model  can be extend to $SI^{\pm}RS$  Model which allows for the differentiation between observable and hidden states of infection, aligning better with real-world scenarios. In fact, during epidemics such as \covid  not all cases are diagnosed.   Not all infected individuals are detected, as some may be asymptomatic or refuse testing, while others test positive and are detected as infected. To account for this, we divide the compartment $(I)$ into two parts: one for non-detected infected  individuals $I^{-}$ and another for detected infected individuals $I^{+}$. The dynamics of the $SI^{\pm}RS$ model (Figure \ref{SImpRS_Model} ) are governed by a set of counting processes that describe the transitions between these compartments. The main features of the model are summarized in the following  table  : 
	 \begin{table}[h]
	 	\caption{$SI^{\pm}RS$ model : State process $X=(X_1,X_2,X_3,X_4)=(I^-,R,S,I^+)$; Total number of states $ d= 4 $, Total transitions $ K = 5 $.}
	 	\label{Table_Info_SIpmRS_model}
	 	\begin{center}
	 		
	 		\setlength{\tabcolsep}{5pt}
	 		\begin{tabular}{l|l|c|c}
	 			\hline
	 			k & Transition  &Transition vectors $\xi_{k}$& intensity $\lambda_{i}(X)$ \\ 
	 			\hline
	 			&&&\\[-1em]
	 			1& Infection of  susceptible   & $(-1,\phantom{-}0,\phantom{-}1,\phantom{-}0)^{\top}$ &  $\beta S\frac{I^{-}}{N}=\beta X_{1}\frac{X_{3}}{N}$
	 			\\[1em] 
	 			\hline
	 			&&&\\[-1em]
	 			2 & Recovering of  infected non-detected   & $(-1,\phantom{-}1,\phantom{-}0,\phantom{-}0)^{\top}$ & $\gamma^{-} I=\gamma^{-} X_{1}$
	 			\\
	 			\hline
	 			&&&\\[-1em]
	 			3 & Recovering of  infected detected   & $(\phantom{-}0,\phantom{-}1,\phantom{-}0,-1)^{\top}$ & $\gamma^{+} I^{+}=\gamma^{+} X_{4}$
	 			\\
	 			\hline
	 			&&&\\[-1em]
	 			4 & Test of  infected non-detected   & $(-1,\phantom{-}0,\phantom{-}0,\phantom{-}1)^{\top}$ & $\alpha I^{-}=\alpha X_{1}$
	 			\\
	 			\hline
	 			&&&\\[-1em]
	 			5 & Losing immunity   & $(\phantom{-}0,-1,\phantom{-}1,\phantom{-}0)^{\top}$ & $\rho R=\rho X_{2}$
	 			\\
	 			\hline				
	 			
	 		\end{tabular}
	 		
	 	\end{center}
	 \end{table}

	 \begin{figure}[h]
	 	\begin{center}
	 		\includegraphics[width=0.5\textwidth,height=0.4\textwidth]{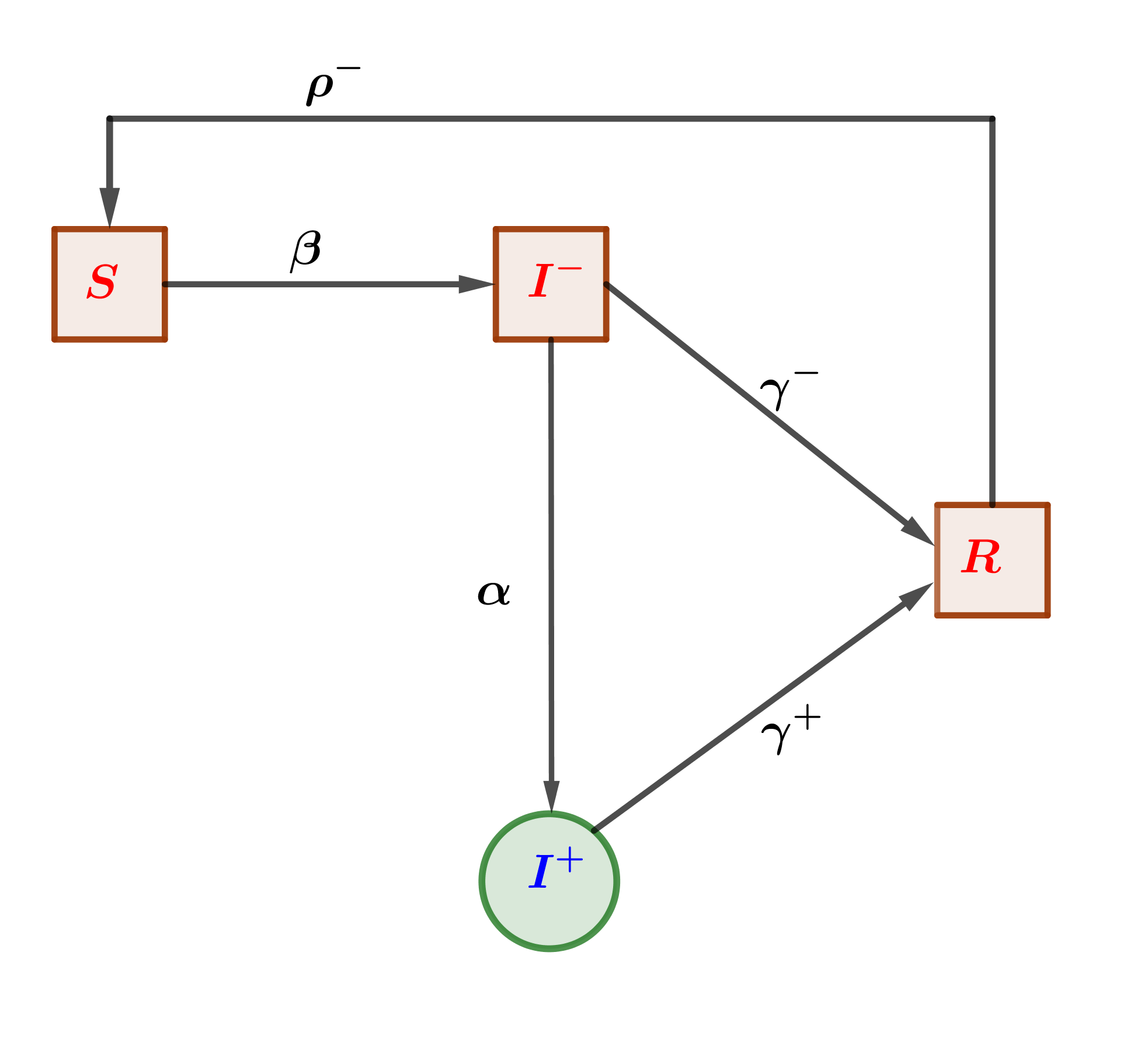}
	 		\caption{$SI^{\pm}R$ Model  with different transitions}
	 		\label{SImpRS_Model}
	 	\end{center}
	 \end{figure}
	 The $SI^{\pm}RS$ model allows for the possibility that individuals lose their immunity over time, making them susceptible to reinfection.
	 The dynamic of this model can be modeled using CTMC denoted as $X(t)=(X_{1}(t),X_{2}(t),X_{3}(t),X_{4}(t))$, where $X_{1}(t)=I^{-}(t),X_{2}(t)=R(t),X_{3}(t)=S(t),X_{4}(t)=I^{+}(t)$ and $X_{1}(t)+X_{2}(t)+X_{3}(t) +X_{4}(t)=N$. In this model we have the total number of compartment $d=4$. The model's dynamic  evolution is based on  five distinct possible transitions, each characterized by transition vectors $\xi_{k}$, $k=1,\ldots,5$.
	 The changes in the sizes of each compartment are assumed to follow the dynamics described by the following SDEs.
	 Let us recall the Figure (\ref{Measles_Model}), which illustrate the transitions of individuals from one state to another. In this case the state process is given as, $X(t)=(Y,Z)$ with $Y=(I^{-},R,S)$ and $Z=I^{+}$  and  the corresponding diffusion  approximation is of the form
	  \[dX^D(t)=F(t,X(t))dt+\sigma(t,X(t))dW(t),\]
	 where
	 
	 {\footnotesize 
	 $F(t,X(t))= \left(\begin{array}{c}		
	 	\beta\frac{Y_1Y_3}{N} - (\alpha + \gamma^{-})Y_1\\
	 	\gamma^{-}Y_1 +\gamma^{+} Y_2 - \rho^{-} Y_2\\
	 	-\beta\frac{Y_1Y_3}{N} +\rho^{-}Y_2 \\ [0.4ex]
	 	\alpha Y_1 -\gamma^{+}Z_1\\
	 \end{array} \right)$,~ $\sigma(t,X(t))= \left(\begin{array}{ccccc@{\hspace*{-0.0em}}}			
	 \sqrt{\beta\frac{Y_1Y_3}{N}} & -\sqrt{\gamma^{-}Y_1} & 0 & -\sqrt{\alpha Y_1} & 0\\
	 0 & \sqrt{\gamma^{-}Y_1} & -\sqrt{\rho^{-} Y_2} & 0 & \sqrt{\gamma^{+}Z_1}\\
	 -\sqrt{\beta\frac{Y_1Y_3}{N}} & 0 & \sqrt{\rho^{-} Y_2} & 0 & 0\\
	 0 & 0 & 0 & \sqrt{\alpha Y_1} & -\sqrt{\gamma^{+}Z_1}\\
 \end{array} \right).$
}

	 This  diffusion approximation can be written as
	 
	 \begin{align*}
	 	dY&=\overline  f(t,Y,Z)dt  +~\overline  \sigma (t,Y,Z)\color{black}{dW^1} \color{black}{+\overline g(t,Y,Z)} \color{black}{dW^2}\\[1ex]
	 	{dZ}&= [\overline  h_0(t,Z)+\overline  h_1(t,Z)Y]dt +~ \overline \ell(t,Y,Z)\color{black}{dW^2} 
	 \end{align*}
	 where  the first equation represents is the SDE for the hidden sate $Y$, the second equation is the SDE for the observable state.
	 The coefficients $ \overline{f}, ~ \overline{\sigma},~ \overline{g} ,~ \overline{\ell} $  are  non-linear in the  hidden  state $Y$ and given as follow\\
	 \begin{align}
	 	\overline{f}(t,Y,Z)&=\left(\begin{array}{c}		
	 		\beta\frac{Y_1Y_3}{N}-(\gamma^{-} + \alpha)Y_1\\
	 		\gamma^{-}Y_1 -\rho^{-}Y_2 +\gamma^{+}Z_1\\
	 		-\beta\frac{Y_1Y_3}{N} + \rho^{-}Y_2\\  [0.4ex]
	 	\end{array} \right),~~
	 	\overline{\sigma}(t,Y,Z)=\left(\begin{array}{ccc}
	 		\sqrt{\beta\frac{Y_1Y_3}{N}}  & -\sqrt{\gamma^{-} Y_1} & 0 \\
	 		0 & \sqrt{\gamma^{-} Y_1 } & -\sqrt{\rho^{-}Y_2}\\
	 		-\sqrt{\beta\frac{Y_1Y_3}{N}} & 0 & \sqrt{\rho^{-}Y_2} 
	 	\end{array} \right),\\
	 	\overline{g}(t,Y,Z)&=\left(\begin{array}{cc}
	 		-\sqrt{\alpha Y_1}  & 0 \\
	 		0  & \sqrt{\gamma^{+} Z_1}\\
	 		0  & 0 \\
	 	\end{array} \right), \hspace*{1.2cm}
	 	\overline{h}_{0}(t,Z)=-\gamma^{+}Z_1,\\
	 	\overline{h}_{1}(t,Z)&=\left(\begin{array}{ccc}
	 		\alpha & 0 & 0
	 	\end{array} \right), \hspace*{3.2cm}
	 	\overline{\ell}(t,Y,Z)=\left(\begin{array}{cc}
	 		\sqrt{\alpha Y_1} &-\sqrt{\gamma^{+}Z_1}
	 	\end{array} \right)
	 \end{align}

	 \subsection{Measles Model}
	 Measles, also known as rubeola, is a highly contagious viral disease caused by the Measles morbillivirus. I
	 When a person is infected, they move to the exposed compartment ($E$), and the progression from exposure to recovery can be summarized in three stages. The first stage is the incubation period, which lasts $ 7-14 $ days. During this period there is no symptoms and the individual is not yet infectious \cite{CDC2024Measles}. The second stage is characterized by the initial symptoms follow by the rash. During this period the infected individual is infectious and can transmit the virus to a susceptible person. The last stage concern the recovery stage. Here most of the people fully recovered within $ 2-3 $ weeks of the rash appearing. Knowing that   during the second stage, it is crucial to isolate the infected person to prevent spreading the virus to others and also since there is no specific treatment for the measles we will assume that detected infected individuals $I^+$ stay in quarantine and consequently have no contact with other susceptible person. Vaccination, in particular the highly effective MMR (measles, mumps and rubella) vaccine, is the most effective way of preventing measles, as the herd immunity resulting from widespread vaccination is crucial to preventing epidemics. In addition, recovering from the disease naturally grants lifelong immunity.\\
	 Many previous studies have relied on a modified SIERV-type models[\cite{dayan2023reliable},\cite{ibrahim2023stability},\cite{minta2023progress}]   to describe how measles spreads within a population. These models typically assume all compartments (susceptible, infected, exposed, recovered, vaccinated) are fully observable, which isn't always realistic because we have some compartment like the exposed compartment ($E$) whose individuals are not directly observable.  In this section we  propose a modified stochastic SEIRV-epidemic model that takes into account the challenges of unobserved compartments, offering a more accurate representation of real-world scenarios.\\
     We want to take into account some important characterization of the measles disease. The first is the number of exposed individuals who later on will developed symptoms and become infectious. The second group consists of individuals who either refuse to take the test despite showing measles symptoms or are unaware of the significance of those symptoms. These individuals pose a critical challenge in the fight to eradicate the disease because they are highly infectious. If the number of undetected and infected cases rises, it would undermine the government's efforts to eliminate the disease. We  divide the infected group into two:  $I^{-}$ the non-detected individuals, most often symptomatic and contagious, and $I^{+}$ the detected part of the infected individuals. The most common methods for confirming a measles infection include serology (blood tests) and PCR (Polymerase Chain Reaction). Serology tests involve the IgM antibody test, which detects measles-specific IgM antibodies, indicating a current or recent infection. The  PCR methods, specifically RT-PCR (Reverse Transcription PCR), detect the genetic material of the measles virus in clinical samples such as throat swabs, nasopharyngeal swabs, urine, or blood, offering high sensitivity and the ability to confirm infection even before antibodies are detectable. 
	 
	 \subsubsection{Compartments and Transitions}
	 More specifically, all possible connections with random transitions can be visualized in the Figure \ref{Measles_Model}. In addition, the parameters for the dynamics are described in the Table \ref{Table_Info_Measle_Model} and the all the transitions are given in Table \ref{Table_Info_Measle_Model}.
	 
	 \paragraph{$ E $: non-detected exposed}
	 	   These individuals are asymptomatic, not sick and remain non-detected. After some days they will develop symptoms and move to $I^-$. Furthermore, since most of the individuals in this compartment are unaware of their exposed status, with the availability of the vaccine they will receive it as they are considered susceptible and move to the vaccinated compartment $V$.
	 	   
	 \paragraph{$ I^{-} $: non-detected infected}	  
	 These individuals are symptomatic, not sick enough to go to hospital and remain non-detected. They may either get tested and then move on to $ I^{+} $,  stay isolated, fully recover and develop full immunity or stay there and recover naturally. Furthermore, due to sudden events such as conflicts and humanitarian crises, many individuals in this compartment may be unaware of their infection status despite the awareness of measles. As they develop symptoms, they will likely wait until they fully recover and then consider as recovered non detected $R^{-}$.
	 	  
	 \paragraph{$ R^{-} $ : recovered non-detected}	  
	  These individuals recovered from the disease without being informed of their infection. Since immunity is not lost over time, they will become immunized and either stay in $R^-$ or move to $V$ with the availability of the vaccine.
	  
	 \paragraph{$ S $: susceptible}
	    They can be infected by unquarantined infectious individuals, i.e. individuals in $ I^{-} $. When susceptible individuals are infected, they move to $ E $. And when they received the vaccine they move to $V$. 
	    
	  \paragraph{$I^{+}$ : detected infected}
	    People who test positive are considered infected and detected. Once they test positive, they are placed in quarantine until they fully recover and then move into the $R^+$
	 	compartment as confirmed recovered. In this model, it is assumed that infected and detected individuals $I^+$ cannot transmit the virus to others because they are effectively isolated.
	 	
	 \paragraph{$R^{+}$ : recovered detected}
	   Here we consider individuals in this compartment  are  fully immune. They have a long life immunity.
	 	
	 \paragraph{$V$ : vaccinated}
	   People considered for vaccination are susceptible $S$, exposed $E$,  and recovered non-detected $R^{-}$. Since most of these people have no symptoms, they will be vaccinated.  After vaccination, they develop full immunity against the future measles virus.
	 \begin{figure}[h]
	 	\centering
	 	\includegraphics[width=12cm,height=7cm]{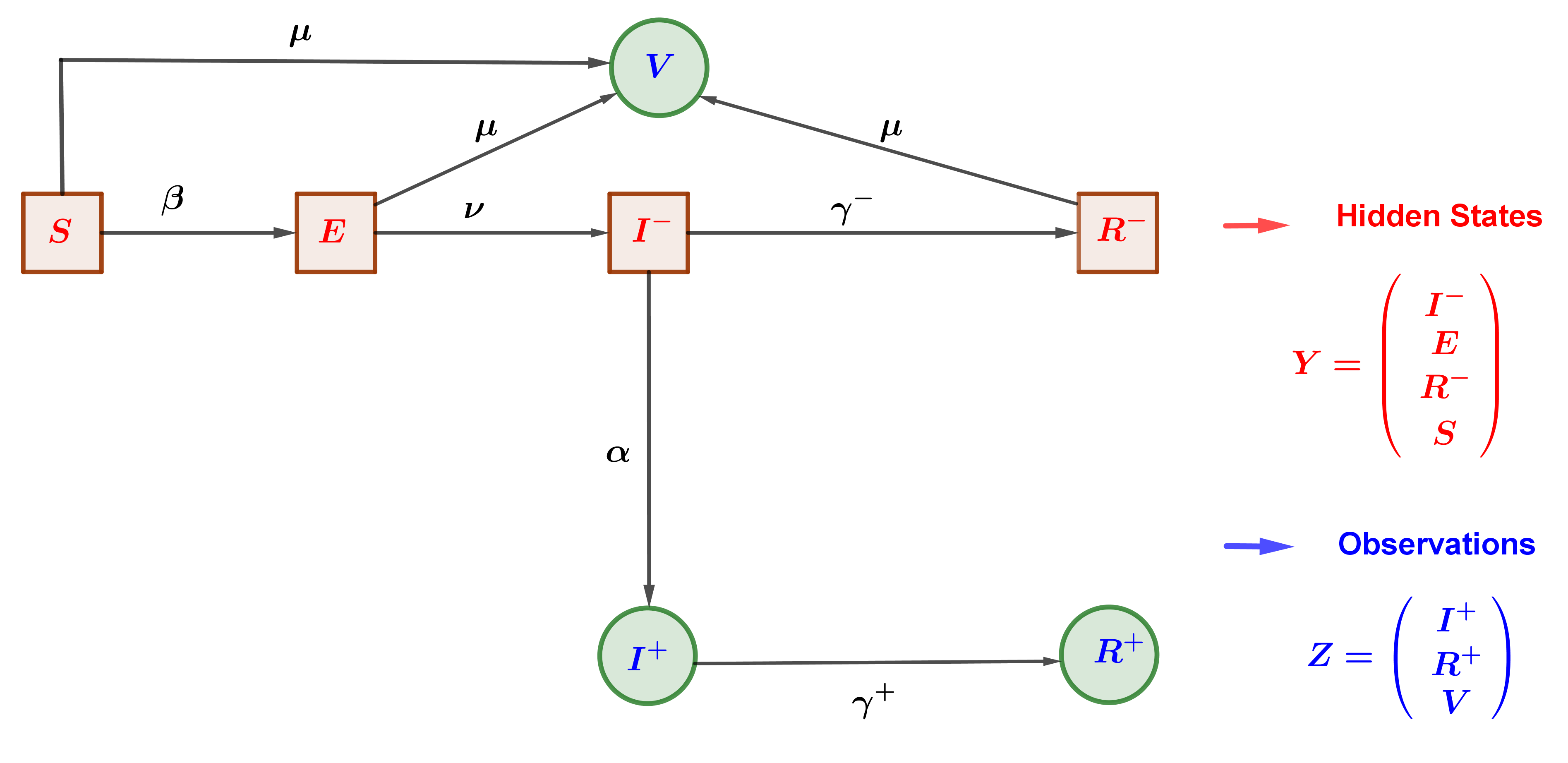}
	 	\caption{Measles model with partial information;~~
	 		4 fully hidden states : $I^-,R^{-},E,S$;~~
	 		3 observable sates :   ($I^{+},R^{+},V$)}
	 	\label{Measles_Model}
	 \end{figure} 
	 The dynamics of this model \ref{Measles_Model} can be described using CTMC-model, where the transition vectors and intensities can be summarized as follows.\\
	Consider a decomposition of the state process  $X$ into two components:  $Y$ and $Z$ such that $X=(Y,Z)$. The component $Y$ represents the hidden states, while $Z$ model the observable part. corresponds to the observable states. In the context of the measles model, we define $Y=(I^{-},R^{-},E,S)^{\top}$ which includes non directly observable variables and  $Z=(I^{+},R^{+},V)^{\top}$, including directly observable quantities.
	\newpage
	 \begin{table}[!h]
	 	
	 	\caption{Measles model : State process $X$; Total number of states $ d= 7 $, Total transitions $ K = 9 $. Some the parameters as $\beta, \mu$ and $\alpha$ might be time-dependent. }
	 	\label{Table_Info_Measle_Model}
	 	\begin{center}
	 		
	 		\setlength{\tabcolsep}{5pt}
	 		\begin{tabular}{l|l|c|c}
	 			\hline
	 			k & Transition  &Transition vectors $\xi_{k}$& intensity $\lambda_{i}(X)$ \\ 
	 			\hline
	 			&&&\\[-1em]
	 			1&  Exposed individuals become infectious   & $(\phantom{-}1,\phantom{-}0,-1,\phantom{-}0,\phantom{-}0,\phantom{-}0,\phantom{-}0)^{\top}$ &  $\upsilon E=\upsilon Y_{3}$
	 			\\
	 			\hline
	 			&&&\\[-1em]
	 			2 & Recovering of  infected detected   & $(\phantom{-}0,\phantom{-}0,\phantom{-}0,\phantom{-},\phantom{-}0,-1,\phantom{-}1)^{\top}$ &  $\gamma^{+}I^{+} = \gamma^{+}Z_{1}$
	 			\\ 
	 			\hline
	 			&&&\\[-1em]
	 			3 & Recovering of  infected non-detected   & $(-1,\phantom{-}1,\phantom{-}0,\phantom{-}0,\phantom{-}0,\phantom{-}0,\phantom{-}0)^{\top}$ &  $\gamma^{-}I^{-} = \gamma^{-}X_{1}$
	 			\\ 
	 			\hline
	 			&&&\\[-1em]
	 			4 & Vaccination of non-detected infected  & $(-1,\phantom{-}0,\phantom{-}0,\phantom{-}0,\phantom{-}1,\phantom{-}0,\phantom{-}0)^{\top}$ &  $\mu I^{-} = \mu Y_{1}$
	 			\\ 
	 			\hline
	 			&&&\\[-1em]
	 			5 & Vaccination of non-detected recovered  & $(\phantom{-}0,-1,\phantom{-}0,\phantom{-}0,\phantom{-}1,\phantom{-}0,\phantom{-}0)^{\top}$ &  $\mu I^{-} = \mu Y_{1}$
	 			\\
	 			\hline
	 			&&&\\[-1em]
	 			6 & Vaccination of susceptible  & $(\phantom{-}0,\phantom{-}0,\phantom{-}0,-1,\phantom{-}1,\phantom{-}0,\phantom{-}0)^{\top}$ &  $\mu S = \mu Y_{4}$
	 			\\
	 			\hline
	 			&&&\\[-1em]
	 			7 & Infection  & $(\phantom{-}0,\phantom{-}0,\phantom{-}1,-1,\phantom{-}0,\phantom{-}0,\phantom{-}0)^{\top}$ &  $\frac{\beta SI^{-1}}{N} = \frac{\beta Y_{4}Y_{1}}{N} $
	 			\\
	 			\hline
	 			8 & Test of infected non-detected  & $(-1,\phantom{-}0,\phantom{-}0,\phantom{-}0,\phantom{-}0,\phantom{-}1,\phantom{-}0)^{\top}$ &  $\alpha I^{-} = \alpha Y_{1} $
	 			\\
	 			\hline
	 			&&&\\[-1em]
	 			9 & Vaccination of exposed individuals  & $(\phantom{-}0,\phantom{-}0,-1,\phantom{-}0,\phantom{-}1,\phantom{-}0,\phantom{-}0)^{\top}$ &  $\mu E = \mu Y_{3}$
	 			\\
	 			\hline
	 		\end{tabular}
	 		
	 	\end{center}
	 \end{table}
	 
	 The changes in the sizes of each compartment are assumed to follow the dynamics described by the following SDEs.
	Recall figure \ref{Measles_Model}, which illustrates the transitions of individuals from one state to another. In this case, the state process is given as $X(t)$ and the corresponding diffusion approximation can be given in the form of \eqref{DA1}, where the drift and diffusion coefficient are as follows: given the state process with respect to the hidden state \(Y\) and the observable state \(Z\), we obtain \\
	 {\scriptsize
	 $
	 F(t,Y,Z)=\left(\begin{array}{c}		
	 	\upsilon Y_3 - (\alpha + \gamma^{-})Y_1\\
	 	\gamma^{-}Y_1 -\mu Y_2\\
	 	\beta\frac{Y_1Y_4}{N}-(\mu + \nu)Y_3 \\ [0.4ex]
	 	-\beta\frac{Y_1Y_4}{N}-\nu Y_4 \\ [0.4ex]
	 	(Y_2 + Y_3 + Y_4)\mu\\
	 	\alpha Y_1 -\gamma^{+}Z_2\\
	 	\gamma^{+}Z_2
	 \end{array} \right) $ and 
	 $\sigma(t,Y,Z)=\left(\begin{array}{cccccccc@{\hspace*{-0.0em}}}			
	 	\sqrt{\upsilon Y_3 } & -\sqrt{\gamma^{-}Y_1 } & 0 &-\sqrt{\alpha Y_1 }&0&0&0&0\\
	 	0& \sqrt{\gamma^{-}Y_1 }&0&0&-\sqrt{\mu Y_2}&0&0&0\\
	 	\hspace*{-1em}-\sqrt{\nu Y_3 } & 0&\sqrt{\beta\frac{Y_1Y_4}{N}}&0&0&-\sqrt{\mu Y_3}&0&0\\
	 	\hspace*{-1em}0& 0&-\sqrt{\beta\frac{Y_1Y_4}{N}}&0&0&0&-\sqrt{\mu Y_4}&0\\
	 	\hspace*{-1em}0& 0&0&0&\sqrt{\mu Y_2}&\sqrt{\mu Y_3}&\sqrt{\mu Y_4}&0\\
	 	\hspace*{-1em}0& 0&0&\sqrt{\alpha Y_1}&0&0&0&-\sqrt{\gamma^{+}Z_2}\\
	 	\hspace*{-1em}0& 0&0&0&0&0&0&\sqrt{\gamma^{+}Z_2}
	 \end{array}. \right)
 $
 }

The obtained  diffusion approximation, can be written as

\begin{align*}
	dY&=\overline  f(t,Y,Z)dt  +~\overline  \sigma (t,Y,Z)\color{black}{dW^1} \color{black}{+\overline g(t,Y,Z)} \color{black}{dW^2}\\[1ex]
	{dZ}&= [\overline  h_0(t,Z)+\overline  h_1(t,Z)Y]dt +~ \overline \ell(t,Y,Z)\color{black}{dW^2} 
\end{align*}
where  the first equation represents the SDE  for the hidden state $Y$, the second equation is the SDE  for the observable state.
The coefficients $ \overline{f}, ~ \overline{\sigma},~ \overline{g} ,~ \overline{\ell} $  are  non-linear in the  hidden  state $Y$ and given as follow\\
\begin{align*}
 \overline{f}(t,Y,Z)&=\left(\begin{array}{c}		
 	\upsilon Y_3 - (\alpha + \gamma^{-})Y_1\\
 	\gamma^{-}Y_1 -\mu Y_2\\
 	\beta\frac{Y_1Y_4}{N}-(\mu + \nu)Y_3 \\ [0.4ex]
 	-\beta\frac{Y_1Y_4}{N}-\upsilon Y_4 \\ [0.4ex]
 \end{array} \right),
\overline{\sigma}(t,Y,Z)=\left(\begin{array}{ccc}
	\sqrt{\nu Y_3}  & -\sqrt{\gamma^{-} Y_1} & 0 \\
	0 & \sqrt{\gamma^{-} Y_1 } & 0\\
\sqrt{-\upsilon Y_3}  & 0 & \sqrt{\beta\frac{Y_4Y_1}{N}} \\
0 & 0 & -\sqrt{\beta\frac{Y_4Y_1}{N}} 
\end{array} \right),\\[0.9ex]
\overline{g}(t,Y,Z)&=\left(\begin{array}{ccccc}
	-\sqrt{\alpha Y_1}  & 0 & 0 &0&0\\
	0  & -\sqrt{\mu Y_2} & 0 &0&0\\
0  & 0 & -\sqrt{\mu Y_3} &0&0\\
0 & 0 & 0 &-\sqrt{\mu Y_4}&0
\end{array} \right),
\overline{h}_{0}(t,Z)=\left(\begin{array}{c}
	0\\[0.7ex]
	-\gamma^{+}Z_2\\[0.7ex]
	\gamma^{+}Z_2
\end{array} \right),\\[0.9ex]
 \overline{h}_{1}(t,Z)&=\left(\begin{array}{cccc}
	0&\mu & \mu &\mu \\
	\alpha & 0 & 0&0\\
	0&0&0&0
\end{array} \right),
\overline{\ell}(t,Y,Z)=\left(\begin{array}{ccccc}
	0 & \sqrt{\mu Y_2} & \sqrt{\mu Y_3} & \sqrt{\mu Y_4} & 0\\
	\sqrt{\alpha Y_1} & 0 & 0 & 0 & -\sqrt{\gamma^{+}Z_2}\\
	0 & 0 & 0 & 0 & \sqrt{\gamma^{+}Z_2}
\end{array} \right).
\end{align*}

	 \subsection{\covid Model}
	 In this section, we are going to proposed a mathematical model of  the dynamics of the \covid pandemic.
	 We wish to take into account three important characteristics during the \covid pandemic.The first is the high proportion of asymptomatic patients, and the low rate of use of tests in many countries, and the second is the category of individuals who develop symptoms of the disease and are confirmed to be positive for the disease unofficially (i.e. through rapid tests obtained from local stores), but for one reason or another refuse to go to a testing center to confirm whether or not they are infected. The latter is thus a crucial problem in the fight to eradicate the disease, because if their numbers were to increase, we would be facing an outbreak of infectious individuals, which would compromise the government's efforts to eradicate the disease.  We will divide the infected group $ (I) $ into two: $I^{-}$ the  non-detected individuals, most often asymptomatic but contagious, and $I^{+}$ the detected part of the infected individuals. During the \covid pandemic, three main types of tests have been widely used for diagnosing the virus. \\
	 The first two on the list are tests that can be used to confirm the presence or absence of the \covid virus. These are PCR tests (Polymerase Chain Reaction): PCR tests are considered the gold standard for \covid diagnosis. They detect the virus' genetic material and are highly accurate. However, they can take from a few hours to a few days to produce results, depending on laboratory capacity and test volume. In addition, we have antigenic tests. These tests detect specific proteins on the surface of the virus. They are faster and less expensive than PCR tests, but may be slightly less sensitive. Antigenic tests are often used for rapid screening in settings where immediate results are needed, such as workplaces, schools and community testing sites. Results can be available in less than 15 to 30 minutes. The third is an antibody test. They detect the antibodies produced by the body in response to an infection with \covid. These tests can indicate whether a person has already been infected with the virus, i.e. they are used to detect whether a person has previously recovered from a \covid infection. Unfortunately, antibody tests could not provide information on immunity levels or protection against reinfection \cite{FDA}, limiting their usefulness for public health decision-making. Consequently, the focus shifted to tests that could diagnose active infections, such as PCR and antigenic tests, which were more important for controlling the spread of the virus. \\
	 When a given person is tested as a positiv individual, he moves from the $I^{-}$ to $I^{+}$. Additionaly, we also split the recovered compartment into two different compartments : $R^{-}$ the non detected recovered and  $R^{+}$ the dedected recovered. $R^{-}$ collects recovered individuals from $I^{-}$ while $R^{+}$ those recovered from $I^{+}$ 
	  An important feature of a pandemic such as \covid is the problem highlighted by the concept of "flattening the curve" (discussed in \cite{qualls2017community} and \cite{ferguson2020impact}), which means that one of the objectives of choosing a public health intervention should be to avoid an increase in demand in the healthcare system, and in particular in intensive care units. From a modeling point of view, we incorporate several compartments to represent different population groups based on their interactions with the healthcare system and vaccination status.
	  First, we introduce the compartments  $ H $ (hospitalized individuals) and $ C $ (individuals requiring intensive care units, ICUs). These compartments allow us to distinguish between those utilizing healthcare services based on the severity of their illness. The $C$ compartment captures critically ill patients who require ICU support, reflecting the strain on critical care resources.
	  In addition, due to the availability of vaccines, we include a vaccinated compartment $V$.  This compartment aggregates all individuals who have received at least one dose of the vaccine, regardless of the number of doses or the vaccine type administered. This simplifies the model by not distinguishing between partially and fully vaccinated individuals, focusing instead on the collective impact of vaccination on disease dynamics. As none of the available \covid vaccines (i.e, mRNA Vaccines: mRNA vaccines, such as the Pfizer-BioNTech, Moderna vaccines,...etc.) offer lifelong immunity, it's possible for vaccinated individuals to lose their immunity over time and become susceptible.
	  We also assume that in this model, deaths due to \covid occur exclusively among individuals in the $C$ compartment, simplifying the representation of fatality risks. Finally, the $D$ (death) compartment accounts for those who succumb to the disease.
	  This structure captures key aspects of healthcare system interaction, vaccination impact, and disease outcomes, offering a framework to analyze epidemic dynamics and potential policy interventions.

	 \subsubsection{Compartments and Transitions}
	 More specifically, all possible states connected with random transition can be visualized on figure \ref{ch4_Model SIRpm_7}.

	 \paragraph{$ I^{-} $: non-detected infected}
	    These individuals may be asymptomatic, not sick enough to go to hospital and remain undetected. They may either get tested (with PCR tests, then move on to $ I^{+} $), become sicker and go to hospital $ H $, or simply recover and develop partial immunity,  but remain non-detected $R^{-}$ after loosing immunity they will become susceptible. 
	    Additionally, many individuals in this compartment are either unaware of their infection status or choose to stay calm" and "hide" knowing they are infected. With the introduction of the vaccine, they will be vaccinated as they are classified as susceptible.
	    
	    \paragraph{ $ R^{-} $ : recovered non-detected.}
	 	 These individuals recovered from the disease which has not been  detected. Since immunity is lost over time, some will become susceptible while others will receive the vaccine.
	 	 
	 	 \paragraph{ $ S $: susceptible}
	   They can be infected by unquarantined infectious individuals, i.e. individuals in $ I^{-} $. When susceptible individuals are infected, they move to $ I^{-} $. And when they received the vaccine they move to $V$. In addition, it is possible to have  been infected and then recovered from the disease and lost immunity.
	   
	   \paragraph{$I^{+}$ : infected detected}
	  People who tested positive but are not hospitalized are considered infected and detected. 
	 	They have two options: either they get worse and have to be hospitalized $ H $, or they recover and move into the $R^{+}$ compartment as confirmed cases. In this model, it is assumed that individuals in $ I^{+} $ cannot transmit the virus to others, as they are effectively isolated.
	 	
	 	\paragraph{ $R^{+}$ : recovered detected}
	 	   Individuals in this compartment recovered from the disease which has been detected may lose their immunity and become susceptible again. 
	 	   
	 	\paragraph{$ H $: Hospitalized }  
	  All people admitted to hospital are tested. They have two options: either their condition deteriorates and they require intensive care $ C $, or they recover and  move into the $ R^{+} $ compartment. It is important to note that hospitalization $ H $ represents an intermediate state between infection and becoming seriously ill or succumbing to the disease. The direct transition from infected individuals but not in hospital  $ (I^{-},I^{+}) $ to the deceased $ D $ state is not possible. 
	 
	  \paragraph{ $ C $: Intensive care hospitalization}
	 	 This for people at an advanced (and detected) stage of the disease, requiring ventilators and specialized medical care. Patients in this unit are transferred from the $( H) $ unit and face two possible outcomes: they may recover and   move to the $ (R^+) $ compartment, or they may succumb to the disease and move to the deceased category $ (D) $.
	 	 
	 \paragraph{$  D $: death, from disease}
	 	 We assume that only hospitalized patients, especially those in intensive care, are likely to die of the disease. They form the $D$ compartment.
	 	 
	 \paragraph{$V$ : vaccinated}
	 	 People who receive the vaccine are  susceptible ($S$), infected but not detected (asymptomatic) ($I^{-}$) and recovered but not detected ($R^{-}$). Since most of these people have very few or no symptoms, they will go for and admitted vaccination.  After vaccination, they  will gradually lose their immunity, returning to a susceptible state.

	 \begin{figure}[h]
	 	\centering
	 	\includegraphics[width=11cm,height=9cm]{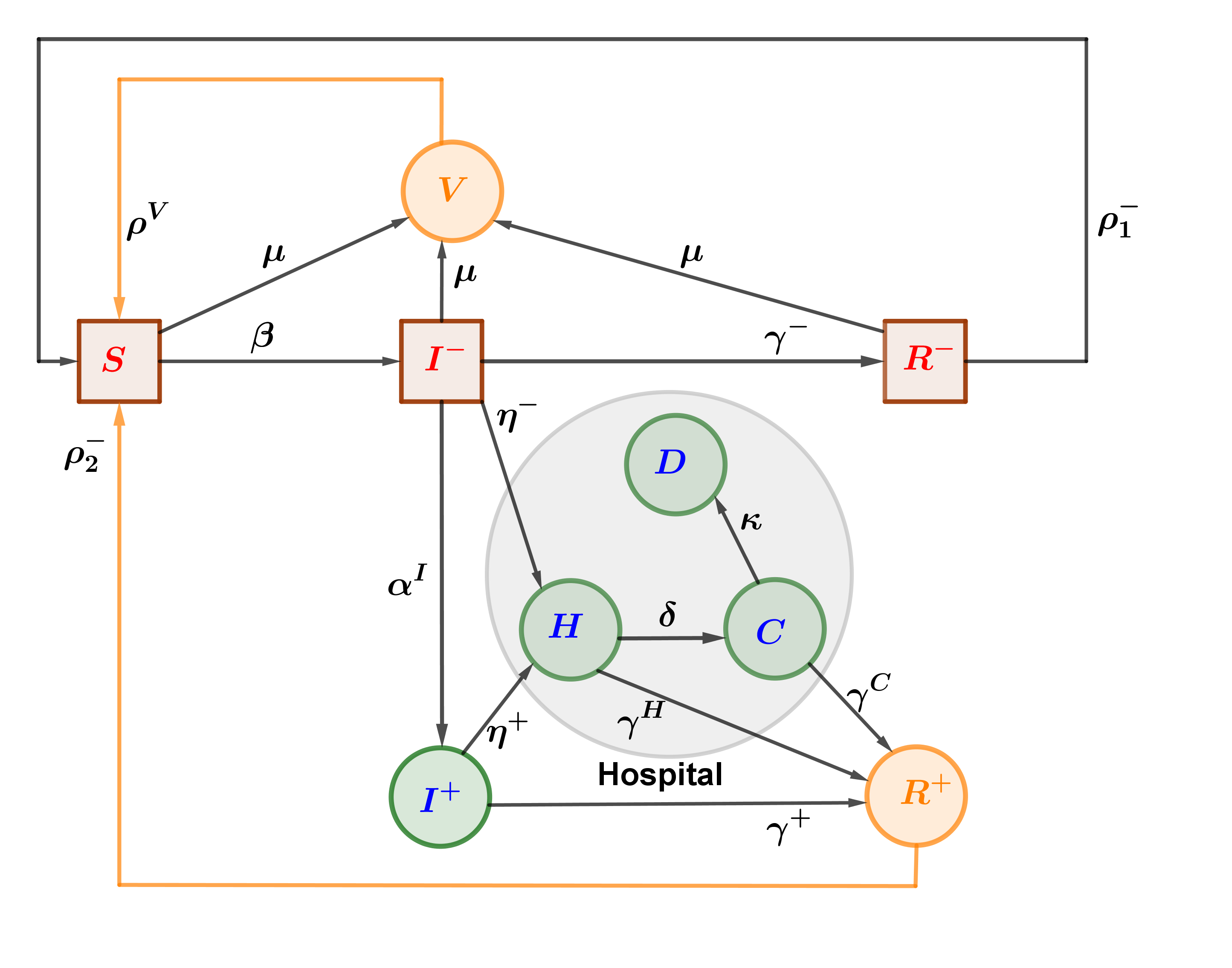}
	 	\caption{\covid model with partial information;~~
	 		3 fully hidden states : $I^-,R^{-}_{},S$;~~
	 		2 partially hidden states : $R^{+}_{}, V^{}$;~~
	 		4 observation sates :   ($I^{+},H,C,D$).}
	 	\label{ch4_Model SIRpm_7}
	 \end{figure}

	 	In Figure \ref{ch4_Model SIRpm_7}, "fully hidden states" refer to states where both the inflow and outflow are observable, while "partially hidden states" refer to those where either the inflow or the outflow is observable, but not both.

	 \subsection{Diffusion Approximation of  \covid Model }

	 The dynamics of the \covid model \ref{ch4_Model SIRpm_7} can be described using CTMC, where the transition vectors and intensities are given in Table \ref{Table_Info_Model1}.\\
	 \begin{table}[!h]
	 	
	 	\caption{\covid model without considering partially observed compartments ($ V $ and $ R^+ $): State process 
	 		$X=\left(\begin{array}{c}
	 			Y\\
	 			Z
	 		\end{array} \right),~ Y=(I^-,R^-,R^+,V,S)^{\top}, Z=(I^+,H,C,D)^{\top}$; Total number of states $ d= 9 $, Total transitions $ K = 16 $.}
	 	\label{Table_Info_Model1}
	 	\begin{center}
	 		{\footnotesize 
	 		\setlength{\tabcolsep}{5pt}
	 		\begin{tabular}{l|l|c|c}
	 			\hline
	 			k & Transition  &Transition vectors $\xi_{k}$& Intensity $\lambda_{i}(t,X)$ \\ 
	 			\hline
	 			&&&\\[-1em]
	 			1& Infection of  susceptible   & $(1,0,0,0,-1,0,0,0,0)^{\top}$ &  $\beta S\frac{I^{-}}{N}=\beta Y_{5}\frac{Y_{1}}{N}$
	 			\\ 
	 			\hline
	 			&&&\\[-1em]
	 			2& Test of infected non-detected   & $(-1,0,0,0,0,1,0,0,0)^{\top}$ &  $\alpha I^{-}=\alpha Y_1$
	 			\\ 
	 			\hline
	 			&&&\\[-1em]
	 			3 & Recovering of infected detected   & $(0,0,1,0,0,-1,0,0,0)^{\top}$ &  $\gamma^{+}I^{+} = \gamma^{+}Z_{1}$
	 			\\
	 			\hline
	 			&&&\\[-1em]
	 			4 & Recovering of  infected non-detected   & $(-1,1,0,0,0,0,0,0,0)^{\top}$ &  $\gamma^{-}I^{-} = \gamma^{-}Y_{1}$
	 			\\ 
	 			\hline
	 			&&&\\[-1em]
	 			5 & Losing immunity of detected recovered  & $(0,0,-1,1,0,0,0,0,0)^{\top}$ &  $\rho^{-}_{2}R^{+} = \rho^{-}_{2}Y_{3}$
	 			\\
	 			\hline
	 			&&&\\[-1em]
	 			6 & Losing immunity of non-detected recovered  & $(0,-1,0,1,0,0,0,0,0)^{\top}$ &  $\rho^{-}_{1}R^{-} = \rho^{-}_{1}Y_{2}$
	 			\\
	 			\hline
	 			&&&\\[-1em]
	 			7 & Losing immunity of vaccinated   & $(0,0,0,-1,1,0,0,0,0)^{\top}$ &  $\rho^{V}V = \rho^{V}Y_4$
	 			\\
	 			\hline
	 			&&&\\[-1em]
	 			8 & Vaccination of non-detected infected  & $(-1,0,0,0,1,0,0,0,0)^{\top}$ &  $\mu I^{-} = \mu Y_{1}$
	 			\\
	 			\hline
	 			&&&\\[-1em]
	 			9 & Vaccination of non-detected recovered  & $(0,-1,0,0,1,0,0,0,0)^{\top}$ &  $\mu R^{-} = \mu Y_{2}$
	 			\\
	 			\hline
	 			&&&\\[-1em]
	 			10 & Vaccination of susceptible  & $(0,0,0,1,-1,0,0,0,0)^{\top}$ &  $\mu S = \mu Y_{5}$
	 			\\
	 			\hline
	 			&&&\\[-1em]
	 			11 & Hospitalization of non-detected infected  & $(-1,0,0,0,0,0,1,0,0)^{\top}$ &  $\eta^{-} I^{-} = \eta^{-} Y_{1}$
	 			\\
	 			\hline
	 			&&&\\[-1em]
	 			12 & Hospitalization of detected infected  & $(0,0,0,0,0,-1,1,0,0)^{\top}$ &  $\eta^{+} I^{+} = \eta^{+} Z_{1}$
	 			\\
	 			\hline
	 			&&&\\[-1em]
	 			13 & Recovering from Hospitalization   & $(0,0,1,0,0,0,-1,0,0)^{\top}$ &  $\gamma^{H} H = \gamma^{H} Z_{2}$
	 			\\
	 			\hline
	 			&&&\\[-1em]
	 			14 & Recovering from ICU   & $(0,0,1,0,0,0,0,-1,0)^{\top}$ &  $\gamma^{C} C = \gamma^{C} Z_{3}$
	 			\\
	 			\hline
	 			&&&\\[-1em]
	 			15 & Transfer to  ICU   & $(0,0,1,0,0,0,-1,0,0)^{\top}$ &  $\delta H = \delta Z_{2}$
	 			\\
	 			\hline
	 			&&&\\[-1em]
	 			16 & Death   & $(0,0,1,0,0,0,0,-1,1)^{\top}$ &  $\kappa C = \kappa Z_{3}$
	 			\\ 
	 			\hline
	 		\end{tabular}
	 	}
	 	\end{center}
	 \end{table} 
	 Based on the information in Table \ref{Table_Info_Model1}, we can express the dynamics of the state process as a CTMC as described in Section \ref{Counting_Proc}, Equation \eqref{Eq_NHCTMC} and based in the result in Section \ref{Section_Diffusion_Approx}, we can write down our state dynamic by  Equation \eqref{Eq_Diff_Approx}.The changes in the sizes of each compartment are assumed to follow the dynamics described by a  system of stochastic differential equations. In this case the state process is given as $X$,  and  the corresponding diffusion  approximation is given in appendix \ref{Appendix1}. This  diffusion approximation can be written as
	 \begin{align}
	 	dY&=\overline  f(t,Y,Z)dt  +~\overline  \sigma (t,Y,Z)\color{black}{dW^1} \color{black}{+\overline g(t,Y,Z)} \color{black}{dW^2} \label{Eq_Diff_Y}\\[1ex]
	 	{dZ}&= [\overline  h_0(t,Z)+\overline  h_1(t,Z)Y]dt +~ \overline \ell(t,Y,Z)\color{black}{dW^2} \label{Eq_Diff_Z}
	 \end{align}
	 where  the first equation represents de SDE for the hidden sate $Y$, the second equation is the SDE for the observable state.
	 The coefficients $ \overline{f}, ~ \overline{\sigma},~ \overline{g} ,~ \overline{\ell} $  are  non-linear in the  hidden  state $Y$ and given as follows\\
	 \vspace*{0.5cm}
	 \begin{align}	 
\overline{f}(t,Y,Z)&=\left(\begin{array}{c}		
	 	\beta\frac{Y_1Y_5}{N} -(\alpha + \gamma^{+} + \eta^{-} - \mu )Y_1\\
	 	\gamma^{-}Y_1 -\mu Y_2 -\rho^{-}_{1}Y_2\\
	 	\gamma^{+}Z_1 + \gamma^{H}Z_2 + \gamma^{C}Z_3 - \rho^{-}_{2}Y_3\\
	 	(Y_1 + Y_2 + Y_5)\mu - \rho^{V}Y_4\\
	 	-\beta\frac{Y_1Y_5}{N} -\mu Y_5 +\rho^{-}_{1}Y_2 +\rho^{-}_{2}Y_3 +\rho^{V}Y_4
	 \end{array} \right),\\[0.21em]
\overline{\sigma}(t,Y,Z)&=\left(\begin{array}{cccccccc@{\hspace*{-0.0em}}}			
	 	\sqrt{\beta\frac{Y_1Y_5}{N}} & -\sqrt{\gamma^{-}Y_1} &  \sqrt{\mu Y_1} & 0 & 0 & 0 & 0 & 0\\
	 	0 & \sqrt{\gamma^{-}Y_1} & 0 & -\sqrt{\mu Y_2} & -\sqrt{\rho^{-}_{1}Y_2} & 0 & 0 & 0 \\
	 	0 & 0 & 0 & 0 & 0 & -\sqrt{\rho^{-}_{2}Y_3} & 0 & 0\\
	 	0 & 0 & \sqrt{\mu Y_1} & \sqrt{\mu Y_2} & 0 & 0 & \sqrt{\mu Y_5} & -\sqrt{\rho^{V} Y_4} \\
	 	-\sqrt{\beta\frac{Y_1Y_5}{N}} & 0 & 0 & 0 & \sqrt{\rho^{-}_{1}Y_2} & \sqrt{\rho^{-}_{2}Y_3} & -\sqrt{\mu Y_5} & \sqrt{\rho^{V}Y_4} 
	 \end{array} \right)\\[0.21em]
\overline{g}(t,Y,Z)&=\left(\begin{array}{cccccccc@{\hspace*{-0.0em}}}		
 -\sqrt{\alpha Y_1} & -\sqrt{\eta^{-}Y_1} & 0 & 0 & 0 & 0 & 0 & 0\\
  0 & 0 & 0 & 0 & 0 & 0 & 0 & 0\\
  0 & 0 & \sqrt{\gamma^{+}Z_1} & \sqrt{\gamma^{H}Z_2} & \sqrt{\gamma^{C}Z_3} & 0 & 0 & 0\\
 0 & 0 & 0 & 0 & 0 & 0 & 0 & 0\\
  0 & 0 & 0 & 0 & 0 & 0 & 0 & 0
 \end{array} \right),\\[0.21em]
\overline{h}_{0}(t,Z)&=\left(\begin{array}{c}
	 	-\gamma^{+}Z_1 - \eta^{+}Z_1\\[0.7ex]
	 	\eta^{+}Z_1 - \delta Z_2 - \gamma^{H}Z_2\\
	 	\delta Z_2 - \gamma^{C}Z_3 -\kappa Z_3\\
	 	\kappa Z_3
	 \end{array} \right),
	 \overline{h}_{1}(t,Z)=\left(\begin{array}{ccccc}
	 	\alpha & 0 & 0&0 &0\\
	 	\eta^{-} & 0 & 0 & 0 & 0\\
	 	0 & 0 & 0 & 0 & 0\\
	 	0 & 0 & 0 & 0 & 0
	 \end{array} \right),\\[0.21em]
 \overline{\ell}(t,Y,Z)&=\left(\begin{array}{cccccccc@{\hspace*{-0.0em}}}	
	 	 \sqrt{\alpha Y_1} & 0 & -\sqrt{\gamma^{+}Z_1} & 0 & 0 & -\sqrt{\eta^{+}Z_1} & 0 & 0 \\
	  0 & \sqrt{\eta^{-}Y_1} & 0 & -\sqrt{\gamma^{H}Z_2} & 0 & 0 & 0 & -\sqrt{\delta Z_2}\\
	  0 & 0 & 0 & 0 & -\sqrt{\gamma^{C}Z_3} & 0 & -\sqrt{\kappa Z_3} & \sqrt{\delta Z_2}\\
	  0 & 0 & 0 & 0 & 0 & 0 & \sqrt{\kappa Z_3} & 0
	 \end{array} \right)
\end{align} 

	 \subsection{Cascade State Modeling }
	 In the model described in Figure \ref{ch4_Model SIRpm_7} we have a total of $ K=16 $ random transitions and $ d=9 $ compartments. In particular, there are two compartments highlighted in orange in this figure: the recovered and the vaccinated compartments. These compartments have a unique feature in that they are partially observed compartments. Specifically, both have observable inflows and unobservable outflows. For example, in the case of the recovered compartment, there are three different inflows: from individuals recovering from $I^{+}$, $H$ and $C$ respectively. These inflows are observable because we can track individuals recovering from hospital and quarantine. However, the outflow, which represents individuals who lose immunity, remains unknown and unobservable because we have no information on the timing of immunity loss. Thus, while we know the inflow, we lack information on when individuals will leave the compartment due to loss of immunity. Similarly, for the vaccinated compartment, we can document individuals receiving the vaccine, but the timing of immunity loss remains uncertain. To make better use of the available information, we propose to divide each compartment into two distinct groups: an observable part and an unobservable part. The unobservable part, called the hidden compartment with waning immunity, represents those individuals within the recovered and vaccinated compartments whose status cannot be directly observed. On the other hand, for the observable part, we want to divide it into $L^{G}$ sequential or cascade compartments. We assume perfect immunity for a known period of time, which is divided into $L^{G}$ sub-periods. These cascading compartments are used to model the progressive loss of immunity over time for both recovered and vaccinated individuals.
	 
In certain epidemic models with hidden states, some compartments may have one or more observable incoming transitions, while their outgoing transitions may be partially or completely unobservable. As a result, these compartments are treated as hidden. To extract useful information from the observed incoming transitions, the compartment can be divided into a sequence of generic sub-compartments, denoted as $ G_i $, for $ i = 1, \ldots, L^G $. Each sub-compartment $ G_i $ groups individuals based on their time since entering the compartment, effectively tracking their $ G_i $ age. The number of sub-compartments $ L^G $ is typically chosen to match the period during which no outgoing transitions occur, such as the duration of full immunity following recovery or vaccination.	
	
	\subsubsection{Continuous-time Dynamics}
	 
	 Consider the  diagram in Figure \ref{ch6_cascade1} where we assume a random observable inflow into the first generic compartment, $G_1$, and a non-observable outflow from the last compartment, denoted by $G^{-}$. Between these two compartments, we introduce $L^{G}-1$ additional compartments, representing different $G$-ages within the overall compartment $G$. Additionally, except for the random inflow and outflow, all transitions between compartments are deterministic. Let $\tau_{1},\ldots, \tau_{L^{G}}$ represent the accounting dates, times at which individuals in each $G_i$ compartment transition to the next.
	 If the time spent in each compartment $G_i$ is  $\Delta \tau=\tau_{i+1}-\tau_{i}$, the dynamics from $G_1$ to $G$ can be described  as in Figure \ref{ch6_cascade1}.\\
	 \begin{figure}[h]
	 	\begin{center}
	 		\includegraphics[width=14cm,height=2.8cm]{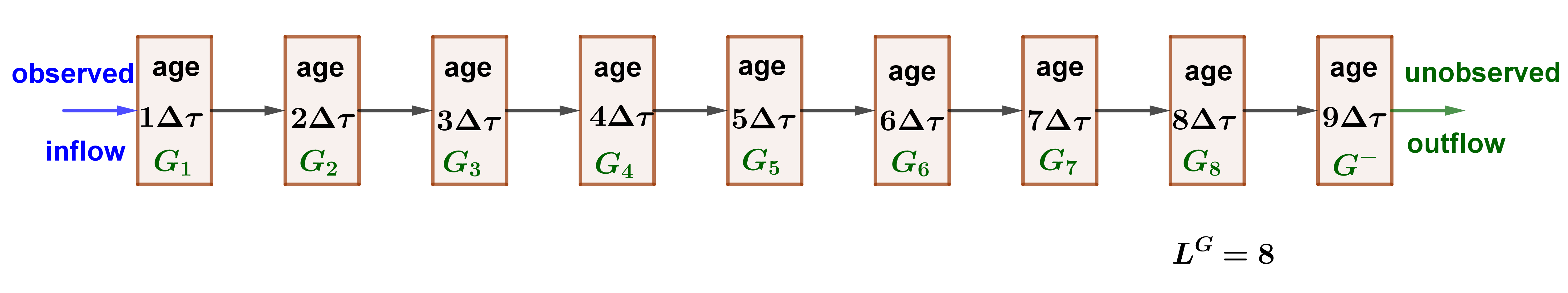}
	 		\caption{Cascade state illustration with $8$ compartment ($G_l, l=1\ldots,L^{G}=8$);\\
	 			observed inflow : random inflow in the compartment 1 ($G_1$) with $G$-age, $1\Delta \tau$,\\
	 			non-observed outflow : random outflow from the compartment $G^{-}$ with $G$-age, $\geq 9\Delta \tau$\\
 			Between $G_1$ and $G^{-}$, only deterministic transition.}
	 		\label{ch6_cascade1}
	 	\end{center}
	 \end{figure}
 Now, we assume that the state process $X$ can be decomposed into $(X^{e},G)$, where $X^{e}$ contains the $d^{e}<d$ traditional states with random inflow or outflow, and $G$ contains the   $L^{G}$ new states with deterministic transitions.

 Let us consider the microscopic CTMC model given in Equation  \eqref{Eq_NHCTMC} but restrict this point of view to the time intervals $[0,T]$. To account for transitions between cascade states, we introduce additional counting processes, denoted as $M_{k}^{}(t)$,  $k=K+1,\ldots,K+L^{G}$, which track the total number of transitions in the time interval $[0,T]$ between compartments $G_{1},\ldots, G_{L^{G}}$. These processes share some similarities with traditional Poisson counting processes, but also key differences. Like Poisson processes,  $M_{k}^{}(t)$,  $k=K+1,\ldots,K+L^{G}$ start at zero at $t=0$
 and are piecewise constant. Unlike traditional Poisson processes, the jump times of $M_{k}^{}(t)$,  $k=K+1,\ldots,K+L^{G}$, are fixed and occur only at the predetermined, non-random counting dates $\tau_{l}$. Instead of a constant jump size of $ 1 $, the jumps represent the number of individuals transitioning between cascade compartments and are thus state-dependent. The introduction of $M_{k}^{}(t)$ $k=K+1,\ldots,K+L^{G}$, naturally extends the existing model dynamics. These processes can be incorporated into the traditional Poisson process framework as follows:
 \begin{align}\label{NHCTMC_CS}
 	X(t)&=X(0) + \sum\limits_{k=1}^{K + L^{G}}\xi_{k}M^{}_{k}(t), ~~~~~t\in[0,T]
 \end{align}
 where the transition vectors $\xi_{k}$, and the counting process $M_k$, $k=1,\ldots,K$, referring to the random transitions (but not to the deterministic transitions between the new additional sub-compartment) and the transition vectors $\xi_{k}$, and the counting process $M_k$, $k=K+1,\ldots,K+L^{G}$, referring to the deterministic transitions between the new additional sub-compartment. For $k=K+1,\ldots,K+L^{G}$, $\xi_{k}$ are transition vectors with entries $+1$ (for deterministic inflow) and $-1$ (for outflow) for each cascade compartment. These cumulative jump processes provide a more detailed view of transitions within cascade states. The additional structure allows the model to capture the aggregated movement between compartments over time and also account for state-dependent transition magnitudes while maintaining deterministic jump times.
 
 	 \subsubsection{Reduction of Additional Compartments and Diffusion Approximation}\label{Reduction_Apprx}
 Without loss of generality, if we assume \(\Delta\tau = 1\) day, we can define compartments \(G_k\) for individuals who have spent \(k\) days in \(G^{-}\) between two accounting dates. However, this would create an excessive number of sub-compartments. To simplify, we aggregate \(P_k \in \mathbb{N}\) consecutive daily sub-compartments into a single one.  
 Thus, \(G_1\) includes individuals with a \(G\)-age between \(1\) and \(P_1\) days, \(G_2\) covers those from \(P_1 + 1\) to \(P_2\) days, and so forth, up to \(G_{L^G}\), which includes individuals aged \(P_{L^G-1} + 1\) to \(P_{L^G}\) days.  
 This approximation leads to the relation: \( L^G = \sum\limits_{k=1}^{\overline{K}} P_k, \quad \text{with } P_k \in \mathbb{N}\) and \(\overline{K}\) the total number of aggregation .  
 Figure \ref{ch6_cascade2} illustrates that, by introducing a total of nine additional compartments, different groupings can be applied based on the chosen modeling approach. Notably, the sizes of groups \(P_k\) may vary, as depicted in Figure \ref{ch6_cascade2}. \begin{figure}
 	\begin{center}
 		\includegraphics[width=14cm,height=3.5cm]{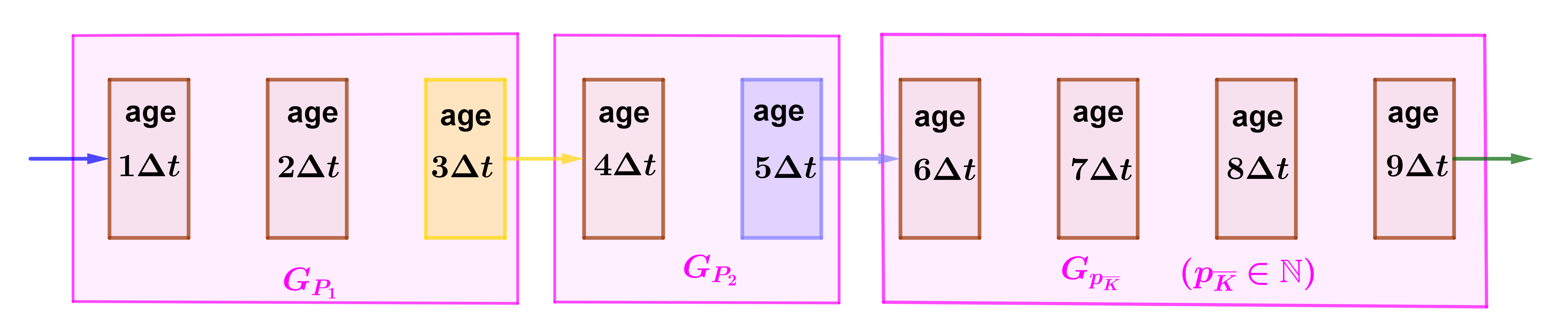}
 		\caption{Grouping of additional compartments:  
 			\( G_{P_{k}} \) groups together individuals with \( P_{k} \) neighbouring "G ages".  Individual ages within a given group \( G_{P_{k}} \) are no longer distinguished.
 		}
 		\label{ch6_cascade2}
 	\end{center}
 \end{figure}
 In the following $G_k$, $k=1,\ldots,\overline{K}$ will represent the collection of $P_k$ individuals.
 Therefore, the exact dynamics can be expressed in the form

 \begin{align*}
 	G_{{k}}(\tau_{l}) = G_k(\tau_{l^{-}}) - M_k(\tau_{l}) + M_{k-1}(\tau_{l}),~k=1,\ldots,\overline{K}
 \end{align*}
 
 where: \( M_k(\tau_l) \) represents the number of individuals leaving compartment \( G_k \) at time \( \tau_l \), corresponding to those who have reached the age threshold \( \Delta \tau \sum\limits_{i=1}^{k} P_{i} \). \( M_{k-1}(\tau_l) \) accounts for the number of individuals entering compartment \( G_k \) from \( G_{k-1} \), having reached the age \( \Delta \tau \sum\limits_{i=1}^{k-1} P_{i} \).
 
 This formulation naturally extends the model dynamics by explicitly incorporating the cascade structure of transitions through age-based compartments.
 One  main idea when modeling the dynamics of individuals leaving a compartment, is to assume that  for a small enough $\Delta \tau$, we can approximate the exact dynamic  in a more formal form. Now, let us consider $G_{k}(\tau_{l})$, the total number of individuals in the generic compartment $G_{k}$ at the accounting date $\tau_{l}$.  If we interpret the "outflow of $x$ individuals from $G_{k}$ " as a  value for a time period, we aim to find a suitable approximation for the average number of individuals leaving the compartment $G_{k}$. The  outflow in time step $\Delta \tau$ represents the total number of individuals who leave the  compartment $G_{k}$. Now if we consider $\frac{ G_{k}(\tau_{l})}{P_k}\times \Delta \tau$, it represents the   outflows of  individuals (on average) from $G_{k}$ during $\Delta \tau$. It is justified especially under the assumption of non-distinguishable individuals in the compartment $G_{k}$.\\
 Now we can approximate the exact dynamics when $\Delta \tau=1$ day in the following form 
 \[
 G_{{k}}(\tau_{l}) \approx G_{{k}}(\tau_{l^{-}}) - M_k(\tau_l) + M_{k-1}(\tau_l),
 \]

 By using this approximation, the recursive dynamics will adopt the subsequent structure:
  \begin{align}\label{NHCTMC_CS1}
 	X(t)&=X(0) + \sum\limits_{k=1}^{K + \overline{K}}\xi_{k}M^{}_{k}(t), ~~~~~t\in[0,T].
 \end{align}
 Taking the population limit \( N \to \infty \), we derive  from the law of large numbers:   
 
 \begin{align}
 	\frac{d}{dt} \overline{X}(t) &= \overline{F}_{\text{base}}(t, \overline{X}(t)) + \sum\limits_{k=K+1}^{\overline{K}} \xi_k M_k(t), \quad t \in [0,T],
 \end{align}  
 
 where:  \( \overline{F}_{\text{base}} \) represents the baseline deterministic dynamics, derived from \( \overline{F} \), excluding cascade states (see, \eqref{LLN}).  The additional term \( \sum\limits_{k=K+1}^{\overline{K}} \xi_k M_k(t) \) captures the discrete cascade transitions.  
 By applying the CLT for Poisson processes, we obtain the following stochastic diffusion approximation:   
 
 \begin{align}
 	dX^{D}(t) &= F_{\text{base}}(t, X^{D}(t))dt + \sigma_{\text{base}}(t, X^{D}(t)) dW(t) + \sum\limits_{k=K+1}^{\overline{K}} \xi_k M_k(t), \quad t \in [0,T], \label{Diff_Approx_casc1_Eq2}
 \end{align}  
 
 where:  \( F_{\text{base}} \) and \( \sigma_{\text{base}} \) describe the drift and diffusion terms of the underlying dynamics without cascade states, constructed from \( F \) and \( \sigma \), respectively (see, \eqref{DA1}). The summation term accounts for the additional stochastic contributions due to cascade transitions.  
 This formulation clearly distinguishes between the baseline system dynamics and the effects introduced by cascade compartments while maintaining consistent notation.
\subsubsection{Discrete-time Dynamics}\label{Covid_discret}
We assume that the duration between two accounting dates  matches $\Delta t$, representing a time step. Then, we can derive the a recursive dynamics for the number of individuals at a specific time $n$ in each compartment $G_i$, where $i=1,\ldots,L^{G}$.
Applying the Euler-Maruyama time discretization to the diffusion approximation given by the SDE, $dX^{D}(t)=F_{\text{base}}(t,X^{D}(t))dt + \sigma_{\text{base}}(t,X^{D}(t))dW(t)$, with the discrete time points equal to $\tau_{0},\tau_{1},\ldots$ (accounting dates ) or containing these dates as a subset leads to the corresponding discrete-time dynamics. Now for discrete time points $t_{n}=\tau_{n}$, we obtained that
\begin{align*}
	G_1(t_{n+1})&=G^{I}\\
	G_{k+1}(t_{n+1})&=G_k(t_n), ~~~~~~k=1,\ldots,L^{G}-2\\
	G_L(t_{n+1})&=G_L^{G}(t_n) + G_{L^{G}-1}(t_n) - G^{O},
\end{align*}
 where, $G^{I}$ is the and the inflow in $[t_n,t_{n+1})$ and $G^{O}$ the outflow in $[t_n,t_{n+1})$.  We proceed with the grouping of $P_k > 1$ days, and based on the approximation from section \ref{Reduction_Apprx}, we set $\psi_k=\frac{1}{P_{k}}$, $k=1,\ldots,\overline{K}$, then the recursive dynamics will adopt the subsequent structure : 
 
\begin{align*}
	G_1(t_{n+1})&=(1-\psi_1)G_1(t_{n}) + G^{I},\\
	G_2(t_{n+1})&=(1-\psi_2)G_2(t_{n})+\psi_1G_1(t_{n}),\\
	&\vdots\\
	G_{\overline{K}}(t_{n+1})&=(1-\psi_{\overline{K}})G_K(t_{n})+\psi_{\overline{K}}G_{K-1}(t_{n}),\\
	G^{-}(t_{n+1})&=G^{-}(t_{n}) + \psi_{\overline{K}}G_{\overline{K}}(t_{n}) - G^{O}.
\end{align*}
The time-discretized version of the model dynamics in equation \eqref{Diff_Approx_casc1_Eq2} follows the general form:
\begin{align}
	X^{}_{n+1}&=X^{}_n +  F_{\text{base}}(n,X_{n})\Delta t + \sigma_{\text{base}}(n,X_{n})\sqrt{\Delta t}\mathcal{E}_{n+1} + \overline{A}_{X}X_{n}\label{Eq_Discr+cascade}
\end{align} 
where \\ 
$\overline{A}_{X}= \left[\begin{array}{cc}0_{d^{e}}  &0\\\\[0.25ex] 
	0&\overline{A}_{V}\\\\[0.25ex]
\end{array} \right] , ~~~~\overline{A}_{V}= \left[\begin{array}{cccccc}-\psi_1 & 0 &0 & \ldots & 0 &0\\\\[0.25ex] 
	\psi_1 & -\psi_2 & 0 & \ldots & 0&0\\\\[0.25ex]
	0 & \psi_{2} & -\psi_3 & 0 &\ldots &0 \\\\[0.25ex]
	0 & \ldots & \ddots & \ddots &\vdots &\vdots\\\\[0.25ex]
	0 & \ldots& \ldots & \ldots  & \psi_{\overline{K}-1} &-\psi_{\overline{K}}\\\\[0.25ex]
	0 &\ldots &\ldots & \ldots & \psi_{\overline{K}} &0\\\\[0.25ex]
\end{array} \right]  $\\

	 and $X_n$ is of the form $X_n=(X^{e}_{n}, G_n)$.
\color{black}
	 \subsection{\covid Modeling with Cascade States}
	 Here, we consider an extension of the model presented in \ref{ch4_Model SIRpm_7}, introducing additional cascade compartments to capture information from various inflows into compartments $R^+$ and $V$ named as partially hidden states. 
	 \begin{figure}[h]
	 	\centering
	 	\includegraphics[width=11cm,height=6cm]{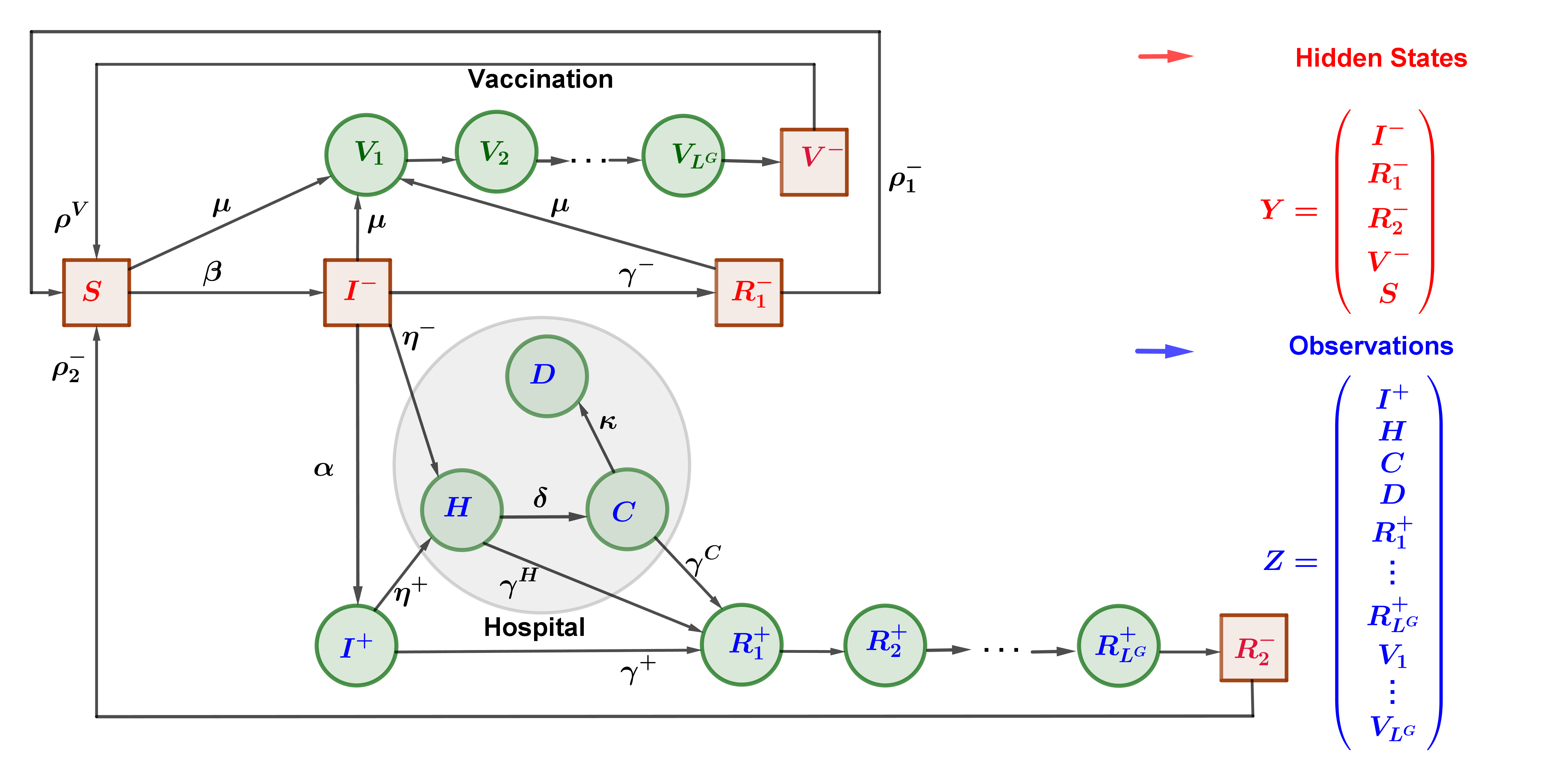}
	 	\caption{Covid-19 model with partial information: $ 5 $ hidden states ($I^-,R^{-}_{1},R^{-}_{2}, V^{-},S$) and $2L+4$ observable states, including $ 2L$ cascade states ($R^{+}_{i},V_{i},~i=1,2,L$) and $ 4 $ other observable states  including hospital ($I^{+},H,C,D$).}
	 	\label{ch4_Model SIRpm_101}
	 \end{figure} 
	 
	 In order to make the information from the observable random transitions available for the estimation (filtering) of the hidden states, the idea is to split compartments with observable random inflow into a sequence of additional compartments. For example according to the Figure \ref{ch4_Model SIRpm_101}, we split the sup-population of vaccinated and recovered and detected  individuals ($V$ and $R^+$) into $V_{1},\ldots,V_{L^{V}}$, $V^-$ and $R^+_{1},\ldots,R^+_{L^{R}}$, $R^{-}_{2}$ receptively.
	 
	 Based on the information provided, the number of additional compartments required for modeling a person recovering from the disease depends on the duration considered for recovery, relative to the time step. For instance, if recovery without fading immunity is assumed to occur over $L^{R}$ time units (e.g., $L^{R}=30 $ days) and the time step is set to $ 1 $ day, this would necessitate the addition of up to $ 30 $ extra compartments. While this approach offers an advantage in modeling immunity loss (as described in Figure \ref{ch4_Model SIRpm_101} ), it also introduces significant complexity to the model. To mitigate this complexity, we initially grouped certain compartments together, assuming that all individuals within each group were indistinguishable. We assume that immunity begins to decline 90 days after vaccination or recovery. To model this, we divide this period into three compartments, each consisting of $ 30 $ days. As a result, we obtain three sequential compartments, representing either the recovery process or the vaccination process with perfect immunity. During these $ 90 $ days, whether after vaccination or simple recovery, it is impossible for a person to become susceptible. According to our proposed model, we define $\psi_{1}=\psi_{2}=\psi_{3}=\frac{1}{30}$. 
	  We then obtain the following proposed \covid model \ref{ch4_Model SIRpm_9}.
	  \begin{figure}[h]
	 	\centering
	 	\includegraphics[width=11cm,height=6cm]{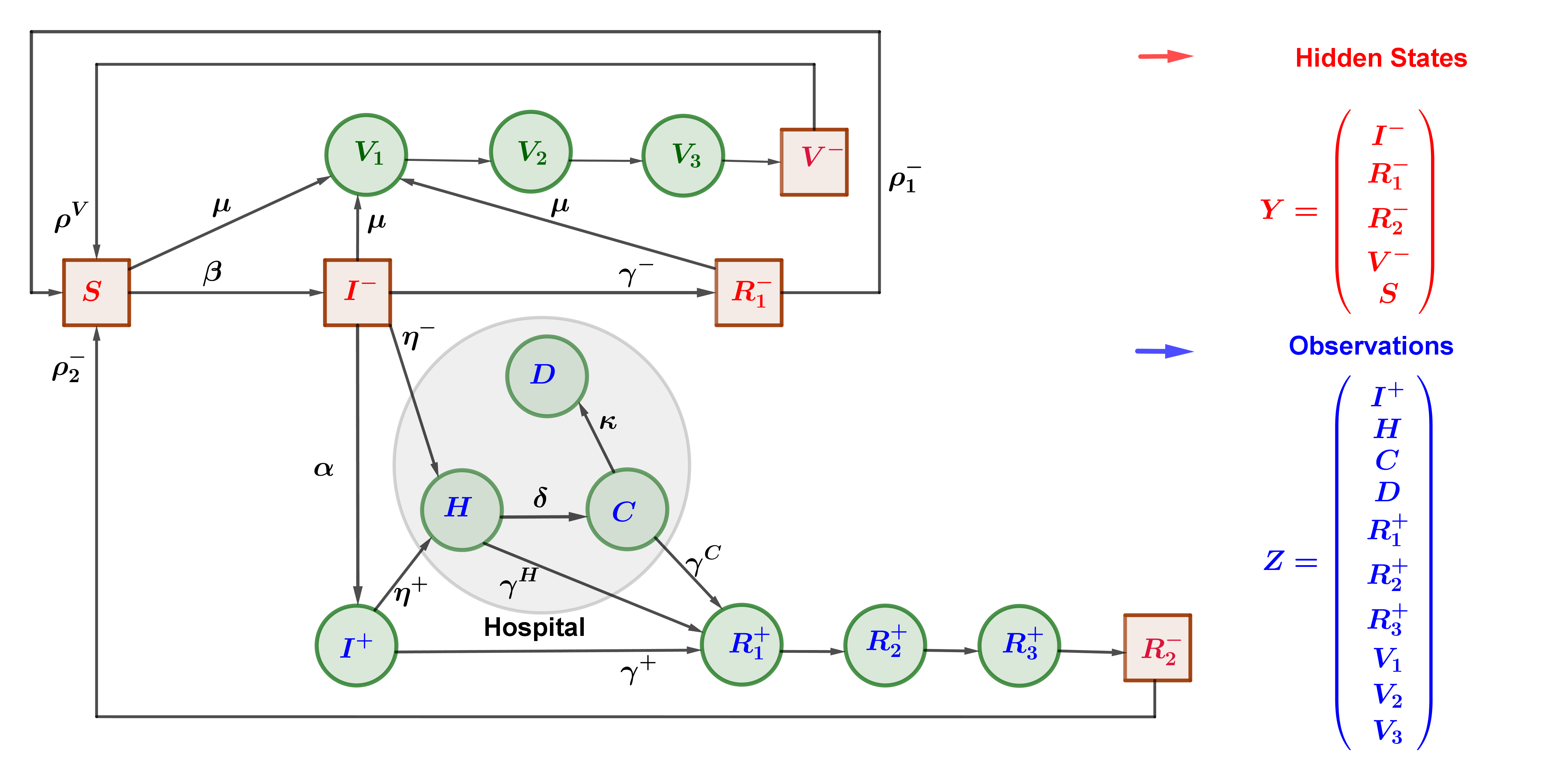}
	 	\caption{\covid model with partial information;
	 		$ 5 $ hidden states : $I^-,R^{-}_{1},R^{-}_{2},S, V^{-}$;~~
	 		$ 10 $ observable states : $  6 $ cascades states ($R^{+}_{i},V_{i},~i=1,2,3$) and 4 observable states including hospital ($I^{+},H,C,D$);~~
	 	}
	 	\label{ch4_Model SIRpm_9}
	 \end{figure}
	 
	The evolution of subpopulation sizes within each compartment is governed by a system of SDEs, complemented by deterministic equations that describe the cascade dynamics. In this framework, the full state process is structured as \(X = \begin{pmatrix} Y^e \\ Y^G \\ Z^e \\ Z^G \end{pmatrix}\), where the hidden and observable states are decomposed as follows: the core epidemic compartments are given by \(X^e = \begin{pmatrix} Y^e \\ Z^e \end{pmatrix}\) with \(Y^e = \begin{pmatrix} I^{-} \\ R^{-}_{1} \\ S  \end{pmatrix}\) and \(Z^e = \begin{pmatrix}  I^{+}\\H \\ C \\ D \end{pmatrix}\), leading to \(Y^e = \begin{pmatrix} Y_1 \\ Y_2 \\ Y_5  \end{pmatrix}\) and \(Z^e = \begin{pmatrix}Z_1\\ Z_2 \\ Z_3 \\ Z_4 \end{pmatrix}\). The cascade compartments capturing transition dynamics are expressed as \(G = \begin{pmatrix} Y^G \\ Z^G \end{pmatrix}\) with \(Y^G = \begin{pmatrix} R^{-}_{2} \\ V^{-} \end{pmatrix} = \begin{pmatrix} Y_3 \\ Y_4 \end{pmatrix}\), while the observable cascade compartments are given by \(Z^G = \begin{pmatrix} R^{+}_{1}, & R^{+}_{2}, & R^{+}_{3}, & V^{+}_{1}, & V^{+}_{2}, & V^{+}_{3} \end{pmatrix}^{\top} = \begin{pmatrix} Z_5, & Z_6, & Z_7, & Z_8, & Z_9, & Z_{10} \end{pmatrix}^{\top}\). This refined decomposition explicitly accounts for the hidden compartments \(Y^G\), further distinguishing them within the cascade dynamics, allowing for a clearer representation of both observed and latent transitions in the model.
	
The corresponding diffusion approximation follows from Equation \eqref{Diff_Approx_casc1_Eq2} and is expressed as 
\begin{align} 
	dX^{D}(t) = F_{\text{base}}(t,X^{D}(t))dt + \sigma_{\text{base}}(t,X^{D}(t))dW(t) +  \sum\limits_{k=K+1}^{\overline{K}} \xi_{k} M^{}_{k}(t) ~~~\text{for}~~~  t\in[0,T], 
\end{align}
 where the functions \( F_{\text{base}}, \sigma_{\text{base}}, \) and the summation term governing cascade transitions are explicitly provided in Appendix \ref{Appendix2}. Based on the discretization approach outlined in Section \ref{Covid_discret}, and referring to Equation \eqref{Eq_Discr+cascade}, the discrete-time version of the model is formulated as 
 
\begin{align} X^{}_{n+1} = X^{}_n +  F_{\text{base}}(n,X_{n})\Delta t + \sigma_{\text{base}}(n,X_{n})\sqrt{\Delta t}\mathcal{E}_{n+1} + \sum\limits_{k=K+1}^{\overline{K}} \xi_{k} M^{}_{k}(n) ~~~\text{for}~~~  n\in[0,L], 
\end{align}
 where all associated coefficients are detailed in Appendix \ref{Appendix3}. This discrete representation ensures numerical tractability and maintains consistency with the underlying continuous-time stochastic dynamics while integrating the effects of cascade transitions explicitly.
\section{Data Set and Calibration}
In the following  sections, we will present a numerical application of the proposed \covid\\ model with cascade states, using real-world data to simulate the spread of the virus and evaluate the effectiveness of different control measures. The model is calibrated using actual epidemiological data from Germany, allowing us to  capture approximately the dynamics of infection, recovery, and transmission rates over time.

The calibration process involves adjusting key model parameters such as the transmission rate, test rate, and intervention effectiveness based on historical data. This ensures that the model predictions align with observed trends. By fitting the model to real data, we are able to conduct realistic simulations and evaluate various public health strategies, such as vaccination campaigns, social distancing, and lockdown measures.

Through this application, we aim to demonstrate the practical relevance of the model in predicting the course of the pandemic in the sense of the estimation of the infected non detected for example and providing insights for decision-makers on the most effective ways to mitigate the spread of \covid. The results obtained through the calibrated model will be discussed, highlighting how closely the model fits the observed data and the implications for future scenarios.

This section outlines our quantitative understanding of infection spread as reported through testing, focusing  on Germany. In Germany, local health authorities collect infection data in accordance with the "Infektionsschutzgesetz (IfSG)" and transmit it to the national public health institute, the "Robert Koch-Institut (RKI)". The primary dataset for our analysis comprises daily incidence figures reported to local health authorities, alongside vaccination data reported to the RKI. Additionally, we incorporate current effective reproduction number $\widehat{R}$ value  provided by the RKI to provide a comprehensive picture of infection dynamics, from which we can estimate the basic reproduction number $\widehat{R}_0$ with respect to our model.\\

The IfSG data published on date $n$ contains information on cases reported at all previous dates $ n, {n-1},\ldots, 1 $ since the beginning of reporting. However, due to notification delays between testing centers, local health authorities, and the RKI, updating data from date $ n $ not only includes cases reported to local health authorities on date $ {n-1} $, but also cases reported on date $ {n-1} $ and before, on dates $ {n-2}, {n-3}, \ldots, $ and so forth. This means that the number of cases reported on day $ n $ will be underestimated, particularly for the most recent dates. Significant handling of this data artifact can be done in at least two ways: For example, one may choose to disregard some of the most recent data points, as they are most affected by this data artifact. Another solution is to estimate the systematic discrepancy from datasets published in the past. To prevent prediction bias towards smaller incidences in the forecast, the data can be adjusted accordingly, see Refisch  et al. \cite{refisch2022data}.  In this section, we focus on the adjustment of model parameters. 
We will take the initial approach of ignoring the latest data points, or recognizing that the estimate based on recent points may be underestimated. Consequently, the calibration data extends from the start of the pandemic on March 1, 2020, to February 11, 2023 (approximately 3 years).

\subsection{Model Parameters }

In this section, our aim is to mimic the observed patterns of the \covid pandemic according to the daily new reported case as closely as we can. The calibration method we're introducing here can be viewed as a hybrid approach because it combines real data from the RKI, as mentioned earlier, with the model dynamics and a key parameter of the model. We recognize that one of the most significant parameters in epidemiology is the basic and effective reproduction number. The basic reproduction number (usually denoted by $\mathcal{R}_0$)  represents the average number of secondary infections produced by a single infected individual in a susceptible population. It is
commonly used metrics to check whether the disease will become an outbreak, and what proportion of the population
needs to be vaccinated to eradicate the disease. In fact, at any given time, different proportions of the population are immune
to any given disease. For this, the effective reproduction number $R_{e}$ is used to quantify the instantaneous spread of disease
in the  population.\\

They are  fundamental concepts in epidemiology and are used to assess the potential for disease spread within a population. Further, they  also  help governments to make more accurate epidemic prevention policies.
In this section we will consider the deterministic model with the dynamics described by  the following

	$d\left(\begin{array}{c}
	I^{-}(t)\\
		R_{}^{-}(t)\\
		R_{}^{+}(t)\\[0.4ex]
		V(t)\\[0.4ex]
		S(t)\\[0.4ex]
		\color{black}{ I^{+}}(t)\\
		\color{black}{H}(t)\\
		\color{black}{C}(t)\\
		\color{black}{D}(t)
	\end{array} \right)
	=\left(\begin{array}{c}		
		\beta\frac{I^{-}(t)S(t)}{N} -(\alpha + \gamma^{+} + \eta^{-} - \mu )I^{-}(t)\\
		\gamma^{-}I^{-}(t) -\mu R_{}^{-}(t) -\rho^{-}_{1}R_{}^{-}(t)\\
		\gamma^{+}I^{+}(t) + \gamma^{H}H(t) + \gamma^{C}C(t) - \rho^{-}_{2}R_{}^{+}(t)\\
		(I^{-}(t) + R_{}^{-}(t) + S(t))\mu - \rho^{V}V(t)\\
		-\beta\frac{I^{-}(t)S(t)}{N} -\mu S(t) +\rho^{-}_{1}R_{}^{-}(t) +\rho^{-}_{2}R_{}^{+}(t) +\rho^{V}V(t)\\
		\alpha I^{-}(t) -\eta^{+}I^{+}(t) -\gamma^{+}I^{+}(t)\\
		\eta^{+}I^{+}(t) + \eta^{-}I^{-}(t) -\delta H(t) -\gamma^{H}H(t)\\
		\delta H(t) - \gamma^{C}C(t) -\kappa C(t)\\
		\kappa C(t)
	\end{array} \right)dt $\\
 \begin{remark}
 The motivation for using the deterministic model for calibration lies in the fact that, under the assumption of a large population size, the ODE limit serves as a good approximation of the stochastic model. This is justified in section \ref{LLN}. Additionally, by construction, the stochastic model does not require any new parameters.
  \end{remark}
 
The formula of  $\mathcal{R}_0$ is given as follows :

\begin{equation}
	\mathcal{R}_0=\dfrac{\beta}{\alpha + \gamma^{-} + \eta^{-} + \mu}\label{reproduction_number} \end{equation}
Here $(\alpha + \gamma^{-} + \eta^{-} + \mu)^{-1}$ represents the average time before removal from the infectious compartment $I^{-}$. 
The intuition behind calculating $\mathcal{R}_0$ here is to compute the inflow into the infectious compartment over the outflow, where the outflow represents the average time before an individual is removed from the infectious compartment $I^{-}$ (see  \cite{van2002reproduction}, \cite{diekmann1990definition}  ).
In this formula, we assume that the infection rate $\beta$, the test rate $\alpha$ and the vaccination rate $\mu$ are time-dependent. In fact, we have several reasons to consider those parameters as time-dependent. First of all, we have the intervention measures by the government. During the \covid pandemic for example, the governments and public health authorities  implemented various intervention measures such as social distancing, mask requirements, tests, vaccination campaigns, and travel restrictions. These interventions significantly impact the dynamics of the disease, particularly by altering the contact rate between individuals. They lead to changes in parameters such as the infection rate, which can be viewed as an indicator of the effectiveness of preventive measures, as well as the test rate and vaccination rate. Secondly, the seasonal variability: In general some diseases (\covid, flu, cold, etc.) exhibit seasonal patterns in their transmission, influenced by factors such as temperature, humidity, and human behavior. Parameters such as the transmission rate  may vary over different seasons, requiring them to be modeled as time-dependent. Another important reason is the emergency of variants due to mutation: With the evolution of the virus, new variants may emerge that affect the transmissibility or severity of the disease. These variants may require adjustments to parameters in the epidemic model to accurately capture their impact on transmission dynamics. In the particular case of the \covid we had in total five notable variants include Alpha, Beta, Gamma, Delta and Omicron variant. All these variants contributed to the changes in the transmission rate. We have also the changes in the behavior: people's behavior and adherence to preventive measures can change over time in response to factors such as changing perceptions of risk, fatigue from prolonged restrictions, or changes in government messaging also change in the weather. These behavioral changes can influence parameters such as contact rates. Finally the heterogeneity in population. The susceptibility of the population, contact patterns, and healthcare capacity may vary over time due to factors such as population mobility, demographic changes, or the progression of the epidemic itself. Accounting for these time-dependent variations can improve the accuracy of the model's predictions.\\
Now we consider that in Equation \eqref{reproduction_number}, the recovery rate $\gamma^{\pm} = \frac{1}{14}$. This indicates that, in the case of \covid, an individual who contracts the virus will, on average, take 14 days to recover. For other model parameters, see Table \ref{Table_Param_Covid19-1}. At a given time $n$  the dynamic  of $\mathcal{R}_{0,n}$ is given as 
\begin{align}
	\mathcal{R}_{0,n}=\dfrac{\beta_n}{\alpha_n + \gamma^{-} + \eta^{-} + \mu_n}\label{reproduction_number_t} 
\end{align}
Now, considering a day as the unit of time, we transform the previous system of ordinary differential equations into a discrete-time difference equation system. Using $\Delta=1$ and applying a forward finite differences scheme first and introduce cascade states, which results in the system \eqref{Covid-19_dynamic-discrete}. 
The approximate solution of this system can be derived and let $X^{\infty}_{n}$ be that solution.
In order to approximate the test rate $\alpha_n$, the vaccination rate $\mu_{n}$ and the infection rate $\beta_{n}$ we consider the  Equations from system \eqref{Covid-19_dynamic-discrete} and we have
\begin{eqnarray}
	I^{+}_{n+1} -I^{+}_{n} &=&  \alpha_{n} I^{-}_{n} - \eta^{+}I^{+}_{n} - \gamma^{+}I^{+}_{n} \\
	V_{1,n+1} - V_{1,n} &=&  -\psi_{1}V_{1,n} + (S_{n} + I^{-}_{n} + R^{-}_{1,n})\mu_{n} 
\end{eqnarray}
The right hand side represents respectively the daily number of new detected infected individuals which we will denote $\Delta I^{+}_n $
and the daily number of new  vaccinated individuals denoted as  $\Delta V^{}_n $. 

 \begin{align} \label{Covid-19_dynamic-discrete}
 		\begin{split}
 \left(\begin{array}{c}
		I^{-}_{n+1}\\
		R^{-}_{1,n+1}\\
		R^{-}_{2,n+1}\\[0.4ex]
		V^{-}_{n+1}\\[0.4ex]
		S_{n+1}\\[0.4ex]
		\color{black}{ I^{+}_{n+1}}\\
		\color{black}{H_{n+1}}\\
		\color{black}{C_{n+1}}\\
		\color{black}{D_{n+1}}\\
		\color{black}{R^{+}_{1,n+1}}\\
		\color{black}{R^{+}_{2,n+1}}\\
		\color{black}{R^{+}_{3,n+1}}\\
		\color{black}{V_{1,n+1}}\\
		\color{black}{V_{2,n+1}}\\
		\color{black}{V_{3,n+1}}
	\end{array} \right)
	= \left(\begin{array}{c}
		I^{-}_{n}\\
		R^{-}_{1,n}\\
		R^{-}_{2,n}\\[0.4ex]
		V^{-}_{n}\\[0.4ex]
		S_{n}\\[0.4ex]
		\color{black}{ I^{+}_{n}}\\
		\color{black}{H_{n}}\\
		\color{black}{C_{n}}\\
		\color{black}{D_{n}}\\
		\color{black}{R^{+}_{1,n}}\\
		\color{black}{R^{+}_{2,n}}\\
		\color{black}{R^{+}_{3,n}}\\
		\color{black}{V_{1,n}}\\
		\color{black}{V_{2,n}}\\
		\color{black}{V_{3,n}}\\
	\end{array} \right)
	+ \left(\begin{array}{l}		
		\beta_n\frac{I^{-}_{n}S_{n}}{N} -(\alpha_n + \gamma^{+} + \eta^{-} - \mu_n )I^{-}_{n}\\
		\gamma^{-}I^{-}_{n} -\mu_n R^{-}_{1,n} -\rho^{-}_{1}R^{-}_{1,n}\\
		- \rho^{-}_{2}R^{-}_{2,n} + \psi_{3}R^{+}_{3,n}  \\
		- \rho^{V}V^{-}_{n} + \psi_{3}V_{3,n}\\
		-\beta_n\frac{I^{-}_{n}S_{n}}{N} -\mu_n S_{n} +\rho^{-}_{1}R^{-}_{1,n} +\rho^{-}_{2}R^{-}_{2,n} +\rho^{V}V^{-}_{n}\\
		\alpha_n I^{-}_{n+1} -\eta^{+}I^{+}_{n} -\gamma^{+}I^{+}_{n}\\
		\eta^{+}I^{+}_{n} + \eta^{-}I^{-}_{n+1} -\delta H_{n} -\gamma^{H}H_{n}\\
		\delta H_{n} - \gamma^{C}C_{n} -\kappa C_{n}\\
		\kappa C_{n}\\
		\gamma^{+}I^{+}_{n} + \gamma^{H}H_{n} + \gamma^{C}C_{n} -\psi_{1}R^{+}_{1,n}\\
		\psi_{1}R^{+}_{1,n} -\psi_{2}R^{+}_{2,n} \\
		\psi_{2}R^{+}_{2,n} -\psi_{3}R^{+}_{3,n}\\
		(I^{-}_{n} + R^{-}_{1,n} + S_{n})\mu_n -\psi_{1}V_{1,n}\\
		\psi_{1}V_{1,n} -\psi_{2}V_{2,n}\\
		V_{2,n} -\psi_{3}V_{3,n}
	\end{array} \right)
	\end{split}
\end{align}

Then we are going to use the following three main relations in order to estimate the test rate, the vaccination rate and the infection rate.
\begin{eqnarray}
	\Delta I^{+}_n &=& \widehat{\alpha}_{n}I^{-}_n  - (\eta^{+} + \gamma^{+})I^{+}_{n} \\
	\Delta V_{1,n} &=& (I^{-}_n + S_n + R^{-}_{1,n})\widehat{\mu}_{n} -  \psi_{1}V_{1,n}\\
	\widehat{\mathcal{R}}_{0}^{n}&=& \dfrac{\widehat{\beta}_n}{(\gamma^{-} + \widehat{\alpha}_n + \eta^{-} + \widehat{\mu}_{n})}
	\label{Infection_rate_t} 
\end{eqnarray}
The observed time series of $\Delta I^{+}_n$, $\Delta V_n$, $\widehat{\mathcal{R}}_{0}^{n}$ is then used to estimate the daily values of
transmission rate, test rate, and vaccination rate.

Given the historical data from a certain period ($0\leq n \leq L-1$), the following algorithm shows how to measure the corresponding $\beta_{n},\alpha_{n},\mu_{n}$ based on the  relation above coupled with the equation of the discrete time model. 

\begin{algorithm}[!ht]
	\caption{Time-Dependent Parameter Estimation } \label{param_estim}
	\DontPrintSemicolon
	
	\KwData{ $\Delta I^{+}_{n}$,$\widehat{R}_{0,n}$, $\Delta V_{n}$, $0\leq n \leq L-1$}
	\KwIn{$X^{\infty}_{0}$, $\eta^{\pm}$,$\gamma^{\pm},\gamma^{H}, \gamma^{C}$,$\delta$,$\kappa$,$\rho^{V}, \rho^{-}_{1},\rho^{-}_{2},d_1,d_2$} 
	\KwOut{$\widehat{\alpha}_n,\widehat{\beta}_n,\widehat{\mu}_n$, $0\leq n \leq L-1$}
	\textbf{Initialization}~~$n:=0$ ; $X^{\infty}_{n}:=X^{\infty}_{0}$, $\widehat{\alpha}_0,\widehat{\beta}_0,\widehat{\mu}_0$
	
	\For{$n:=1$ to $L-1$}
	{
		Compute $\overline{X}^{\infty}_{n}$ using \eqref{Covid-19_dynamic-discrete}\\
		Use $\overline{X}^{\infty}_{n}$ to update $\widehat{\alpha}_{n},\widehat{\beta}_{n},\widehat{\mu}_{n}$
		\begin{align}
			\widehat{\alpha}_{n}  &=  d_1\dfrac{\Delta I^{+}_n + (\eta^{+} + \gamma^{+})I^{+}_{n}}{I^{-}_n} \\
			\widehat{\mu}_{n}&= 	 d_2\dfrac{\Delta V_n + \psi_{1}V_{n}}{(I^{-}_n + S_n + R^{-}_n)}\\
			\widehat{\beta}_n	&= \widehat{\mathcal{R}}_{0}^{n} (\gamma^{-} + \widehat{\alpha}_n + \eta^{-} + \widehat{\mu}_{n})
		\end{align}

	}

\end{algorithm}
\begin{remark}
	In the  Algorithm \ref{param_estim}, $d_1$ and $d_2$ are tuning parameters. They are not directly learned from the data during the training process; rather, they need to be set manually  to improve the algorithm’s effectiveness. 
\end{remark}

Due to the time series which contains noise, observation errors, the daily estimates of the three driving parameters are also stochastic. It is therefore necessary to smooth or filter the noise in these estimates to obtain robust predictions. Without assuming a specific underlying model for noise, we decided to apply a finite impulse response (FIR) filter to improve the robustness of our predictions. Denote by $\widetilde{\beta}_{n}$, $\widetilde{\alpha}_{n}$, $\widetilde{\mu}_{n}$ the predicted transmission rate, test rate and vaccination rate. From the FIR filters, they are predicted as follows : 
\begin{align}
	\widetilde{\beta}_{n}&=~\sum\limits_{j=1}^{J_{1}}a_{j}\widehat{\beta}_{(n-j)} + a_0\label{constr_1}\\
	\widetilde{\alpha}_{n}&=~\sum\limits_{l=1}^{J_{2}}b_{l}\widehat{\alpha}_{(n-l)} + b_0 \label{constr_2}\\
	\widetilde{\mu}_{n}&=~\sum\limits_{m=1}^{J_{3}}c_{m}\widehat{\mu}_{(n-m)} + c_0 \label{constr_3}
\end{align}
where $J_1, J_2$ and $J_3$ are the orders of the three FIR filters $(0< J_1, J_2, J_3 < L-1)$, $a_j, j=1,\ldots,J_1$, $b_l, l=1,\ldots,J_2$ and $c_m, m=1,\ldots,J_3$ are the coefficients of the impulse responses of these three FIR filters.\\
Numerous machine learning techniques, including ordinary least squares (OLS), ridge regression, etc., can be employed to estimate FIR filter coefficients (\cite{micheletti2021modeling},\cite{chen2020time} ). This study opts for ridge regression, a method that addresses the following optimization problem:
\begin{align}
	&~\min\limits_{a_j}\sum\limits_{n=J_1}^{L-1}(\widehat{\beta}_{n}- \widetilde{\beta}_{n})^2 + \overline{\alpha}_1\sum\limits_{j=0}^{J_1}a^{2}_{j} \label{Obj_1}\\
	&~\min\limits_{b_l}\sum\limits_{n=J_2}^{L-1}(\widehat{\alpha}_{n}- \widetilde{\alpha}_{n})^2 + \overline{\alpha}_2\sum\limits_{l=0}^{J_2}b^{2}_{l} \label{Obj_2}\\
	&~\min\limits_{c_m}\sum\limits_{n=J_3}^{L-1}(\widehat{\mu}_{n}- \widetilde{\mu}_{n})^2 + \overline{\alpha}_3\sum\limits_{m=0}^{J_3}c^{2}_{m} \label{Obj_3}
\end{align}
where $\overline{\alpha}_1, \overline{\alpha}_2$ and $\overline{\alpha}_3$ are the regularization parameters. \\
One motivation for using ridge regression instead of OLS is that OLS estimates can have high variance, especially when predictors are highly correlated (multicollinearity) or when the sample size is small relative to the number of predictors. To address this issue, a penalty ( $\overline{\alpha}_1, \overline{\alpha}_2$, $\overline{\alpha}_3$) is added to the regression coefficients, which reduces their magnitude and, consequently, their variance. Ridge regression, also known as $L_2$ regularization, is one technique used for this purpose.\\

Now, $J_1, J_2$ and $J_3$ was set to $7$ since during the \covid pandemic, the  data often shows a weekly pattern due to varying reporting practices and behaviors on different days of the week. For example, testing and reporting might be lower on weekends and higher on weekdays. A $ 7 $-day period captures this inherent weekly cycle.
\subsection{Model Prediction}
In this subsection, we show how we use the three FIR filters to build model prediction for infection rate, test rate and vaccination rate. Given a period of historical data $\Delta I^{+}_{n}$,$\widehat{R}_{0,n}$, $\Delta V_{n}$, $0\leq n \leq L-1$, we first estimate $\widehat{\beta}_n, \widehat{\alpha}_n$ and $\widehat{\mu}_n$, $0\leq n \leq L-1$  by using the Algorithm \ref{param_estim}. Then we solve the ridge regression with the objective function in \eqref{Obj_1}, \eqref{Obj_2}, \eqref{Obj_3} and the constraints in \eqref{constr_1}, \eqref{constr_2},\eqref{constr_3} to learn the coefficient of the FIR filters, ie., $a_j, j=1,\ldots,J_1$, $b_l, l=1,\ldots,J_2$ and $c_m, m=1,\ldots,J_3$. Once we learn these coefficients using a numerical approach, we can obtain the model prediction $\widetilde{\beta}_n, \widetilde{\alpha}_n$ and $\widetilde{\mu}_n$, at time $L-2$ and compute the prediction at time $L-1$ by the obtained trained ridge regression in \eqref{constr_1}, \eqref{constr_2}, \eqref{constr_3}.\\
The specifics of our estimation-prediction method are presented in the following  Algorithm. It's important to remind that our deterministic epidemic model relies on the assumption of large population size  which lead to $X^{\infty}_{n}$. This approximation is a consequence of the law of large numbers as describe in Section \ref{Macros_Model} . Consequently, when the population size in each compartment is relatively small, the  approximation might not yield accurate results. In such scenarios, stochastic epidemic models could be more suitable.

\begin{algorithm}[!ht]
	\caption{Model Prediction for $\beta_{n}$, $\alpha_{n}$, $\mu_n$ }\label{param_estim2}
	\DontPrintSemicolon
	
	\KwData{ $\Delta I^{+}_{n}$,$\widehat{R}_{0,n}$, $\Delta V_{n}$, $0\leq n \leq L-1$}
	\KwIn{$X^{\infty}_{0}$, $\eta^{\pm}$,$\gamma^{\pm},\gamma^{H}, \gamma^{C}$,$\delta$,$\kappa$,$\rho^{V}, \rho^{-}_{1},\rho^{-}_{2},d_1,d_2$} 
	\KwOut{$\widetilde{\alpha}_n,\widetilde{\beta}_n,\widetilde{\mu}_n$, $0\leq n \leq L-1$}
\textbf{Initialization}~~$n:=0$ ; $X^{\infty}_{n}:=X^{\infty}_{0}$, $\widehat{\alpha}_0,\widehat{\beta}_0,\widehat{\mu}_0$
	
	\For{$n:=1$ to $L-1$}
	{
		Compute $X^{\infty}_{n}$ using \eqref{Covid-19_dynamic-discrete}\\
		Use $X^{\infty}_{n}$ to update $\widehat{\alpha}_{n},\widehat{\beta}_{n},\widehat{\mu}_{n}$
		\begin{eqnarray}
			\widehat{\alpha}_{n}  &=&  d_1\dfrac{\Delta I^{+}_n + (\eta^{+} + \gamma^{+})I^{+}_{n}}{I^{-}_n} \\
			\widehat{\mu}_{n}&=& 	 d_2\dfrac{\Delta V_n + \psi_{1}V_{n}}{(I^{-}_n + S_n + R^{-}_n)}\\
			\widehat{\beta}_n	&=& \widehat{\mathcal{R}}_{0}^{n} (\gamma^{-} + \widehat{\alpha}_n + \eta^{-} + \widehat{\mu}_{n})
		\end{eqnarray}

	}
	
	Train the ridge regression using \eqref{Obj_1} - \eqref{Obj_2} - \eqref{Obj_3}\\
	Estimate $	\widetilde{\alpha}_{(L-1)}$, $\widetilde{\mu}_{(L-1)}$ and $ \widetilde{\beta}_{(L-1)} $ by \eqref{constr_1} - \eqref{constr_2}- \eqref{constr_3}\\

\end{algorithm}
\begin{remark}
	The estimations obtained from this model will be integrated into our stochastic epidemic model to generate observations for use in our EKF. Additionally, by using the model's estimation of infection rate, test rate, and vaccination rate, we can estimate the number of detected infected individuals. This allows us to predict the "dark figure" (non-detected infected individuals) within a  time window $[0,L-1]$.
\end{remark}

\section{Numerical Results}
 Here we present in-depth numerical analysis of the prediction method we proposed for estimating time-dependent infection, testing, and vaccination rates.  In this section, we use the daily reported number of infected individuals in Germany from the beginning of March 2020 through February 2023 as our benchmark. Based on the RKI dataset for daily new infections within this period, we observe the following: 

\begin{figure}[!tbp]
	\centering
	\includegraphics[width=1\textwidth]{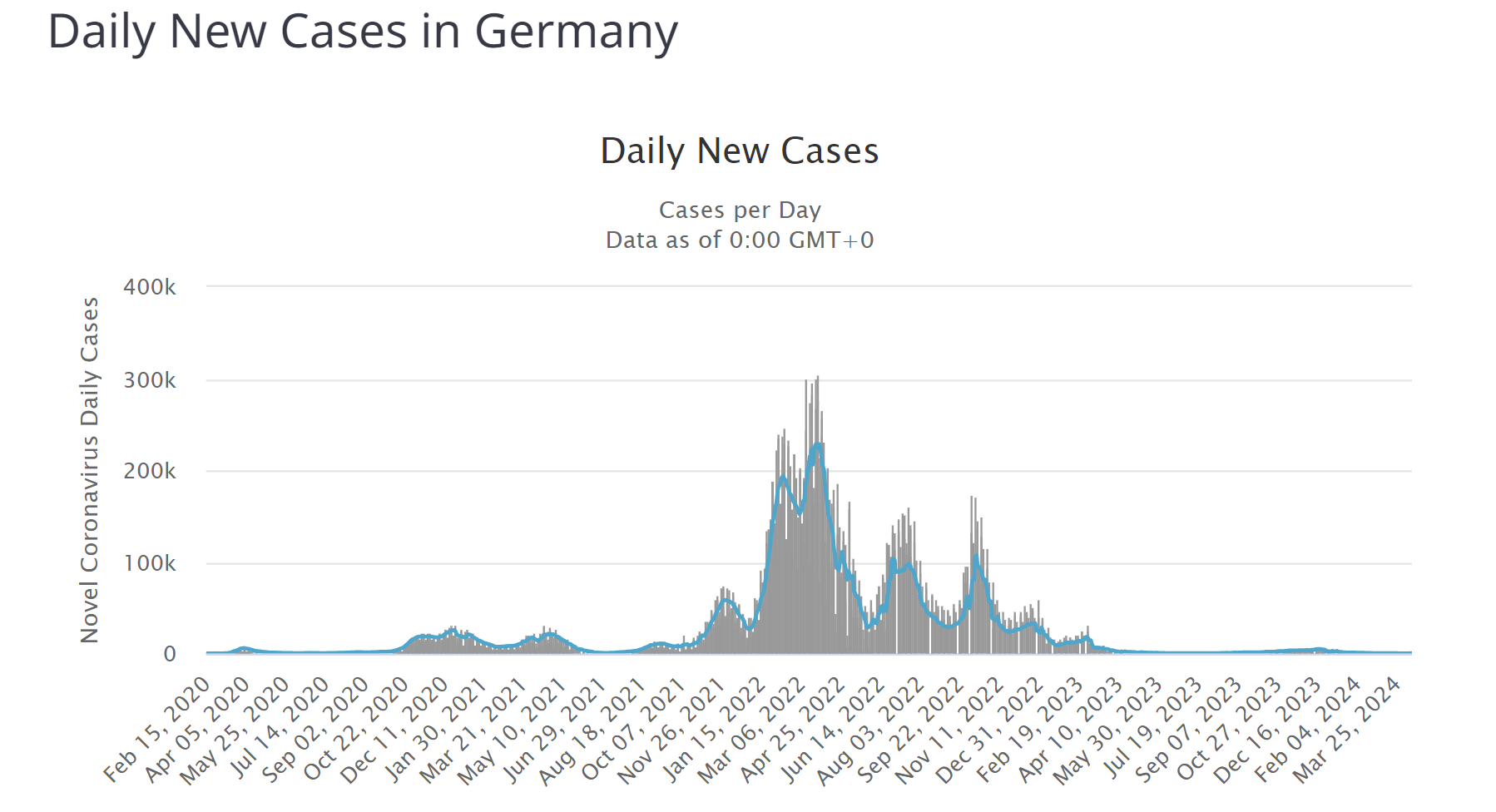}
	\caption{Daily new cases in Germany from February 15, 2020 to March 05, 2024.  \\
		Source : https://www.worldometers.info/coronavirus/country/germany/}
	\label{fig:f16}
\end{figure}

In Figure \ref{fig:f16}, the gray line represents the daily reported new cases in Germany, while the cyan line shows the 7-day moving average applied to this daily data. The calibration process we will discuss primarily focuses on these daily new cases, referred to here as "new arrivals of detected infections." This data will be used to calibrate the time-dependent parameter for the testing rate. Additionally, we will require time series data on vaccinated individuals, not distinguishing between first and second doses. For our purposes, we assume that the vaccinated compartment includes all individuals who have received at least one dose of a \covid vaccine, which will be used for the calibration of the vaccination rate. Finally, we'll use RKI's time-series estimates of the number of basic reproductions to help calibrate the infection rate.

\begin{figure}[!tbp]
	\centering
	\includegraphics[width=1\textwidth]{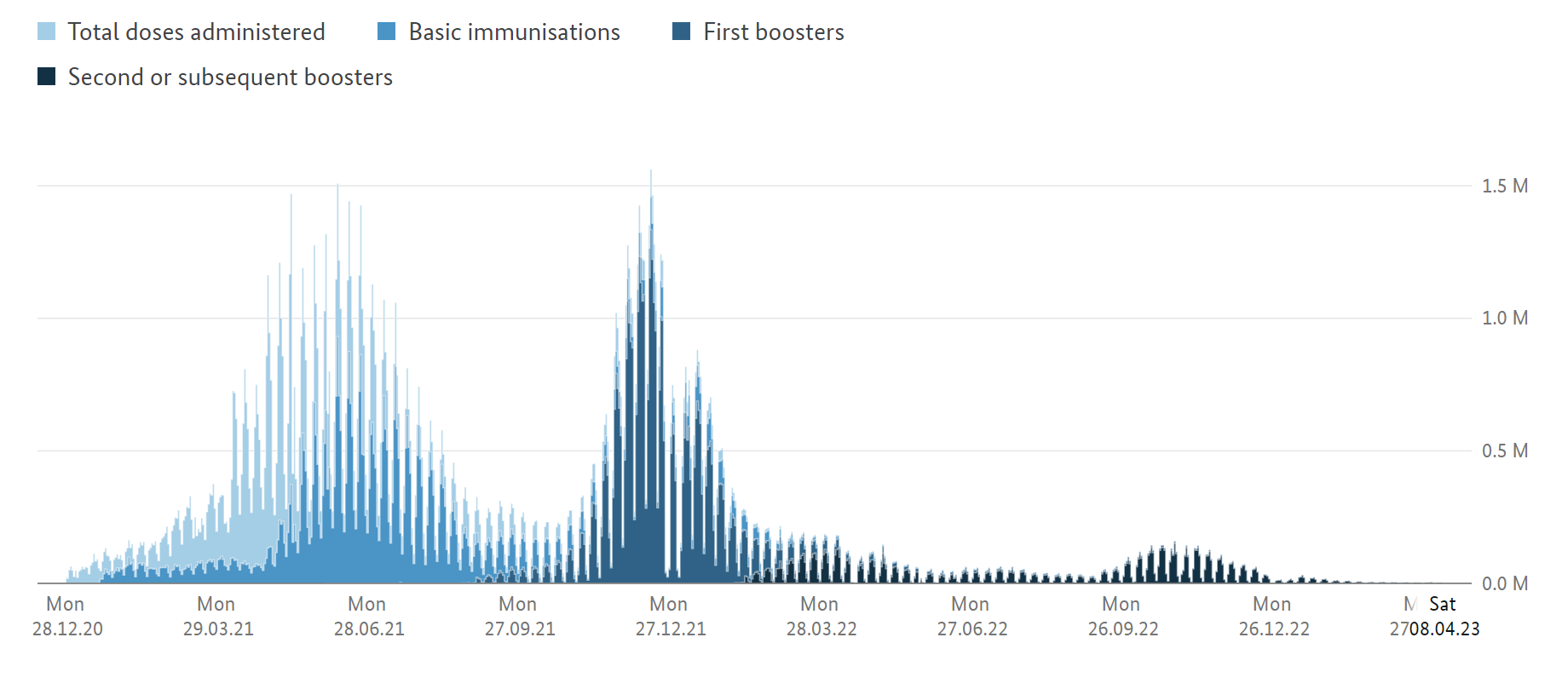}
	\caption{Daily new vaccinated in Germany from December 28, 2020 to April 08, 2023.  \\
		Source : impfdashboard.de, Robert Koch Institute, Federal Ministry of Health.}
	\label{fig:f17}
\end{figure}

\subsection{Model Calibration }$\label{Ridge_Regression}$
In this subsection, we solve the optimization problem in Equations \eqref{Obj_1}-\eqref{Obj_3} numerically. To begin, we consider our proposed model \ref{Covid-19_dynamic-discrete} with the following initial values : $I^{+}_{0}=750$ (RKI data), $R^{+}_{1,0}=331$ (RKI data). We assume that undetected cases (infected or recover) could have been 5 times higher than confirmed cases. This means that according to our model $I^{-}_{0}=3785$, $R^{-}_{1,0}=V^{-}_{0}=0$ and  $R^{-}_{2,0}=1655$. We set $V_{1,0}=V_{2,0}=V_{3,0}=R^{+}_{2,0}=R^{+}_{3,0}=0$. In addition $H_0=184$,$C_0=200$, $D_0=23$. The model parameters are given in table \ref{Table_Param_Covid19-1}.
\color{gray}
\begin{table}[h]
	\caption{Parameters values for the \covid model with cascade states. Individuals rates are given.}
	\label{Table_Param_Covid19-1}
	\begin{center}
		\begin{tabular}{ l ||l| c | r }
			Parameters & Description & Value  & Reference \\ \hline
			$\mu$ & Vaccination rate & $ \widetilde{\mu} $ & Fitted\\
			$\beta$ & Transmission rate & $ \widetilde{\beta} $  & Fitted\\
			$\gamma^{-}$ & Recovery rate (for non-detected infected) & $ 0.07 $ & Based on \cite{bhapkar2020revisited}\\
			$\gamma^{+}$ & Recovery rate (for detected infected) & $ 0.07 $ & Based on \cite{bhapkar2020revisited}\\
			$\gamma^{H}$ & Recovery rate from hospitalization & $ 0.048 $ & Based on \cite{CharpentierElieLauriereTran2020}\\
			$\gamma^{C}$ & Recovery rate from ICU & $0.02 $ & Assumed\\
			$\alpha$ & Test rate & $\widetilde{\alpha} $ & Fitted\\
			$\eta^{+}$ & Hospitalized rate (for detected infected) & $0.0023 $ & Based on \cite{CharpentierElieLauriereTran2020}\\ 
			$\eta^{-}$ & Hospitalized rate (for non-detected infected) & $0.0023 $ & Based on \cite{CharpentierElieLauriereTran2020}\\
			$\delta$ & Transfer rate to ICU (for non-detected infected) & $0.03 $ & Assumed\\
			$\kappa$ & Death rate  & $0.05 $ & Assumed\\
			$\rho^{-}_{1}$ & Rate of losing immunity (for Recovered non-detected) & $ 0.01 $ & Assumed\\
			$\rho^{-}_{2}$ & Rate of losing immunity (for Recovered partially-detected) & $ 0.03 $ & Assumed\\
			$\rho^{V}$ & Rate of losing immunity (after vaccination) & $ 0.005 $ & Assumed\\
			\hline
			
		\end{tabular}
	\end{center}
\end{table} 

\color{black}

Next, we apply Algorithm \ref{param_estim} to compute $\widehat{\alpha}_{n}$, $\widehat{\beta}_{n}$, and $\widehat{\mu}_{n}$ for $n=1, \ldots, L-1$. The outputs below illustrate the time-dependent infection rate, test rate, and vaccination rate.

\begin{figure}[!tbp]
	\centering
	\subfloat["Noisy" time dependent infection rate estimation]{\includegraphics[width=0.48\textwidth]{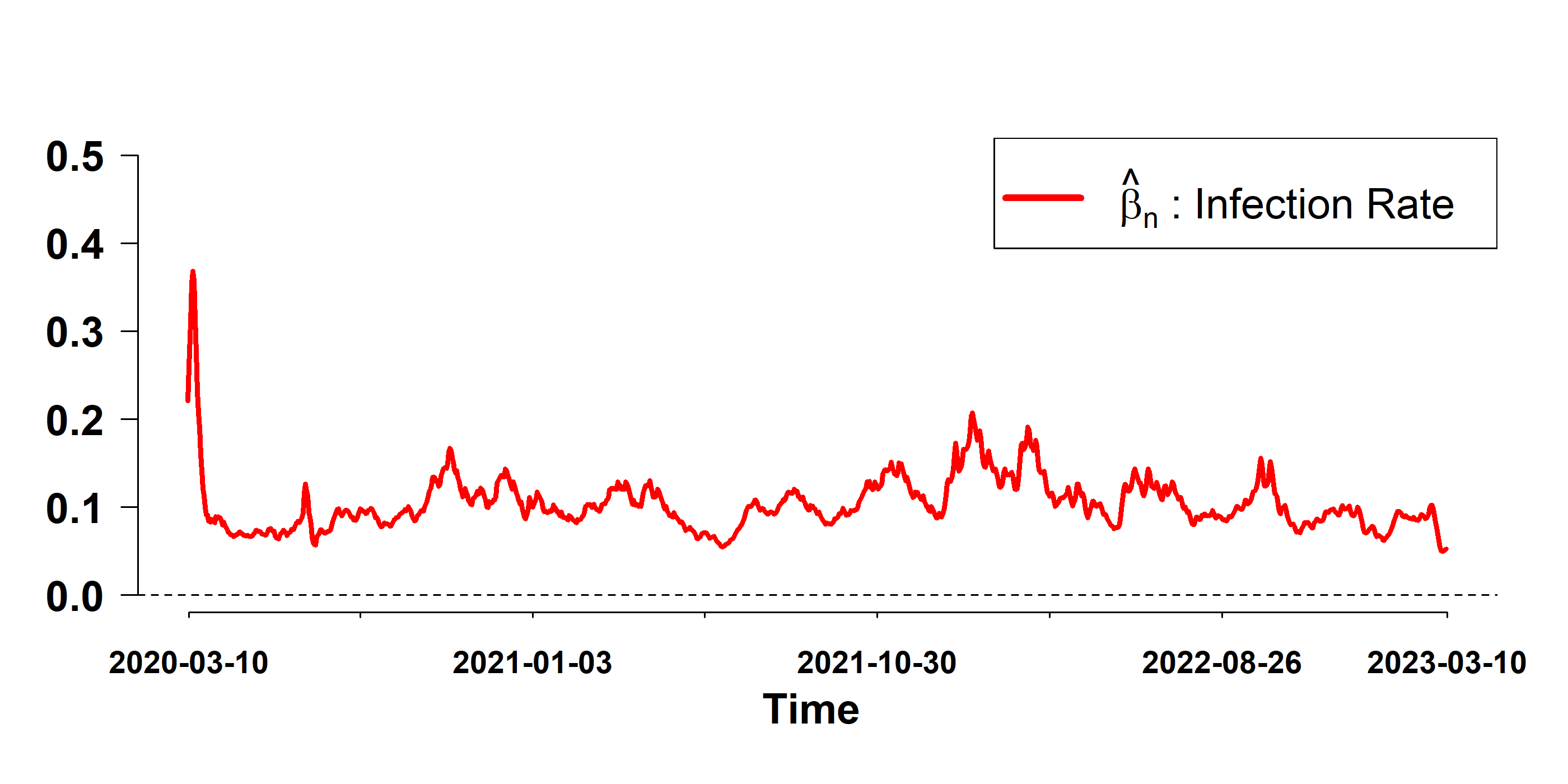}\label{fig:f18}}
	\hfill
	\subfloat["Noisy" time dependent test rate estimation]{\includegraphics[width=0.48\textwidth]{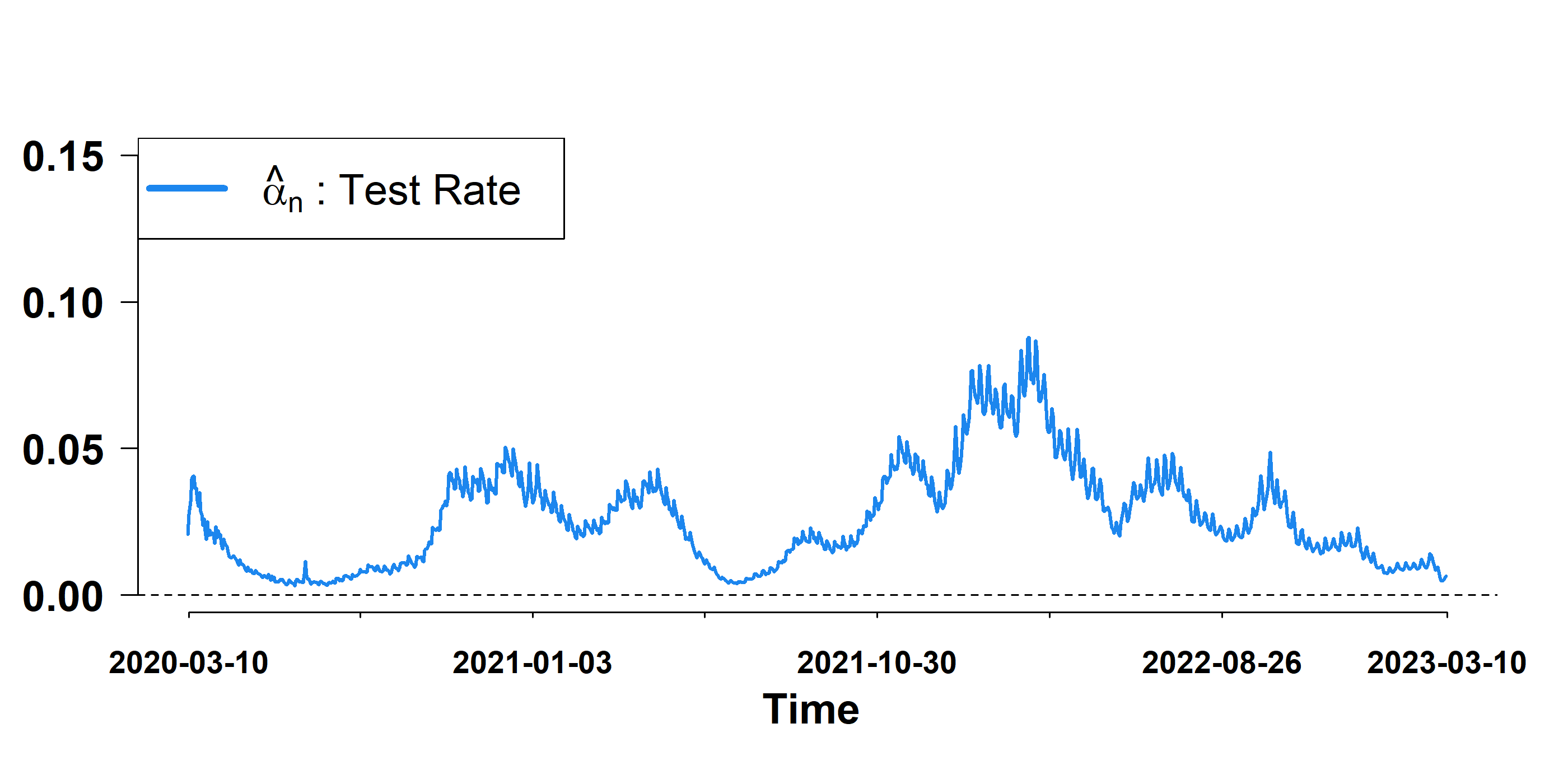}\label{fig:f19}}\\
	\subfloat["Noisy" time dependent vaccination rate estimation]{\includegraphics[width=0.48\textwidth]{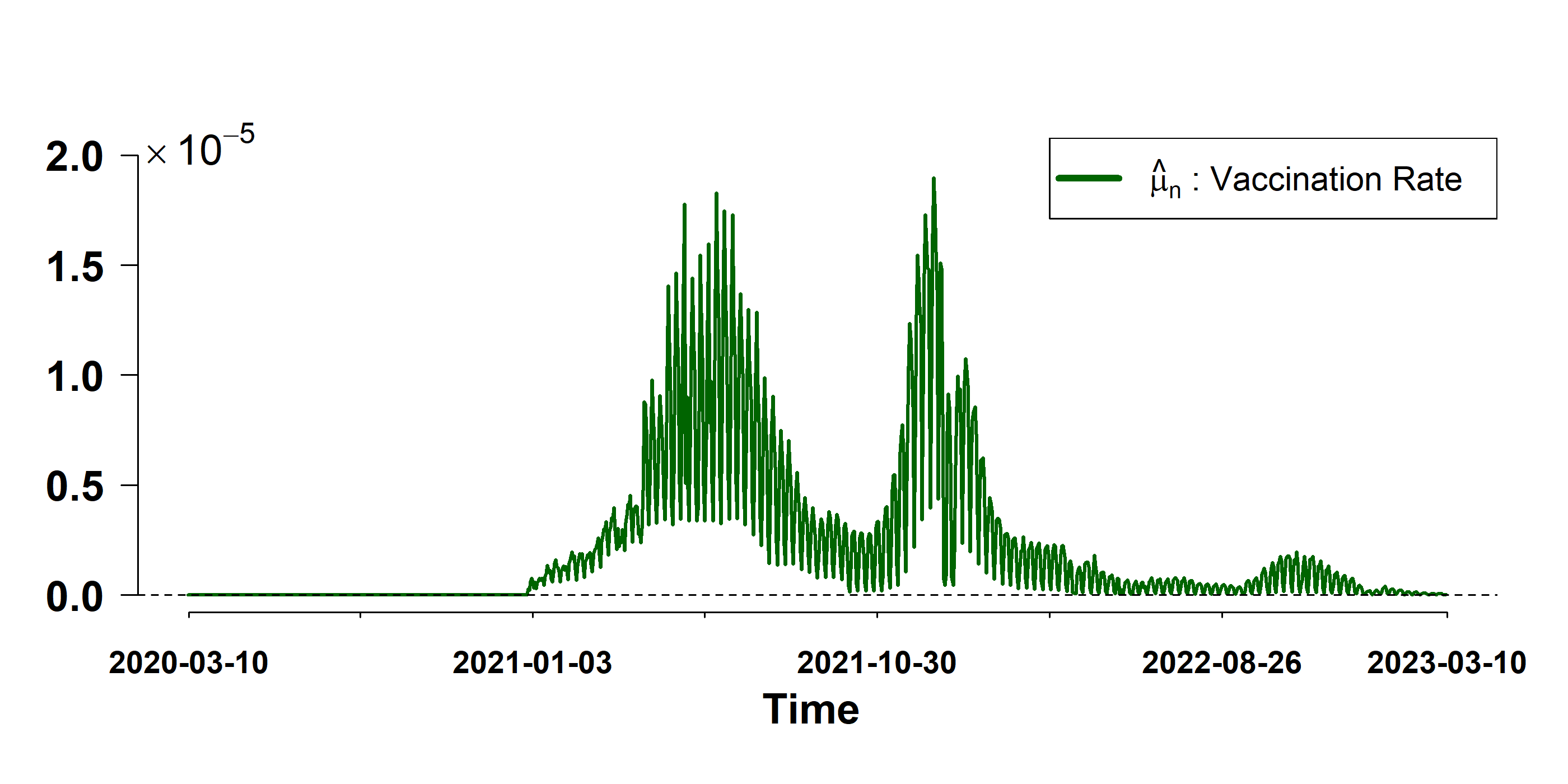}\label{fig:f20}}
		
	\caption{Estimation  infection rate $\widehat{\beta}_n$, test rate $\widehat{\alpha}_n$ and vaccination rate $\widehat{\mu}_n$.}
\end{figure}

We then fit the resulting time series data to a model using ridge regression. Fitting these data to different models is motivated by the aim to gain insights, make predictions, support data-driven decisions, and better understand the underlying data dynamics over time.\\

For solving the optimization problems in \eqref{Obj_1}-\eqref{Obj_3}, we assume that the response variables $\widetilde{\alpha}_{n}$, $\widetilde{\beta}_{n}$, and $\widetilde{\mu}_{n}$ for $n=1, \ldots, L-1$ are continuous and follow a Gaussian (normal) distribution. Using the glmnet package in R, we can solve Equations \eqref{Obj_1}-\eqref{Obj_3}, yielding a model prediction for the infection, test, and vaccination rates. First, we split our data ($\widehat{\beta}_n$, $\widetilde{\alpha}_{n}$, and $\widetilde{\mu}_{n}$ for $n=1, \ldots, L-1$) into two distinct subsets: a training set and a testing set. We allocate $70\%$ of the data to the training set and the remaining $30\%$ to the testing set. This division applies to the entire dataset based on the different estimated parameters.\\
We need to provide to the model the value for $\overline{\alpha}_{i}$, $i=1,2,3$. While the regression coefficient with respect to the three model ($\widetilde{\beta}_n$, $\widetilde{\alpha}_n$ and $\widetilde{\mu}_n$) are model-estimated parameters, $\overline{\alpha}_{i}$, $i=1,2,3$ is instead what we call a hyperparameter. It is a parameter that is used to control the training of the model, rather than one that is output in the model-training process. So how do we choose $\overline{\alpha}_{i}$, $i=1,2,3$ with respect to each model?

 Typically, we use cross-validation to find the regularization parameter that generates the best-fitting model. Fortunately, the glmnet package can do this automatically. After running the ridge regression we obtain the following Figures (\ref{fig:f21},\ref{fig:f24},\ref{fig:f25}) which illustrate the cross-validation.
\begin{figure}[!tbp]
	\centering
	\subfloat[Mean square error with respect to Log($\overline{\alpha}_{1})$]{\includegraphics[width=0.48\textwidth]{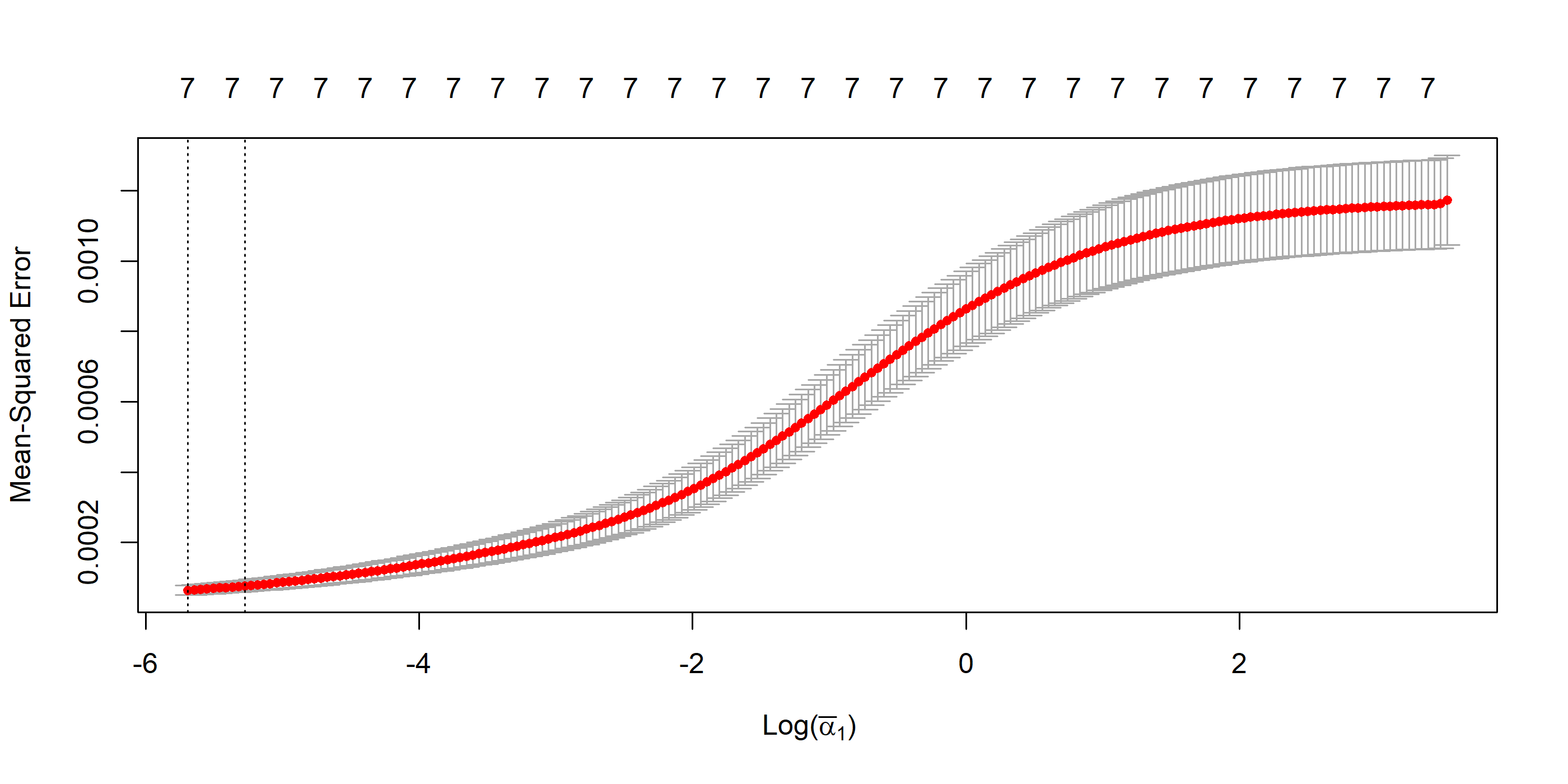}\label{fig:f21}}
	\hfill
	\subfloat[Mean square error with respect to Log($\overline{\alpha}_{2})$]{\includegraphics[width=0.48\textwidth]{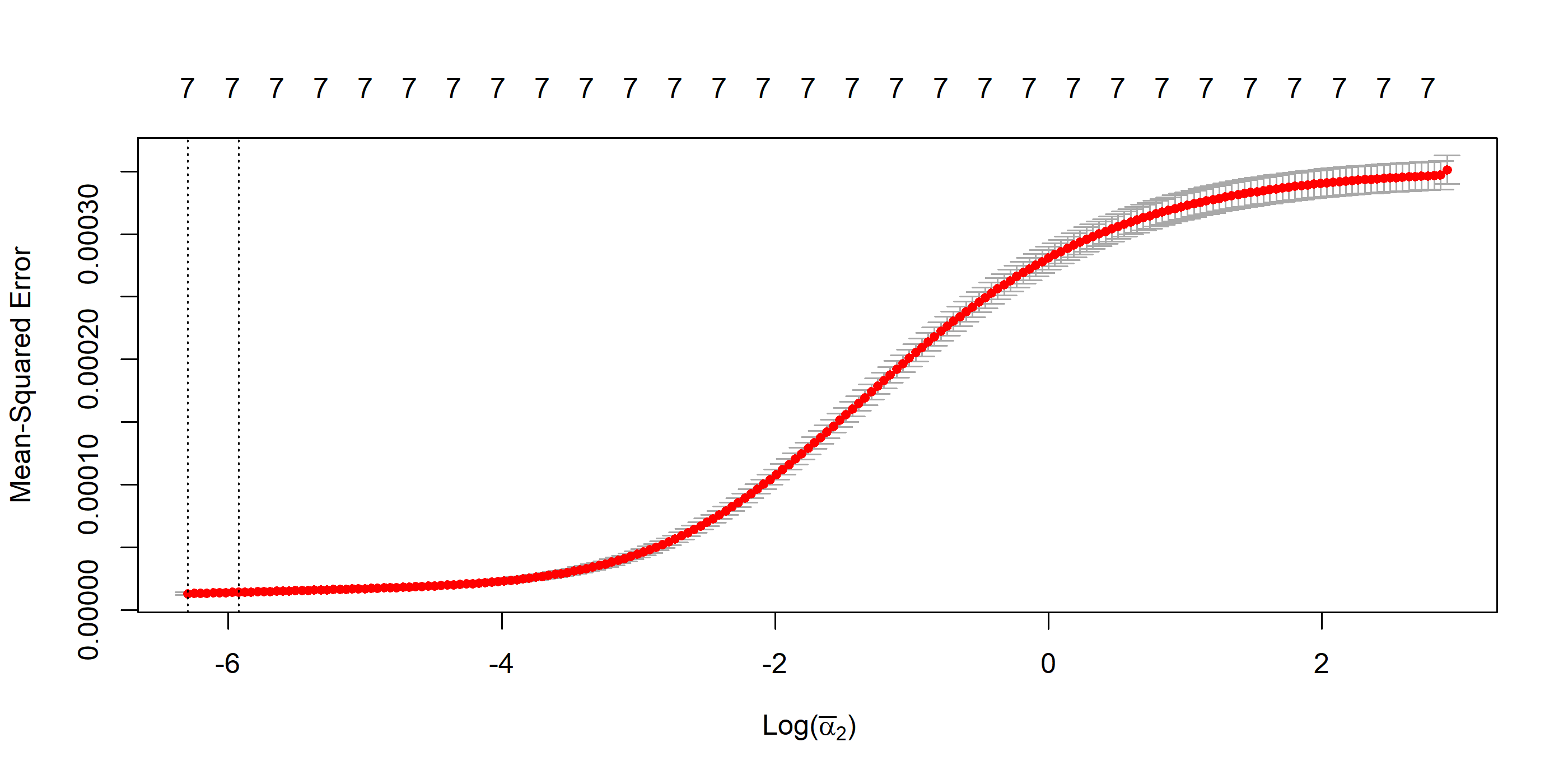}\label{fig:f24}}\\
	\subfloat[Mean square error with respect to Log($\overline{\alpha}_{3})$]{\includegraphics[width=0.48\textwidth]{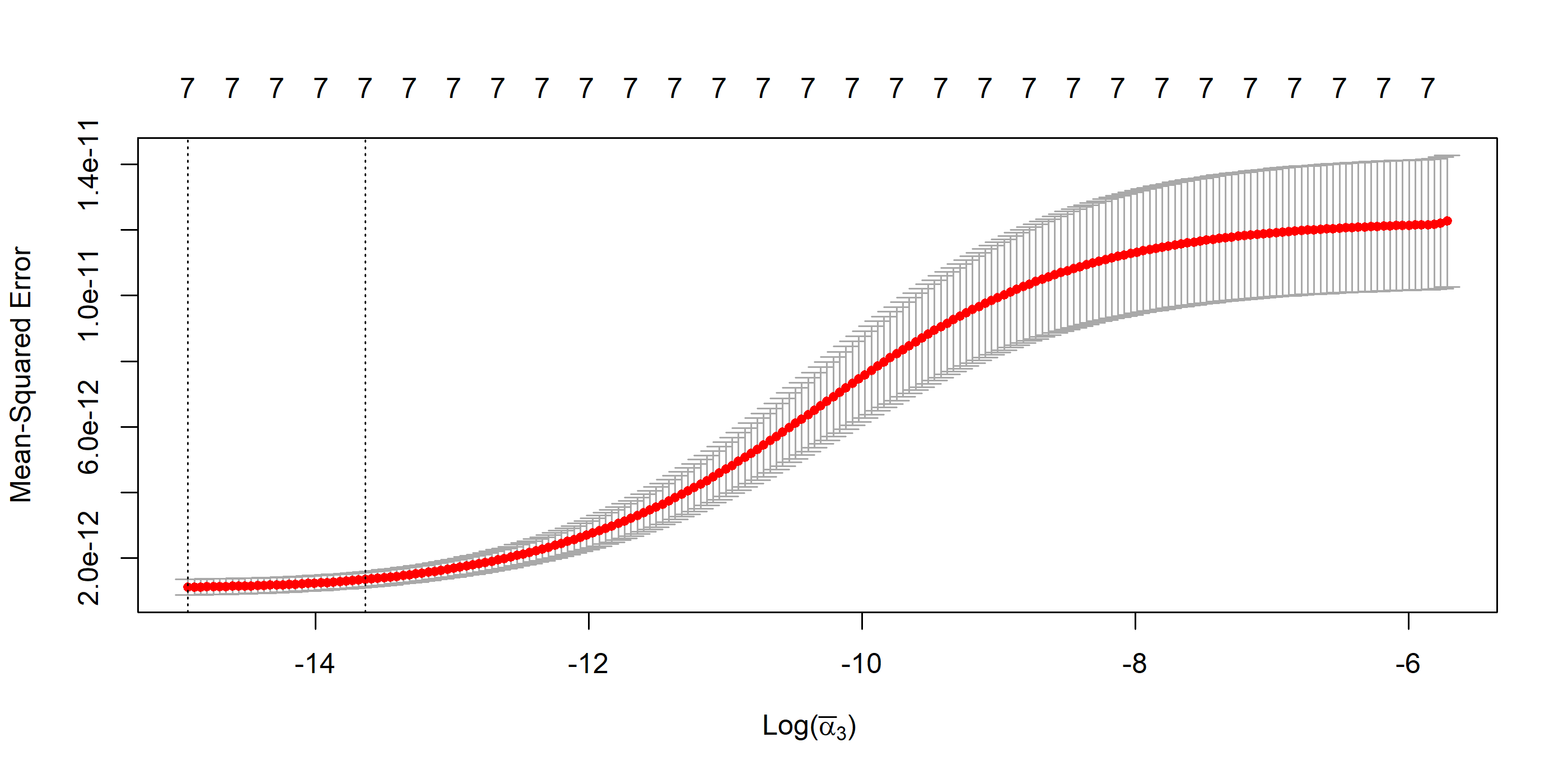}\label{fig:f25}}

	\caption{Cross-validation curve (red dotted line) with upper and lower standard deviation bounds along $\overline{\alpha}_{2}$ and $\overline{\alpha}_{2}$ sequences respectively.  }
\end{figure}
The graph (\ref{fig:f21},\ref{fig:f24} and \ref{fig:f25}) illustrates the cross-validation process for Ridge regression, showing the relationship between the regularization parameter $\overline{\alpha}_{i}$, $i=1,2,3$ and the mean square error (MSE) over several folds. The x-axis represents the log values of the regularization parameters $\log(\overline{\alpha}_{i}))$, $i=1,2,3$ while the y-axis displays the corresponding mean square error. As a reminder, a low mean squared error indicates better fit. The upper part of the graph shows the number of effective variables (non-zero coefficients) included in the model for each value of $\overline{\alpha}_{i}$, $i=1,2,3$. The solid curve represents the cross-validated mean MSE, and the surrounding shaded area represents the standard error. Two particular values are marked with dotted vertical lines. The first is the $\overline{\alpha}_{1})$ value that resulted in the lowest mean-squared error (called lambda.min by the package). The second is the largest value of $\overline{\alpha}_{1})$ that provides cross-validated error within one standard error of the minimum (called lambda.1se). We often are more interested in this latter value than the former, as one goal of these methods is to prevent over-fitting. Thus, we are often willing to sacrifice some goodness of fit for the sake of greater regularization.\\
With this in mind, let’s choose the model generated with a lambda equal to lambda.1se as the model we want to go with. We can now test this model on our test dataset by computing model-predicted values for the test dataset and computing the mean square error. 
The obtained models prediction  according to the estimated coefficient is given in \eqref{Model_reg1}, \eqref{Model_reg2}, \eqref{Model_reg3}. 
\begin{align}\label{Model_reg1}
	\begin{split}
	\widetilde{\beta}_{n}&\approx~\sum\limits_{j=1}^{J}\widetilde{a}_{j}\widehat{\beta}_{(n-j)} + a_0 \\
	&= 0.533\widehat{\beta}_{(n-1)}+ 0.250\widehat{\beta}_{(n-2)} + 0.080\widehat{\beta}_{(n-3)} + 0.012\widehat{\beta}_{(n-4)}\\
	& + 0.009\widehat{\beta}_{(n-5)}\
	+  0.017\widehat{\beta}_{(n-6)} +0.013\widehat{\beta}_{(n-7)} + 0.01
	\end{split}		
\end{align}

\begin{align}\label{Model_reg2}
	\begin{split}
	\widetilde{\alpha}_{n}&\approx~\sum\limits_{j=1}^{J}\widetilde{a}_{j}\widehat{\alpha}_{(n-j)} + a_0 \\
	&= 0.278\widehat{\alpha}_{(n-1)}+ 0.142\widehat{\alpha}_{(n-2)} + 0.075\widehat{\alpha}_{(n-3)} + 0.061\widehat{\alpha}_{(n-4)}\\
	& + 0.086\widehat{\alpha}_{(n-5)}\
	+  0.154\widehat{\alpha}_{(n-6)} +0.18\widehat{\alpha}_{(n-7)} + 0.001
\end{split}		
\end{align}

\begin{align}\label{Model_reg3}
	\begin{split}
	\widetilde{\mu}_{n}&\approx~\sum\limits_{j=1}^{J}\widetilde{a}_{j}\widehat{\mu}_{(n-j)} + a_0 \\
	&= 0.270\widehat{\mu}_{(n-1)}+ 0.0773\widehat{\mu}_{(n-2)}  -0.0830\widehat{\mu}_{(n-3)}  -0.018\widehat{\mu}_{(n-4)}\\
	& + 0.088\widehat{\mu}_{(n-5)}\
	+  0.021\widehat{\mu}_{(n-6)} +0.294\widehat{\mu}_{(n-7)} + 2\times 10^{-6}	
\end{split}	
\end{align}
At this stage, we can apply the obtained model predictor to filter out noise from $\widehat{\beta}_n$.
\begin{figure}[!tbp]
	\centering
	\subfloat[Fitted time-dependent infection rate $\widetilde{\beta}_n$ ]{\includegraphics[width=0.48\textwidth]{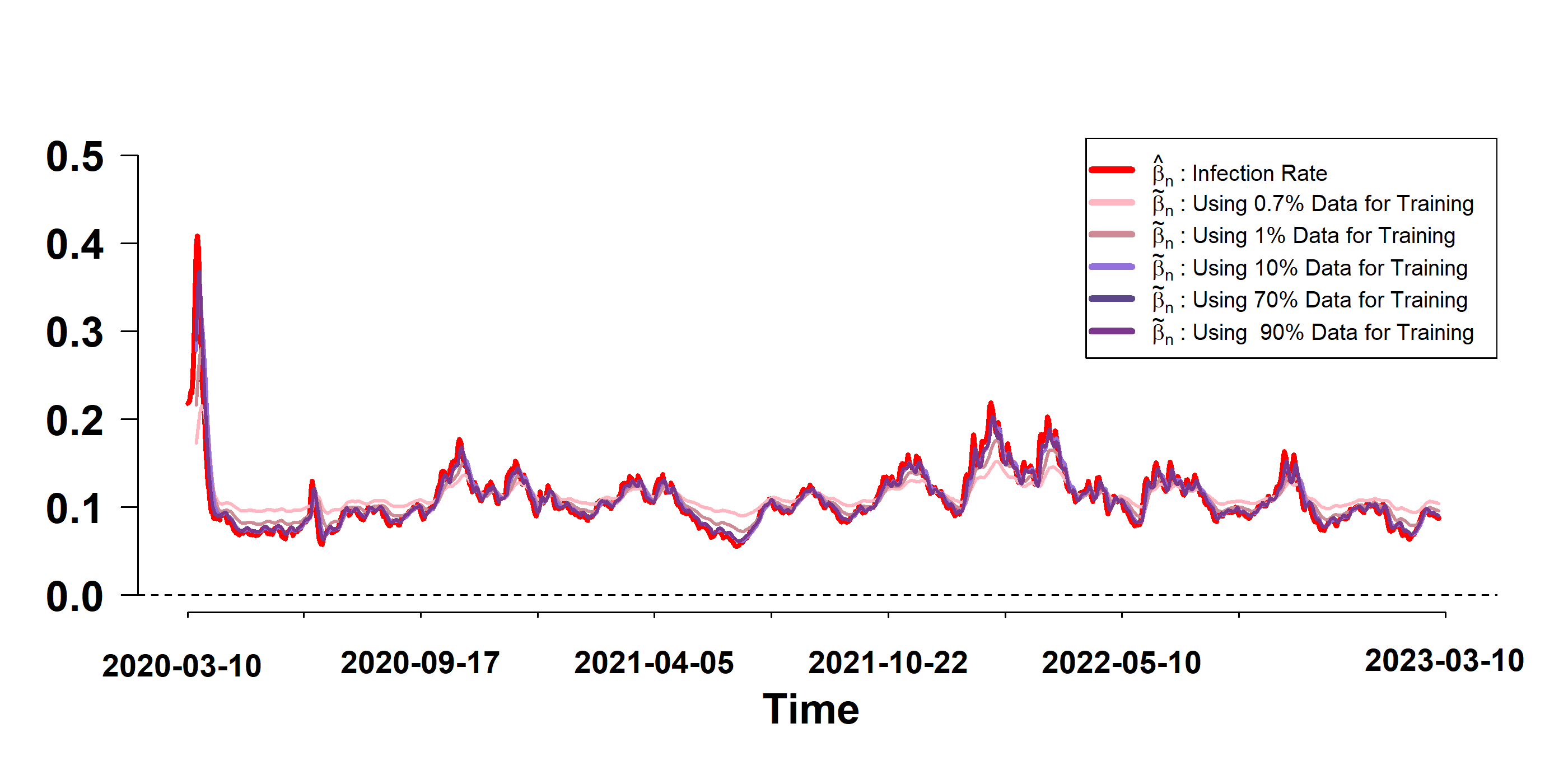}\label{fig:f22}}
	\hfill
	\subfloat[Fitted time-dependent infection rate $\widetilde{\beta}_n$ (Zoom in)]{\includegraphics[width=0.48\textwidth]{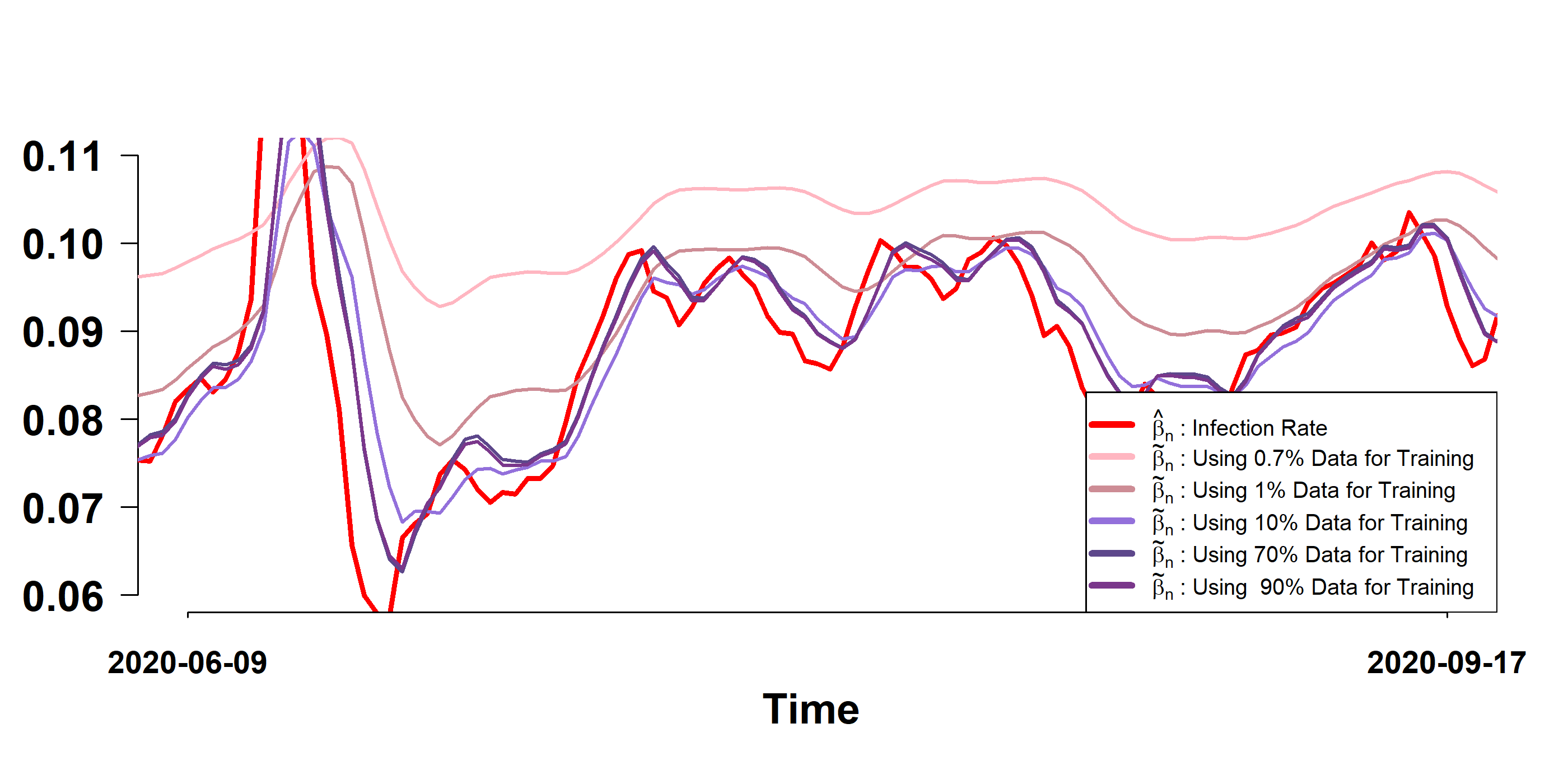}\label{fig:f23}}
	
	\caption{Estimation of $\widetilde{\beta}_n$ using ridge regression, relative to the proportion of data allocated for model training.  }
	
\end{figure}
Figures \ref{fig:f22} and \ref{fig:f23} show that increasing the amount of training data improves the model's estimation accuracy. Based on these figures, using $70\%$ of the "noisy" data (specifically, $\widehat{\beta}_n$) is sufficient for the model to produce a good approximation in terms of MSE. Beyond $70\%$ of data for training, we observe a convergence, as outputs become nearly indistinguishable.
We now apply this methodology to the time series for the test rate, $\widehat{\alpha}_{n}$, and the vaccination rate, $\widehat{\mu}_{n}$. The following images illustrate our results.
\begin{figure}[!tbp]
	\centering
	\subfloat[Fitted time-dependent infection rate $\widetilde{\alpha}_n$ ]{\includegraphics[width=0.48\textwidth]{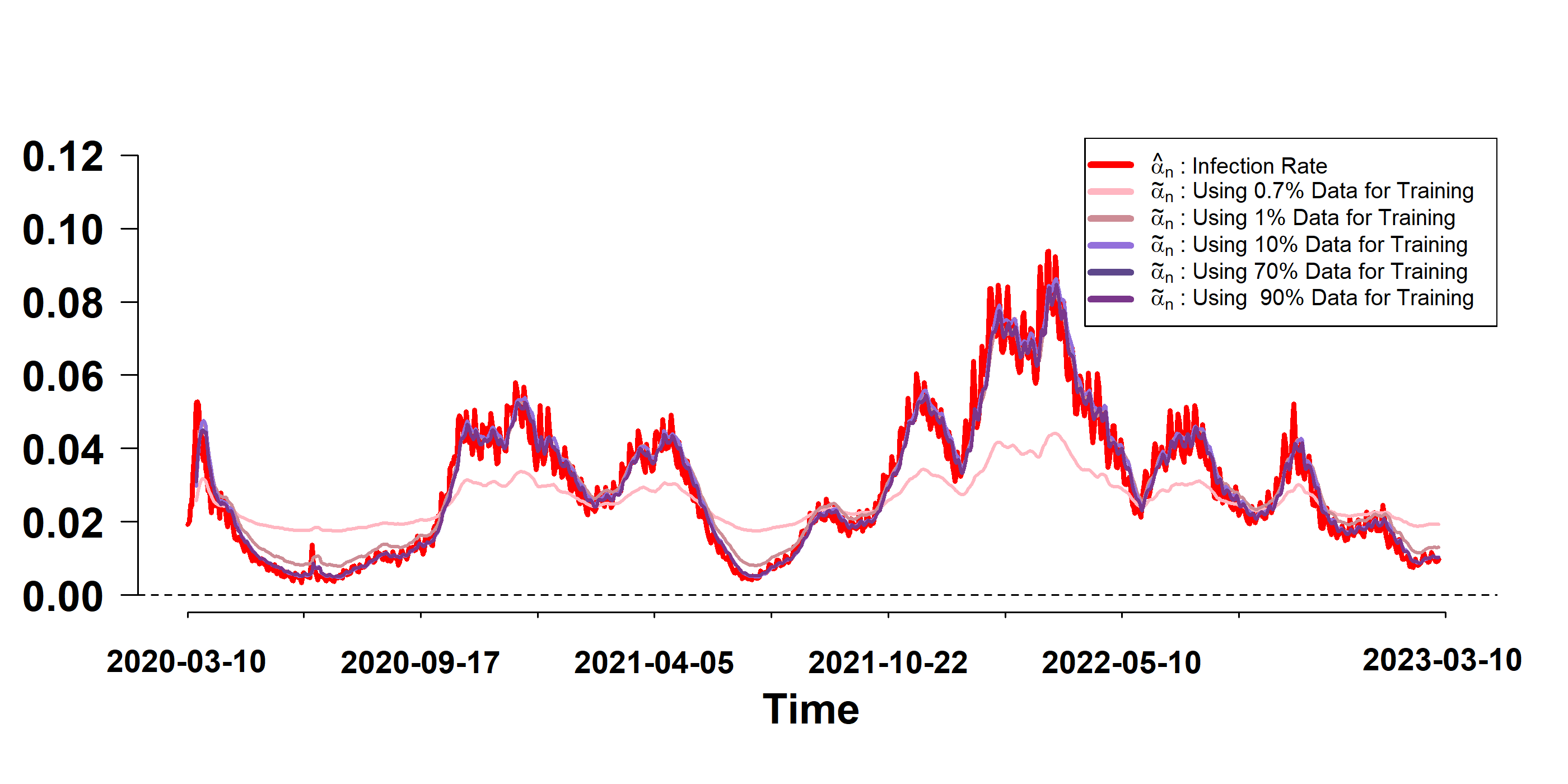}\label{fig:f26}}
	\hfill
	\subfloat[Fitted time-dependent infection rate $\widetilde{\alpha}_n$ (Zoom in)]{\includegraphics[width=0.48\textwidth]{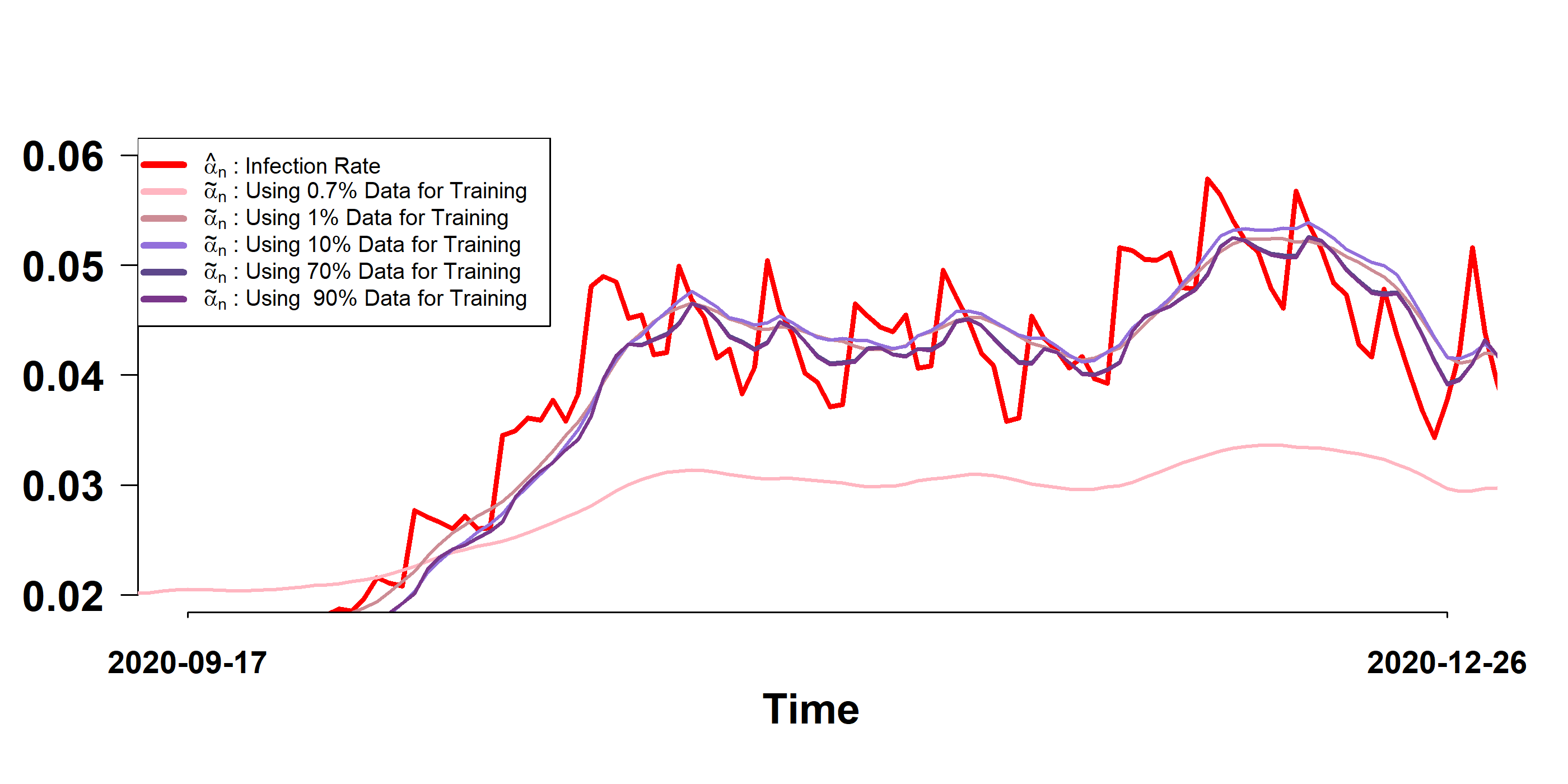}\label{fig:f27}}
	
	\caption{Estimation of $\widetilde{\alpha}_n$ using ridge regression, relative to the proportion of data allocated for model training.  }
	
\end{figure}

\begin{figure}[!tbp]
	\centering
	\subfloat[Fitted time-dependent infection rate $\widetilde{\mu}_n$ ]{\includegraphics[width=0.48\textwidth]{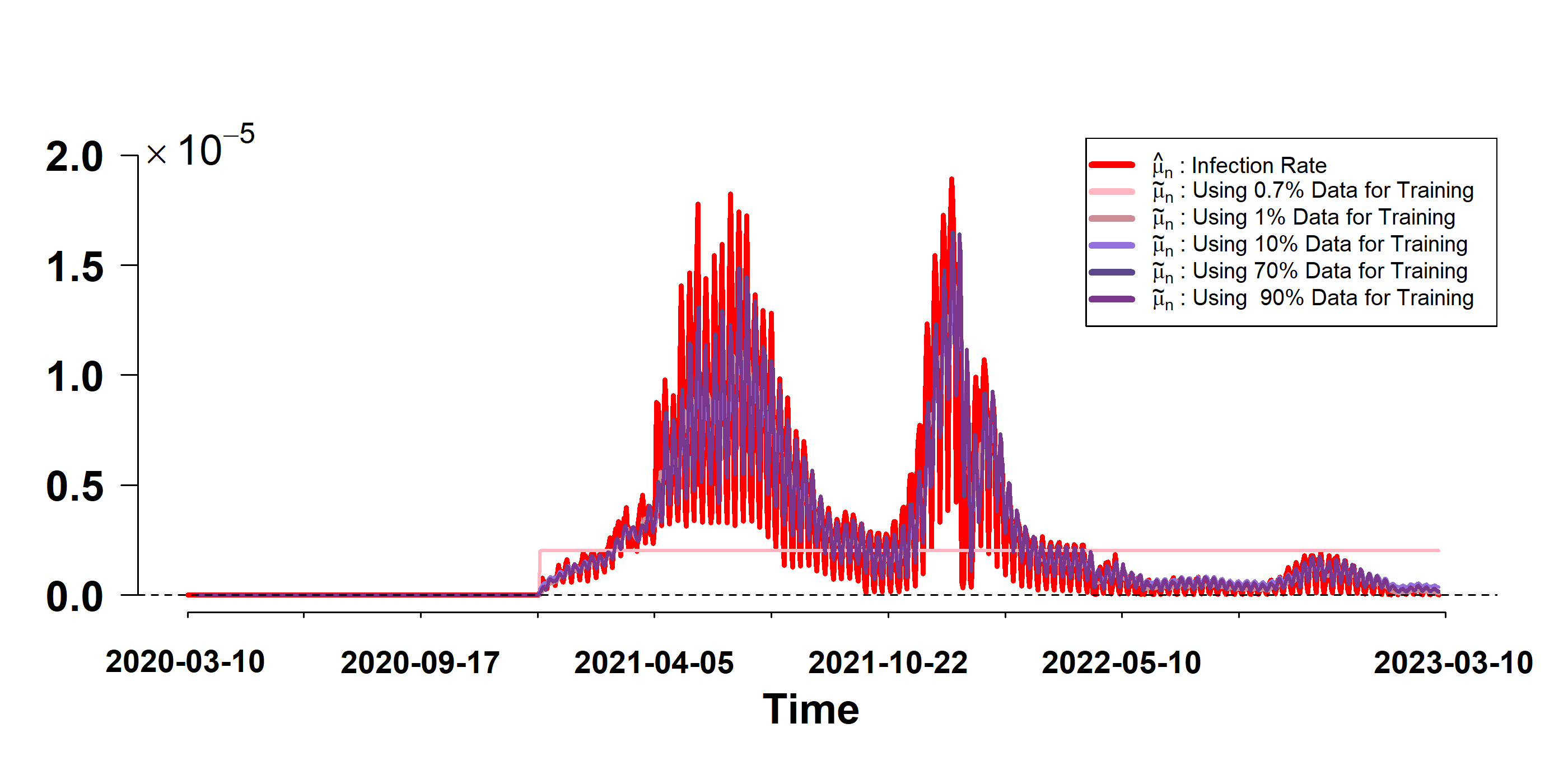}\label{fig:f28}}
	\hfill
	\subfloat[Fitted time-dependent infection rate $\widetilde{\mu}_n$ (Zoom in)]{\includegraphics[width=0.48\textwidth]{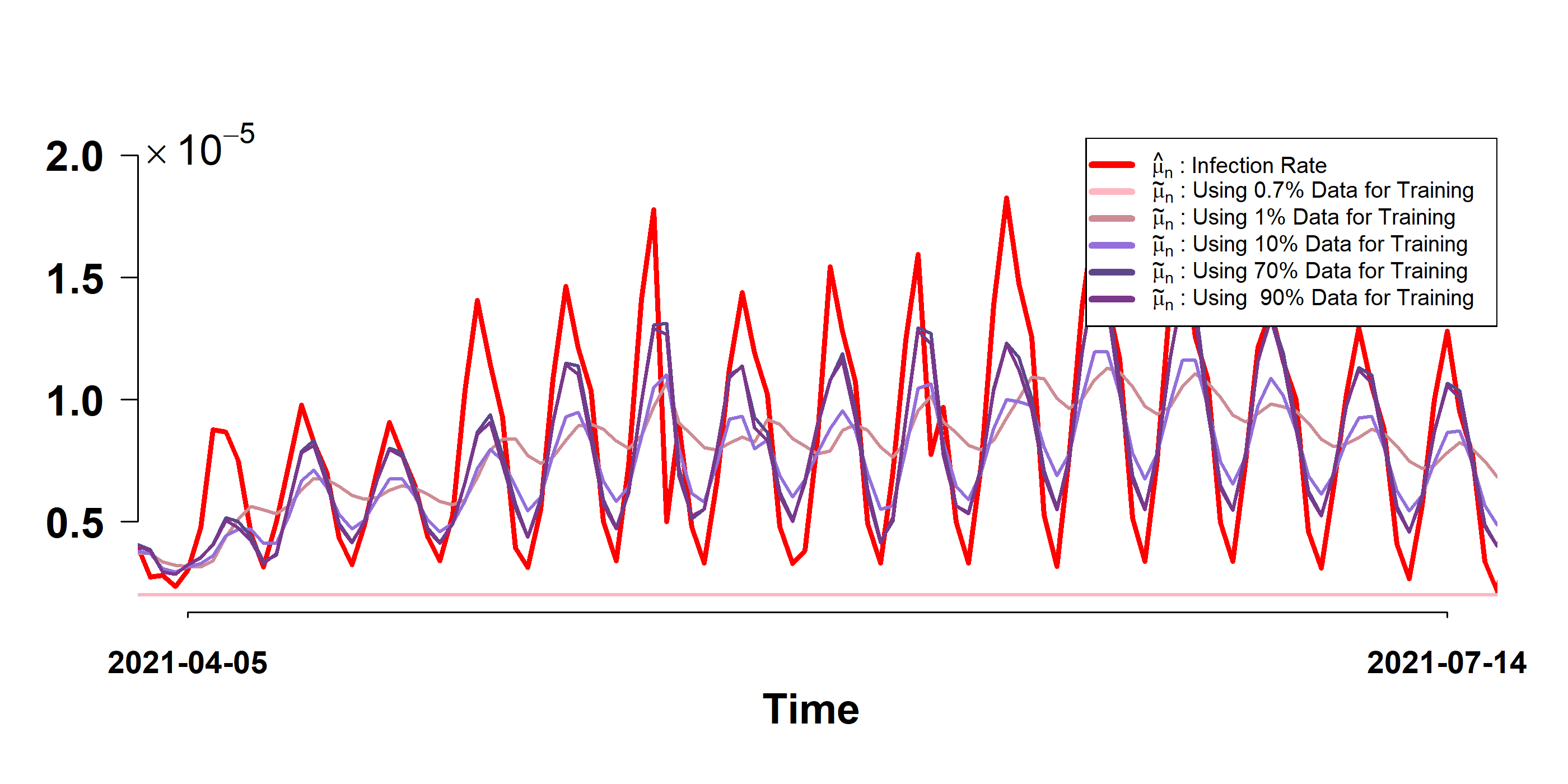}\label{fig:f30}}
	
	\caption{Estimation of $\widetilde{\mu}_n$ using ridge regression, relative to the proportion of data allocated for model training.  }
	
\end{figure}

\subsection{Simulation of  \covid Model}
\label{sec:simulation_covid}
In this section, we simulate the proposed \covid model using parameters derived from two sources: the regression analysis results outlined in the previous subsection \ref{Ridge_Regression} and values presented in Table \ref{Table_Param_Covid19-1}. The initial conditions are detailed in Subsection \ref{Ridge_Regression}. The following Figures (\ref{fig:f31}-\ref{fig:f32}) illustrates the dynamics of infected individuals over time.

\begin{figure}[!tbp]
	\centering
	\subfloat[Dynamic of infected detected/non-detected individuals ]{\includegraphics[width=0.48\textwidth]{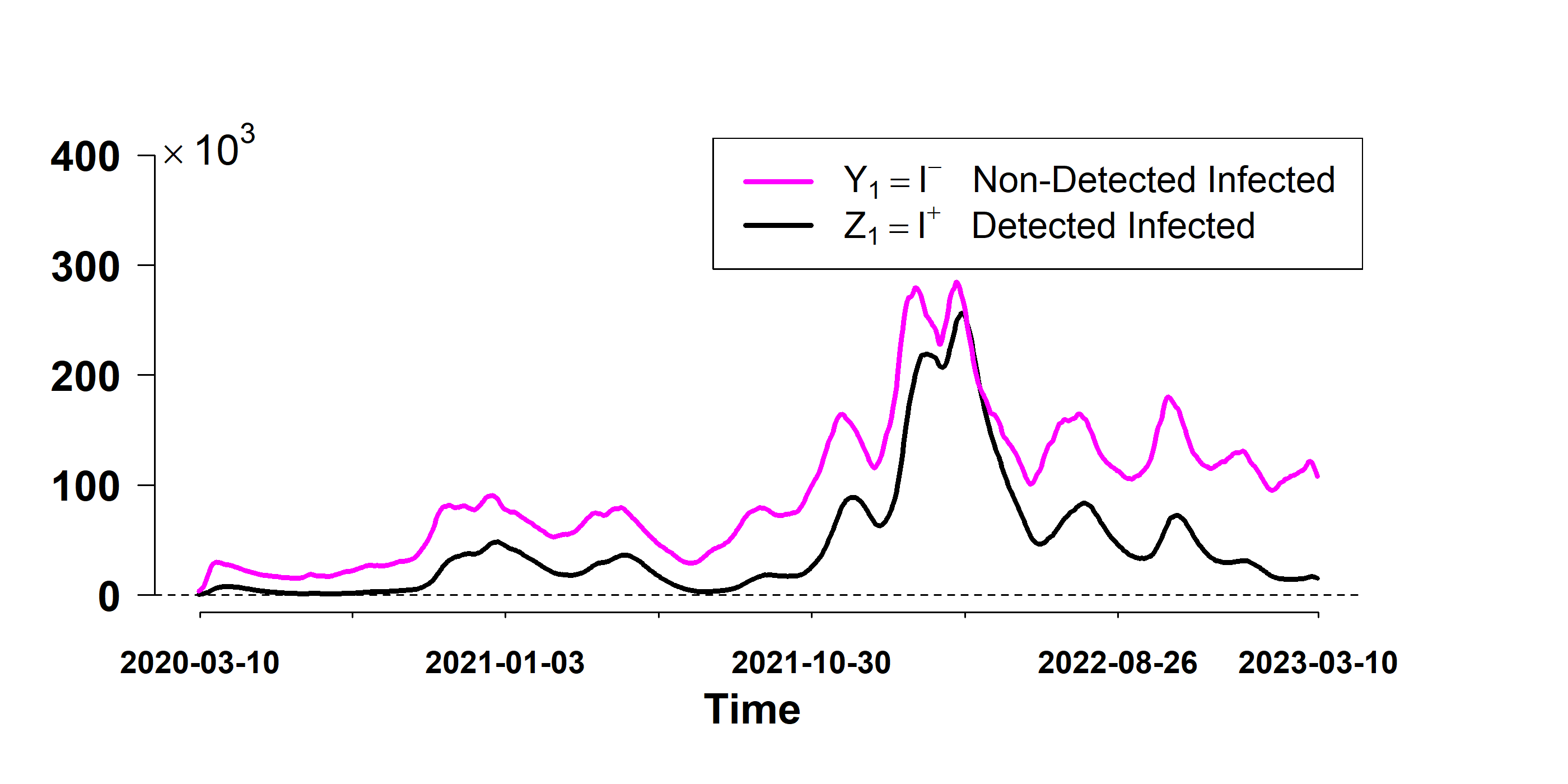}\label{fig:f31}}
	\hfill
	\subfloat[Dynamic of total infected individuals ]{\includegraphics[width=0.48\textwidth]{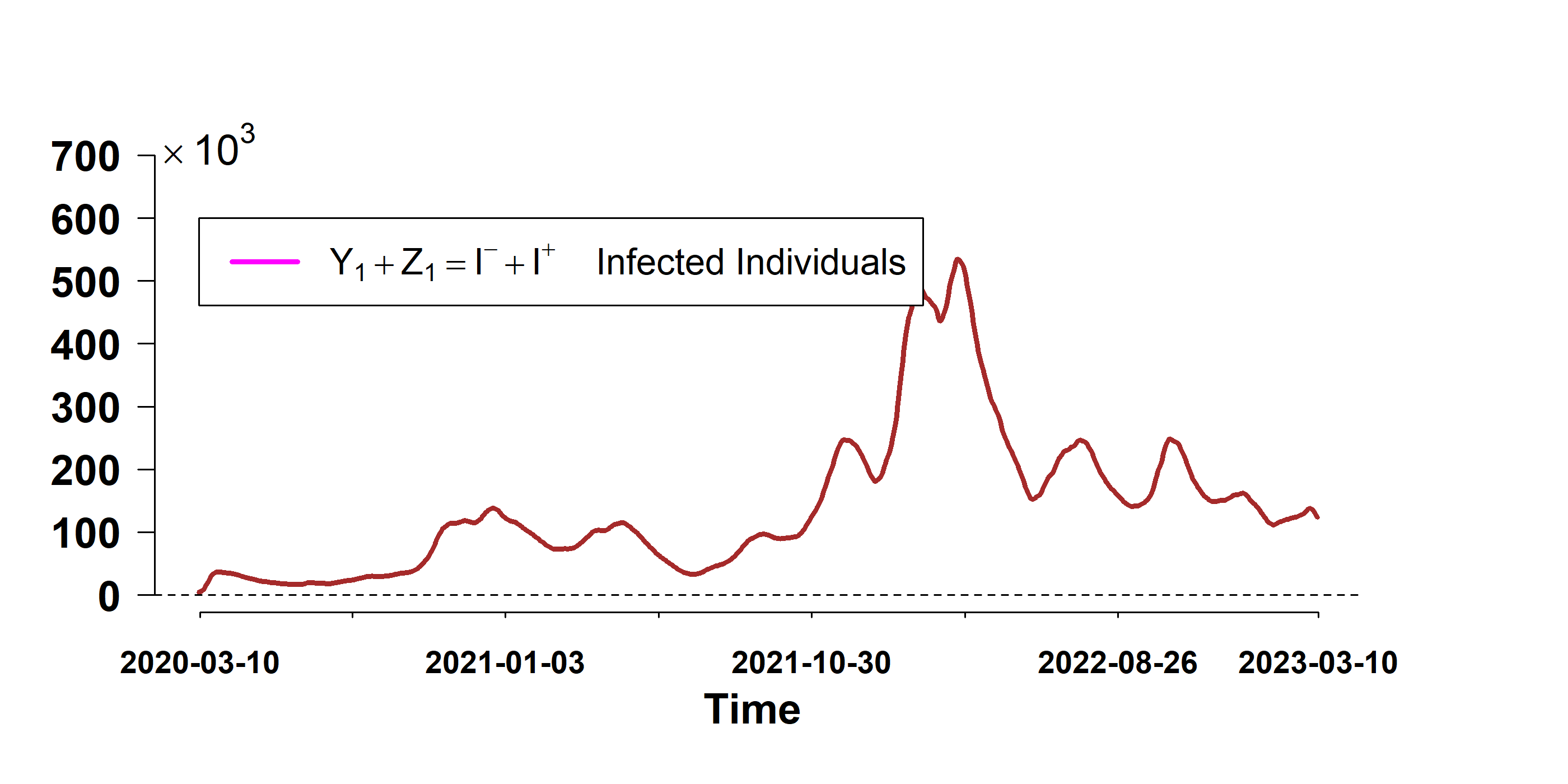}\label{fig:f32}}
	
	\caption{Dynamics of Infected individuals.  }
	
\end{figure}

The left panel \ref{fig:f31} displays the number of detected and non-detected infections and the right panel \ref{fig:f32}, shows the cumulative number of infections, calculated as the sum of detected and undetected cases.
The overall trend closely resembles the reported detected cases in Germany during the same period \ref{fig:f17} . Furthermore, the model estimates the number of non-detected cases over time. This information, unavailable during the pandemic, could provide valuable decision-making insights.\\
The model also visualizes additional metrics, such as the number of ICU patients and deaths (Fig. \ref{fig:f33}). Notably, the model considers deaths  only from ICU cases, reflecting the situation in Germany and other developed countries.
\begin{figure}[!tbp]
	\centering
	\subfloat[Dynamic of Hospitalized, ICU and Dead individuals. ]{\includegraphics[width=0.48\textwidth]{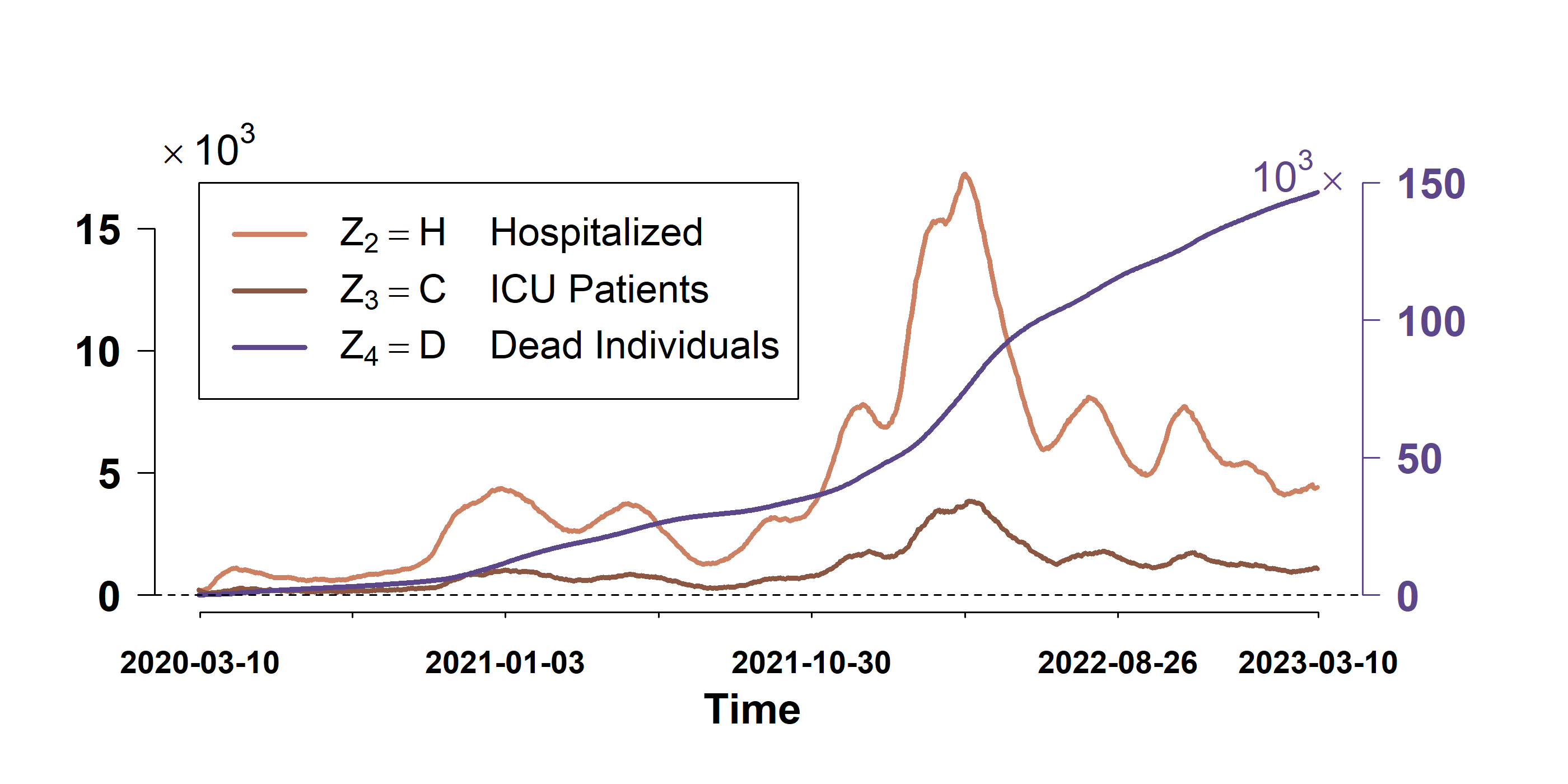}\label{fig:f33}}
	\hfill
	\subfloat[Dynamic of non detected recovered individuals ]{\includegraphics[width=0.48\textwidth]{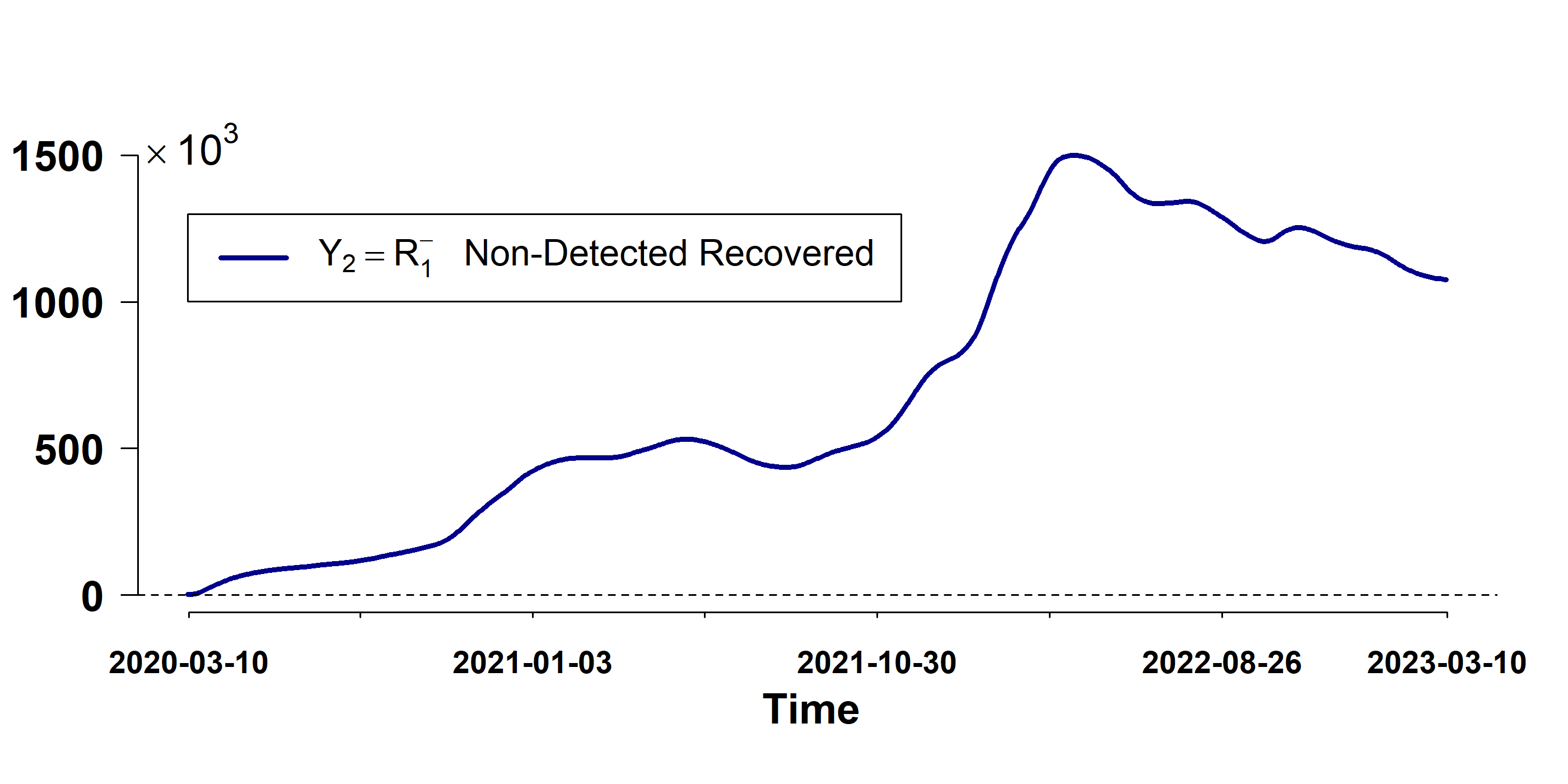}\label{fig:f36}}
	
	\caption{Dynamics of Hospitalization, ICU, Dead individuals/ Recovered non-detected.  }
	
\end{figure}
Based on the graph \ref{fig:f31}, it is evident that around May 2022, the estimated number of undetected infections diverged from the curve of detected infections. This shift could be attributed to the intensive vaccination campaign in Germany at that time. The combination of high vaccination rates \ref{fig:f19} and robust testing \ref{fig:f20} significantly reduced the number of undetected cases. However, by the summer of the same year, the number of undetected cases began to rise noticeably. This increase may be due to a considerable reduction in both testing and vaccination efforts.

The inclusion of cascading compartments enables us to track the progression of recovered (or vaccinated) individuals across various stages until their eventual transition into the compartments of recovered (or vaccinated) individuals with fading immunity (see Figure \ref{fig:f35} ).
 
\begin{figure}[!tbp]
	\centering
	\includegraphics[width=0.48\textwidth]{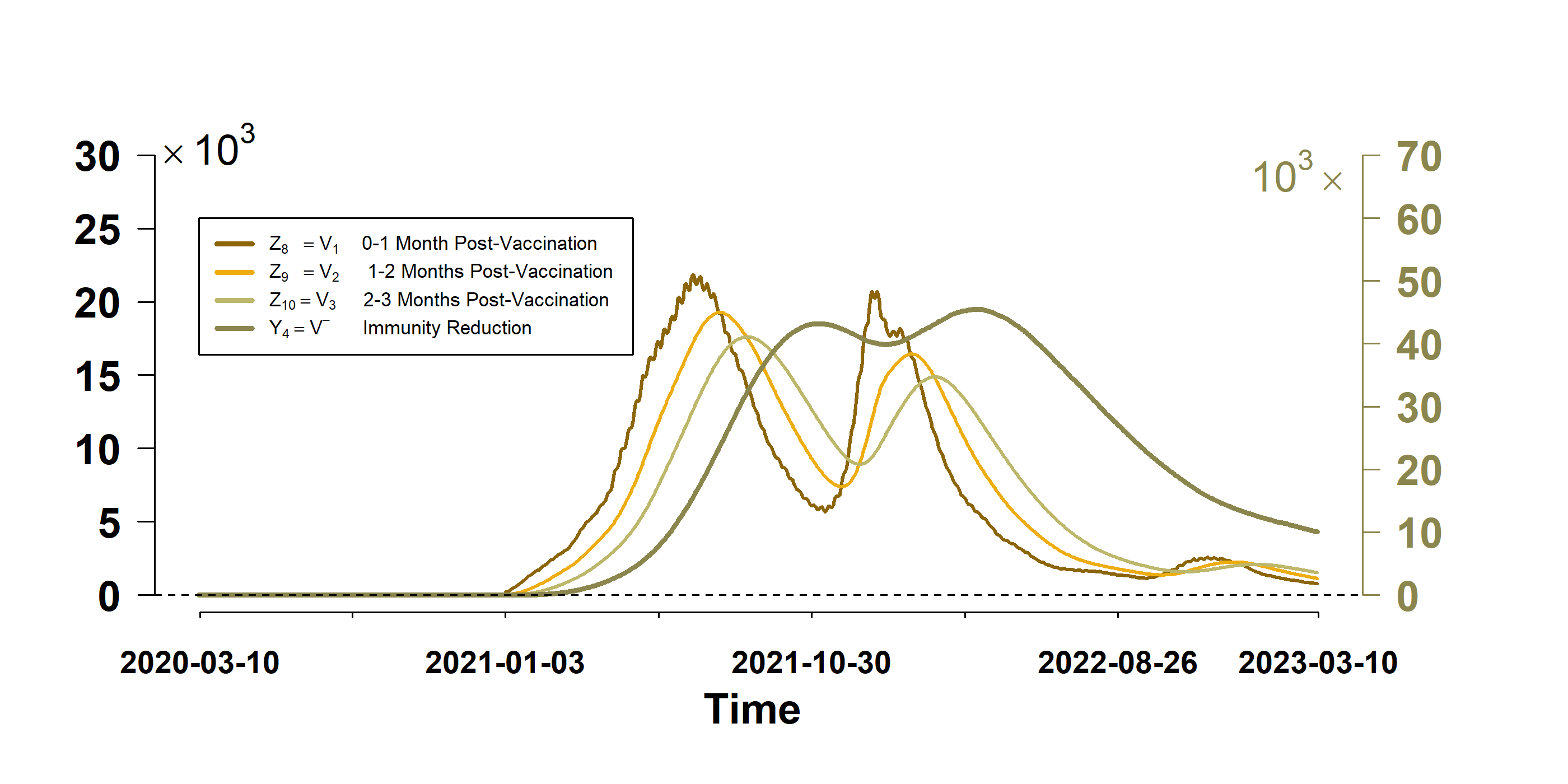}
	\hfill
	\includegraphics[width=0.48\textwidth]{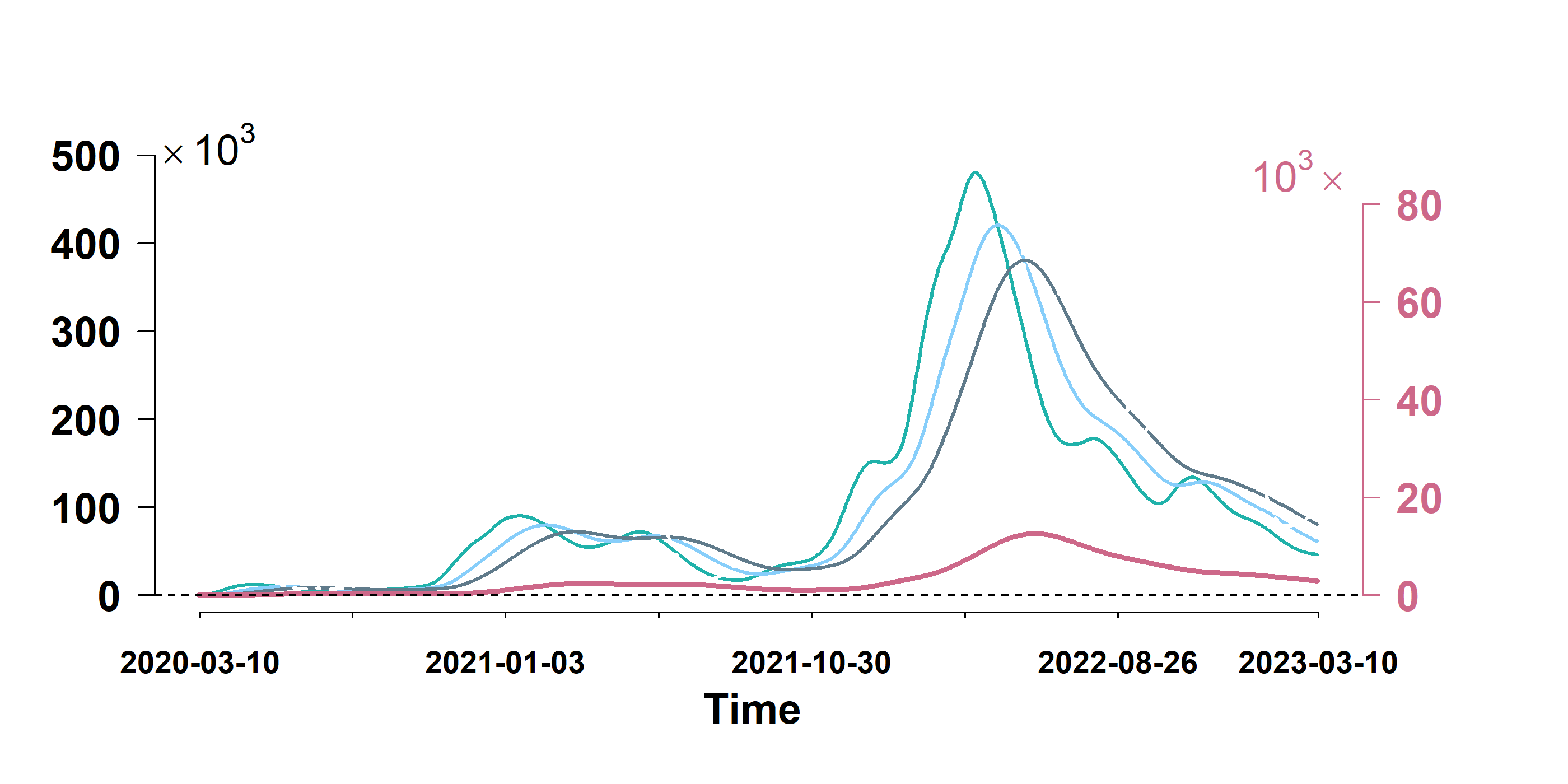}
	\caption{Dynamics of vaccinated/recovered individuals  }
	\label{fig:f35}
\end{figure}
Now, we shift our attention to individuals who recovered from the disease without being identified or detected as infected. By monitoring non-detected infections \ref{fig:f31} and their subsequent recovery \ref{fig:f36}, we can uncover the hidden dynamics of disease spread, which is often underestimated in real-world reports. This information is highly valuable for public health strategies, as it helps identify the extent of non-detected cases that could silently continue propagating the infection within the population.

\section{Conclusion}

This study introduces a stochastic epidemic model framed within a continuous-time Markov chain (CTMC), effectively bridging theoretical foundations and practical application. By incorporating Poisson processes to model transitions and cascade states to manage partially hidden compartments, the approach provides a robust structure for analyzing complex epidemiological dynamics. The application of the random time change theorem and the derivation of a diffusion approximation via the central limit theorem highlight the model's analytical depth.

Adaptation of the framework to \covid demonstrates its versatility and relevance, showcasing its ability to integrate real-world data for parameter estimation and calibration. Overall, this model offers a powerful tool for understanding epidemic processes, predicting disease trends, and informing public health strategies, particularly in the context of infectious disease outbreaks.

\color{black}

\bigskip 
\begin{footnotesize}
%

	\smallskip\noindent\textbf{Acknowledgments~}	 
	The authors thank      Olivier Menoukeu Pamen (University of Liverpool), Gerd Wachsmuth,   Armin Fügenschuh, Markus Friedemann, Jesse Beisegel (BTU Cottbus--Senftenberg)	for insightful discussions and valuable suggestions that improved this paper.

	\smallskip\noindent
	\textbf{Funding~} 	
	The  authors gratefully acknowledge the  support by the Deutsche Forschungsgemeinschaft (DFG), award number 458468407,  and by the  German Academic Exchange Service (DAAD), award number 57417894.

%
%
	
\end{footnotesize}

\begin{appendix}	

\section{List of Abbreviations}

	\begin{longtable}{p{0.3\textwidth}p{0.68\textwidth}}
	SIR&    Susceptible Infected Recovered (with long life immunity) \\
	SIRS &   Susceptible Infected Recovered (with loss of immunity)\\
	SIERV &   Susceptible Infected Exposed Recovered Vaccination \\
	\covid &  Coronavirus Disease 2019 \\
	ODE &   Ordinary Differential Equation \\
	SDE &  Stochastic Differential Equation \\
	CTMC &   Continuous Time Markov Chain \\
	HCTMC &  Homogeneous Continuous Time Markov Chain \\
	CTMC &  Non-Homogeneous Continuous Time Markov Chain \\
	LLN &  Law of Large Number \\
	CLT &  Central Limit Theorem \\
	EKF &  Extended Kalmnan Filter \\
	$SI^{\pm}RS$ &  Suceptible Infected detected/non-detected Recovered (with loss of immunity) \\
	MOR &  Model Order Reduction \\
	POD &  Proper Orthogonal Decomposition \\
	SVD &  Singular Value Decomposition \\
	ML &  Maximum likelihood \\
	OLS &  Ordinary Least Square \\
	FIR &  Finite Impulse Response \\
		\end{longtable}
	
	\section{List of Notations}
	\begin{longtable}{p{0.3\textwidth}p{0.68\textwidth}}
		$\alpha$ & Test rate   \\
		$\beta$ & Transmission rate \\
		$\gamma$ &  Recovered rate \\
		$\delta$ &  Hospitalization rate in the iCU \\
		$\eta^{-}/\eta^{+}$ &  Hospitalization rate from non-detected / detected infected individuals \\
		$\mathcal{R}_{0}$ & Basic reproduction number  \\
		$\kappa$ & Death rate from ICU  \\
		$\lambda=\lambda(t,X(t))$ &  intensity rate  \\
		$\mu$ &  Vaccination rate \\
		$\nu$ &   Intensity rate according to $\overline{X}$\\
		$\upsilon$ &   Infection rate from $E$ to $I^{-}$\\
		$\rho$ &   Losing immunity rate\\
		$\sigma$ &     Diffusion coefficient for diffusion approximation of CTMC $X$ \\
		$\overline{\sigma}$ &     Diffusion coefficient for diffusion approximation of CTMC $\overline{X}$ \\
		$\tau=\tau_{k}(t)$ &  Time change representation \\
		$\psi_k$ &   Deterministic transition from one generic compartment to another \\
		$\xi_k$ &  Transition vector \\
		$\widehat{\alpha}$ & Test rate approximation     \\
		$\widehat{\beta}$ & Transmission rate approximation\\
		$\widehat{\mu}$ & Vaccination rate approximation\\
		$\widetilde{\alpha}$ & Regression model for test rate      \\
		$\widetilde{\beta}$ & Regression model for transmission rate \\
		$\widetilde{\mu}$ & Regression model for vaccination rate \\
		$C$ &  Number of individuals in the ICU \\
		$D$ &  Number of individuals  who died from the disaese in ICU \\
		$d$ &  Number of compartment in a given model \\
		$E$ &  Number of exposed individuals \\
		$F$ &  Drift coefficient for diffusion approximation for CTMC $X$ \\
		$\overline{F}$ &  Drift coefficient for diffusion approximation for CTMC $\overline{X}$ \\
		$G$ &  Gaussian process \\
		$G_i$ &  Generique compartment $i$\\
		$H$ &   Number of hospitalized individuals \\
		$I$ &   Number of infected individuals\\
		$I^{-}/I^{+}$ & Number of infected non-detected / detected   \\
		$\mathbb{I}_{n}$ & $n\times n$ Identity matrix of order $n$ \\
		$K$ &  total number of different transitions  \\
		$\overline{K}$ &  Total number of group $P_k$ \\
		$L$ &   Total number of time steps\\
		$M_{k}$ &    Counting process (counts the number of transition $k$ in $[0,t]$)\\
		$N$ &  total population size \\
		$P$ &  $P\in \mathbb{N}$ of "daily" sub-compartment to one compartment \\
		$\textbf{P}(t)$ &   Transition matrix\\
		$Q^{G}$ &  Generator matrix \\
		$R$ &  Number of recovered individuals  \\
		$R^{-},R^{+}$ & Number of recovered non detected/detected individuals  \\
		$R_{1}^{+},R_{2}^{+},R_{3}^{+}$ & Generic compartment with respect to observable recovered \\
		$V_{1}^{},V_{2}^{},V_{3}^{}$ & Generic compartment with respect to vaccinated individuals \\
		$S$ &   Number of susceptible individuals at time $t$\\
		$W$ &   $K$-dimensional standard Brownian motions\\
		$X(t)$ &  State  for absolute sub-population size \\
		$\overline{X}(t)$ & Relative sub-population size \\
		$Y_{k}(.)$ &   Standard Poisson process with unit intensity\\
		$Y$ & Hidden state  \\
		$Z$ &  Observable state \\
		$\widetilde{Y}$ & Hidden state approximation via Taylor expansion \\
		$\overline{Y}$ & The value around which we perform the Taylor expansion \\
		$\widetilde{Z}$ & Observable state approximation via Taylor expansion \\
		$M_n$ & Conditional mean  in the EKF\\
		$Q_n$ & Conditional covariance  in the EKF\\
		$\mathbb{F}=\{\mathcal{F}_{t}\}_{t\geq 0}$ &  Natural filtration of $X$\\
		$\mathbb{N}$ &   Set of natural number\\
		$\mathcal{F}_{t}^{X}$ &   Sigma algebra generated by $X$\\
		$\mathbb{P}$ &  Probability measure on a measurable space $(\Omega,\mathcal{F}_{T})$\\
		$\xi_{k}$ &   $d$-dimensional vector formed by $0,1$ and $-1$\\
		$\Omega$ &   Sample space\\
		$\overline{\tau}$ &  Waiting time for a CTMC \\
		$f(.)$ & Density function   \\
		$\overline{F}(.)$ & Survival function   \\
		$\delta.$ &  Infinitesimal small time  \\
		$F(.)$ & Cumulative distribution function   \\
		$\xi$ &   Uniform random variable in the Gillespie algorithm\\
		$\mathcal{E}^{i}_{n}$ &    Independent sequence of i.i.d. $\mathcal{N}(0,\mathbb{I})$ Random vector\\
		$\overline{f}$ &    Drift coefficient of the signal (continuous-time )\\
		$\overline{h}$ &    Drift coefficient of the observations (continuous-time )\\
		$\overline{\sigma}$ &    First diffusion coefficient of the signal (continuous-time )\\
		$\overline{g}$ &    Second diffusion coefficient of the signal (continuous-time )\\
		$\overline{\ell}$ &    Second diffusion coefficient of the observations (continuous time dynamic)\\
		$f_0,f_1$ &    Drift coefficient of the signal  (discrete-time )\\
		$h_0,h_1$ &    Drift coefficient of the observations  (discrete-time )\\
		$\sigma$ &    First diffusion coefficient of the signal  (discrete-time )\\
		$g$ &    Second diffusion coefficient of the signal  (discrete-time )\\
		$\ell$ &    Second diffusion coefficient of the observations  (discrete-time )\\
		$T$ &   Time horizon\\
		$T_n$ &   Time of the $n^{th}$ jump   CTMC\\
		$t_n=n\Delta_L$ &   discrete time points with $L$ the number of time steps\\
		$\Delta_L$ &  step size, $\Delta_L=T/L=t_{n+1}-t_{n}$ \\
		$V$ & Number of vaccinated individuals  \\
		$V^{-}$ &  Number of vaccinated individuals with fading immunity \\
		$\mathcal{Z}_n$ &  Whole observation path \\
		$\mathbb{P}-a.s.$ &  Almost surely with respect to the probability measure $\mathbb{P}$ \\
		$\|.\|$ &  Euclidean norm\\
		$|.|$ &  Absolute value\\
		$J,L,M$ &  Order of the FIR\\ 
		$a_j,b_l,c_m$ & Impulse response coefficient\\ 
		$ L^{G} $ & Total number of cascade states
	\end{longtable}

\section{Model Dynamics}
\begin{table}[!h]
	
	\caption{Transition vectors and intensities for the cascade state counting process: $\xi_k$ and $M_k$, $k=K+1,\ldots,\overline{K}$}
	\label{Transition_Mk}
	\begin{center}
		{\footnotesize 
			\setlength{\tabcolsep}{5pt}
			\begin{tabular}{l|l|c|c}
				\hline
				k & Transition  &Transition vectors $\xi_{k}$& $M_k(X)$ \\ 
				\hline
				&&&\\[-1em]
				K+1& Transition from $R^{+}_{1}$ to $R^{+}_{2}$   & $(0,0,0,0,0,0,0,0,0,0,-1,1,0,0,0,0)^{\top}$ &  $\psi_{1}R^{+}_{1}=\psi_{1}Z_{5}$
				\\ 
				\hline
				&&&\\[-1em]
				K+2& Transition from $R^{+}_{2}$ to $R^{+}_{3}$   & $(0,0,0,0,0,0,0,0,0,0,0,-1,1,0,0,0)^{\top}$ &  $\psi_{2}R^{+}_{2}=\psi_{2}Z_{6}$
				\\ 
				\hline
				&&&\\[-1em]
				K+3 & Transition from $R^{+}_{3}$ to $R^{-}_{2}$   & $(0,0,1,0,0,0,0,0,0,0,0,0,-1,0,0,0)^{\top}$ &  $\psi_{3}R^{+}_{3}=\psi_{3}Z_{7}$
				\\
				\hline
				&&&\\[-1em]
				K+4 & Transition from $V_{1}$ to $V_{2}$   & $(0,0,0,0,0,0,0,0,0,0,0,0,0,-1,1,0)^{\top}$ &  $\psi_{1}V_{1}=\psi_{1}Z_{8}$
				\\ 
				\hline
				&&&\\[-1em]
				K+5 & Transition from $V_{2}$ to $V_{3}$   & $(0,0,0,0,0,0,0,0,0,0,0,0,0,0,-1,1)^{\top}$ &  $\psi_{2}V_{2}=\psi_{2}Z_{9}$
				\\
				\hline
				&&&\\[-1em]
				K+6 & Transition from $V_{3}$ to $V^{-}$   & $(0,0,0,1,0,0,0,0,0,0,0,0,0,0,0,-1)^{\top}$ &  $\psi_{3}V_{3}=\psi_{3}Z_{10}$
				\\ 
				\hline
			\end{tabular}
		}
	\end{center}
\end{table} 
\vspace*{-1cm}
 \begin{align}
  \xi_{k}M_{k}&= (0,0,\psi_{3}Z_{7},\psi_{3}Z_{10},0,0,0,0,0,\\
		&-\psi_{1} Z_{5},\psi_{1} Z_{5}-\psi_{2}Z_{6},\psi_{2}Z_{6}-\psi_{3}Z_{7},-\psi_{1}Z_{8},\psi_{1}Z_{8}-\psi_{2}Z_{9},
		\psi_{2}Z_{9}-\psi_{3}Z_{10}
	)^{\top}
\end{align}

{\footnotesize 
\begin{landscape}
	\subsection{Diffusion Approximation of \covid Model without Cascade States }\label{Appendix1}
\begin{align}
	\begin{split}
	dX&=F(t,Y,Z)dt + \sigma(t,Y,Z)dW(t)\\[0.6ex]
	  &X=\left(\begin{array}{c}
		Y_{1}\\
		Y_{2}\\
		Y_{3}\\[0.4ex]
		Y_{4}\\[0.4ex]
		Y_{5}\\[0.4ex]
		\color{black}{ Z_{1}}\\
		\color{black}{Z_{2}}\\
		\color{black}{Z_{3}}\\
		\color{black}{Z_{4}}
	\end{array} \right),
	~~~~~F(t,Y,Z)\left(\begin{array}{c}		
		\beta\frac{Y_1Y_5}{N} -(\alpha + \gamma^{+} + \eta^{-} - \mu )Y_1\\
		\gamma^{-}Y_1 -\mu Y_2 -\rho^{-}_{1}Y_2\\
		\gamma^{+}Z_1 + \gamma^{H}Z_2 + \gamma^{C}Z_3 - \rho^{-}_{2}Y_3\\
		(Y_1 + Y_2 + Y_5)\mu - \rho^{V}Y_4\\
		-\beta\frac{Y_1Y_5}{N} -\mu Y_5 +\rho^{-}_{1}Y_2 +\rho^{-}_{2}Y_3 +\rho^{V}Y_4\\
		\alpha Y_1 -\eta^{+}Z_1 -\gamma^{+}Z_1\\
		\eta^{+}Z_1 + \eta^{-}Y_1 -\delta Z_2 -\gamma^{H}Z_2\\
		\delta Z_2 - \gamma^{C}Z_3 -\kappa Z_3\\
		\kappa Z_3
	\end{array} \right) \\[1em]
		&\scriptsize \sigma(t,Y,Z)=
		\left(\begin{array}{cccccccccccccccc@{\hspace*{-0.0em}}}			
			\sqrt{\beta\frac{Y_1Y_5}{N}} & -\sqrt{\gamma^{-}Y_1} &  \sqrt{\mu Y_1} & 0 & 0 & 0 & 0 & 0 & -\sqrt{\alpha Y_1} & -\sqrt{\eta^{-}Y_1} & 0 & 0 & 0 & 0 & 0 & 0\\
			0 & \sqrt{\gamma^{-}Y_1} & 0 & -\sqrt{\mu Y_2} & -\sqrt{\rho^{-}_{1}Y_2} & 0 & 0 & 0 & 0 & 0 & 0 & 0 & 0 & 0 & 0 & 0\\
			0 & 0 & 0 & 0 & 0 & -\sqrt{\rho^{-}_{2}Y_3} & 0 & 0 & 0 & 0 & \sqrt{\gamma^{+}Z_1} & \sqrt{\gamma^{H}Z_2} & \sqrt{\gamma^{C}Z_3} & 0 & 0 & 0\\
			0 & 0 & \sqrt{\mu Y_1} & \sqrt{\mu Y_2} & 0 & 0 & \sqrt{\mu Y_5} & -\sqrt{\rho^{V} Y_4} & 0 & 0 & 0 & 0 & 0 & 0 & 0 & 0\\
			-\sqrt{\beta\frac{Y_1Y_5}{N}} & 0 & 0 & 0 & \sqrt{\rho^{-}_{1}Y_2} & \sqrt{\rho^{-}_{2}Y_3} & -\sqrt{\mu Y_5} & \sqrt{\rho^{V}Y_4} & 0 & 0 & 0 & 0 & 0 & 0 & 0 & 0 \\
			0 & 0 & 0 & 0 & 0 & 0 & 0 & 0 & \sqrt{\alpha Y_1} & 0 & -\sqrt{\gamma^{+}Z_1} & 0 & 0 & -\sqrt{\eta^{+}Z_1} & 0 & 0 \\
			0 & 0 & 0 & 0 & 0 & 0 & 0 & 0 & 0 & \sqrt{\eta^{-}Y_1} & 0 & -\sqrt{\gamma^{H}Z_2} & 0 & \sqrt{\gamma^{+}Z_1} & 0 & -\sqrt{\delta Z_2}\\
			0 & 0 & 0 & 0 & 0 & 0 & 0 & 0 & 0 & 0 & 0 & 0 & -\sqrt{\gamma^{C}Z_3} & 0 & -\sqrt{\kappa Z_3} & \sqrt{\delta Z_2}\\
			0 & 0 & 0 & 0 & 0 & 0 & 0 & 0 & 0 & 0 & 0 & 0 & 0 & 0 & \sqrt{\kappa Z_3} & 0
		\end{array} \right)
\end{split}
\end{align}
\end{landscape}
		}
	{\footnotesize
\begin{landscape}
	\subsection{Diffusion Approximation of \covid Model with Cascade States - Continuous-Time }\label{Appendix2}
	\vspace*{-0.65cm}
	\begin{align}
		\begin{split}
dX &=F_{\text{base}}(t,Y,Z)dt + \sigma_{\text{base}}(t,Y,Z)dW(t) + \sum\limits_{k=K+1}^{ \overline{K}}\xi_{k}M^{}_{k}(t), ~~~~~t\in[0,T] \\
 &X=\left(\begin{array}{c}
	Y_{1}\\
	Y_{2}\\
	Y_{3}\\
	Y_{4}\\
	Y_{5}\\[0.4ex]
	\color{black}{ Z_{1}}\\
	\color{black}{ Z_{2}}\\
	\color{black}{ Z_{3}}\\
	\color{black}{ Z_{4}}\\
	\color{black}{ Z_{5}}\\
	\color{black}{ Z_{6}}\\
	\color{black}{ Z_{7}}\\
	\color{black}{ Z_{8}}\\
	\color{black}{ Z_{9}}\\
	\color{black}{ Z_{10}}
\end{array} \right),
~~~~~ F_{\text{base}}(t,Y,Z)=
\left(\begin{array}{c}
	\beta\frac{Y_1Y_5}{N} -(\alpha + \gamma^{+} + \eta^{-} - \mu )Y_1\\
	\gamma^{-}Y_1 -\mu Y_2 -\rho^{-}_{1}Y_2\\
	-\rho^{-}_{2}Y_3\\
	- \rho^{V}Y_4\\
	-\beta\frac{Y_1Y_5}{N} -\mu Y_5 +\rho^{-}_{1}Y_2 +\rho^{-}_{2}Y_3 +\rho^{V}Y_4\\
	\alpha Y_1 -\eta^{+}Z_1 -\gamma^{+}Z_1\\
	\eta^{+}Z_1 + \eta^{-}Y_1 -\delta Z_2 -\gamma^{H}Z_2\\
	\delta Z_2 - \gamma^{C}Z_3 -\kappa Z_3\\
	\kappa Z_3\\
	\gamma^{+}Z_1 + \gamma^{H}Z_2 + \gamma^{C}Z_3\\
	0\\
	0\\
	(Y_1 + Y_2 + Y_5)\mu\\
	0\\
	0
\end{array} \right)\\[1em]
&{\scriptsize
	\sigma_{\text{base}}(t,Y,Z)= \left(\begin{array}{cccccccccccccccc}
		\sqrt{\beta\frac{Y_1Y_5}{N}} & -\sqrt{\gamma^{-}Y_1} &  \sqrt{\mu Y_1} & 0 & 0 & 0 & 0 & 0 & -\sqrt{\alpha Y_1} & -\sqrt{\eta^{-}Y_1} & 0 & 0 & 0 & 0 & 0 & 0\\
		0 & \sqrt{\gamma^{-}Y_1} & 0 & -\sqrt{\mu Y_2} & -\sqrt{\rho^{-}_{1}Y_2} & 0 & 0 & 0 & 0 & 0 & 0 & 0 & 0 & 0 & 0 & 0\\
		0 & 0 & 0 & 0 & 0 & -\sqrt{\rho^{-}_{2}Y_3} & 0 & 0 & 0 & 0 & 0 & 0 & 0 & 0 & 0 & 0\\
		0 & 0 & 0 & 0 & 0 & 0 & 0 & -\sqrt{\rho^{V} Y_4} & 0 & 0 & 0 & 0 & 0 & 0 & 0 & 0\\
		-\sqrt{\beta\frac{Y_1Y_5}{N}} & 0 & 0 & 0 & \sqrt{\rho^{-}_{1}Y_2} & \sqrt{\rho^{-}_{2}Y_3} & -\sqrt{\mu Y_5} & \sqrt{\rho^{V}Y_4} & 0 & 0 & 0 & 0 & 0 & 0 & 0 & 0 \\
		0 & 0 & 0 & 0 & 0 & 0 & 0 & 0 & \sqrt{\alpha Y_1} & 0 & -\sqrt{\gamma^{+}Z_1} & 0 & 0 & -\sqrt{\eta^{+}Z_1} & 0 & 0 \\
		0 & 0 & 0 & 0 & 0 & 0 & 0 & 0 & 0 & \sqrt{\eta^{-}Y_1} & 0 & -\sqrt{\gamma^{H}Z_2} & 0 & \sqrt{\gamma^{+}Z_1} & 0 & -\sqrt{\delta Z_2}\\
		0 & 0 & 0 & 0 & 0 & 0 & 0 & 0 & 0 & 0 & 0 & 0 & -\sqrt{\gamma^{C}Z_3} & 0 & -\sqrt{\kappa Z_3} & \sqrt{\delta Z_2}\\
		0 & 0 & 0 & 0 & 0 & 0 & 0 & 0 & 0 & 0 & 0 & 0 & 0 & 0 & \sqrt{\kappa Z_3} & 0\\
		0 & 0 & 0 & 0 & 0 & 0 & 0 & 0 & 0 & 0 & \sqrt{\gamma^{+}Z_1} & \sqrt{\gamma^{H}Z_2} & \sqrt{\gamma^{C}Z_3} & 0 & 0 & 0\\
		0 & 0 & 0 & 0 & 0 & 0 & 0 & 0 & 0 & 0 & 0 & 0 & 0 & 0 & 0 & 0\\
		0 & 0 & 0 & 0 & 0 & 0 & 0 & 0 & 0 & 0 & 0 & 0 & 0 & 0 & 0 & 0\\
		0 & 0 & \sqrt{\mu Y_1} & \sqrt{\mu Y_2} & 0 & 0 & \sqrt{\mu Y_5} & 0 & 0 & 0 & 0 & 0 & 0 & 0 & 0 & 0\\
		0 & 0 & 0 & 0 & 0 & 0 & 0 & 0 & 0 & 0 & 0 & 0 & 0 & 0 & 0 & 0\\
		0 & 0 & 0 & 0 & 0 & 0 & 0 & 0 & 0 & 0 & 0 & 0 & 0 & 0 & 0 & 0
	\end{array} \right)
			}\end{split}
	\end{align}
\end{landscape}
}
{\scriptsize 
	\begin{landscape}
		\subsection{Diffusion Approximation of \covid Model with Cascade States - Discrete-Time }\label{Appendix3}
		\begin{align}
			\hspace*{-1.45cm}		\begin{split}
			X_{n+1}&=	X^{}_n +  F_{\text{base}}(n,X_{n})\Delta t + \sigma_{\text{base}}(n,X_{n})\sqrt{\Delta t}\mathcal{E}_{n+1} + + \sum\limits_{k=K+1}^{ \overline{K}}\xi_{k}M^{n}_{k}, ~~~~~n\in[0,L]\\[0.6ex]
				&X_n= \left(\begin{array}{c}
					Y_{n,1}\\
					Y_{n,2}\\
					Y_{n,3}\\
					Y_{n,4}\\
					Y_{n,5}\\[0.4ex]
					\color{black}{Z_{n,1}}\\
					\color{black}{Z_{n,2}}\\
					\color{black}{Z_{n,3}}\\
					\color{black}{Z_{n,4}}\\
					\color{black}{Z_{n,5}}\\
					\color{black}{Z_{n,6}}\\
					\color{black}{Z_{n,7}}\\
					\color{black}{Z_{n,8}}\\
					\color{black}{Z_{n,9}}\\
					\color{black}{Z_{n,10}}
				\end{array} \right),
				~~~~~ F_{\text{base}}(n,X_{n})=\left(\begin{array}{c}		
					\beta\frac{Y_{n,1}Y_{n,5}}{N} -(\alpha + \gamma^{-} + \eta^{-} - \mu )Y_{n,1}\\
					\gamma^{-}Y_{n,1} -\mu Y_{n,2} -\rho^{-}_{1}Y_{n,2}\\
					-\rho^{-}_{2}Y_{n,3}\\
					- \rho^{V}Y_{n,4}\\
					-\beta\frac{Y_{n,1}Y_{n,5}}{N} -\mu Y_{n,5} +\rho^{-}_{1}Y_{n,2} +\rho^{-}_{2}Y_{n,3} +\rho^{V}Y_{n,4}\\
					\alpha Y_{n,1} -\eta^{+}Z_{n,1} -\gamma^{+}Z_{n,1}\\
					\eta^{+}Z_{n,1} + \eta^{-}Y_{n,1} -\delta Z_{n,2} -\gamma^{H}Z_{n,2}\\
					\delta Z_{n,2} - \gamma^{C}Z_{n,3} -\kappa Z_{n,3}\\
					\kappa Z_{n,3}\\
					\gamma^{+}Z_{n,1} + \gamma^{H}Z_{n,2} + \gamma^{C}Z_{n,3}\\
					0\\
					0\\
					(Y_{n,1} + Y_{n,2} + Y_{n,5})\mu\\
					0\\
					0
				\end{array} \right)\\[1em]
				\vspace*{0.75cm} \hspace*{-1.45cm}    
				&\sigma_{\text{base}}(n,X_{n})=\left(\begin{array}{cccccccccccccccc@{\hspace*{-0.0em}}}            
					\sqrt{\beta\frac{Y_{n,1}Y_{n,5}}{N}} & -\sqrt{\gamma^{-}Y_{n,1}} &  \sqrt{\mu Y_1} & 0 & 0 & 0 & 0 & 0 & -\sqrt{\alpha Y_{n,1}} & -\sqrt{\eta^{-}Y_{n,1}} & 0 & 0 & 0 & 0 & 0 & 0\\
					0 & \sqrt{\gamma^{-}Y_{n,1}} & 0 & -\sqrt{\mu Y_{n,2}} & -\sqrt{\rho^{-}_{1}Y_{n,2}} & 0 & 0 & 0 & 0 & 0 & 0 & 0 & 0 & 0 & 0 & 0\\
					0 & 0 & 0 & 0 & 0 & -\sqrt{\rho^{-}_{2}Y_{n,3}} & 0 & 0 & 0 & 0 & 0 & 0 & 0 & 0 & 0 & 0\\
					0 & 0 & 0 & 0 & 0 & 0 & 0 & -\sqrt{\rho^{V} Y_{n,4}} & 0 & 0 & 0 & 0 & 0 & 0 & 0 & 0\\
					-\sqrt{\beta\frac{Y_{n,1}Y_{n,5}}{N}} & 0 & 0 & 0 & \sqrt{\rho^{-}_{1}Y_{n,2}} & \sqrt{\rho^{-}_{2}Y_{n,3}} & -\sqrt{\mu Y_{n,5}} & \sqrt{\rho^{V}Y_{n,4}} & 0 & 0 & 0 & 0 & 0 & 0 & 0 & 0 \\
					0 & 0 & 0 & 0 & 0 & 0 & 0 & 0 & \sqrt{\alpha Y_{n,1}} & 0 & -\sqrt{\gamma^{+}Z_{n,1}} & 0 & 0 & -\sqrt{\eta^{+}Z_{n,1}} & 0 & 0 \\
					0 & 0 & 0 & 0 & 0 & 0 & 0 & 0 & 0 & \sqrt{\eta^{-}Y_{n,1}} & 0 & -\sqrt{\gamma^{H}Z_{n,2}} & 0 & \sqrt{\gamma^{+}Z_{n,1}} & 0 & -\sqrt{\delta Z_{n,2}}\\
					0 & 0 & 0 & 0 & 0 & 0 & 0 & 0 & 0 & 0 & 0 & 0 & -\sqrt{\gamma^{C}Z_{n,3}} & 0 & -\sqrt{\kappa Z_{n,3}} & \sqrt{\delta Z_{n,2}}\\
					0 & 0 & 0 & 0 & 0 & 0 & 0 & 0 & 0 & 0 & 0 & 0 & 0 & 0 & \sqrt{\kappa Z_{n,3}} & 0\\
					0 & 0 & 0 & 0 & 0 & 0 & 0 & 0 & 0 & 0 & \sqrt{\gamma^{+}Z_{n,1}} & \sqrt{\gamma^{H}Z_{n,2}} & \sqrt{\gamma^{C}Z_{n,3}} & 0 & 0 & 0\\
					0 & 0 & 0 & 0 & 0 & 0 & 0 & 0 & 0 & 0 & 0 & 0 & 0 & 0 & 0 & 0\\
					0 & 0 & 0 & 0 & 0 & 0 & 0 & 0 & 0 & 0 & 0 & 0 & 0 & 0 & 0 & 0\\
					0 & 0 & \sqrt{\mu Y_{n,1}} & \sqrt{\mu Y_{n,2}} & 0 & 0 & \sqrt{\mu Y_{n,5}} & 0 & 0 & 0 & 0 & 0 & 0 & 0 & 0 & 0\\
					0 & 0 & 0 & 0 & 0 & 0 & 0 & 0 & 0 & 0 & 0 & 0 & 0 & 0 & 0 & 0\\
					0 & 0 & 0 & 0 & 0 & 0 & 0 & -\sqrt{\rho^{V} Y_{n,4}} & 0 & 0 & 0 & 0 & 0 & 0 & 0 & 0
				\end{array} \right)
			\end{split}
		\end{align}
	\end{landscape}
}
	
\end{appendix}




\newpage
\bibliographystyle{acm}


\end{document}